\newtheorem{example-numbered}{Example}[section]
\newcommand{\SO}{\mathrm{SO}}
\newcommand{\Unitary}{\mathrm{U}\!}
\DeclareMathOperator*{\argmin}{arg\,min}
\DeclareMathOperator*{\argmax}{arg\,max}
\newcommand{\RID}{\mathrm{RID}}
\begin{document}

\title{{Representation Theoretic Patterns in Multi-Frequency Class Averaging for Three-Dimensional Cryo-Electron Microscopy} 
\thanks{ZZ and YF acknowledge the support from Strategic Research Initiatives in the University of Illinois at Urbana-Champaign and NSF grant DMS-1854791. TG acknowledges the support from NSF DMS-1854831, an AMS-Simons Travel Grant, and partial support from DARPA D15AP00109 and NSF grants IIS-1546413.}
}

\titlerunning{Representation Theoretic Patterns in Multi-Frequency
  Class Averaging for 3D Cryo-EM}        

\author{Yifeng Fan \and Tingran Gao \and Zhizhen Zhao}

\authorrunning{Yifeng Fan, Tingran Gao, Zhizhen Zhao} 

\institute{
           Yifeng Fan \at
           Department of Electrical and Computer
           Engineering\\
           University of Illinois at Urbana--Champaign\\
           \email{yifengf2@illinois.edu}
           \and
           Tingran Gao \at
           Committee on Computational and Applied
           Mathematics\\
           Department of Statistics\\
           University of Chicago\\
           \email{tingrangao@galton.uchicago.edu}
           \and
           Zhizhen Zhao \at
           Department of Electrical and Computer
           Engineering\\
           Coordinated Science Laboratory\\
           University of Illinois at Urbana--Champaign\\
           \email{zhizhenz@illinois.edu}
}

\date{Received: date / Accepted: date}

\maketitle

\begin{abstract}
We develop in this paper a novel intrinsic classification algorithm --- \emph{multi-frequency class averaging} (MFCA) --- for classifying noisy projection images obtained from three-dimensional
cryo-electron microscopy (cryo-EM) by the similarity among their viewing directions. This new algorithm leverages multiple irreducible representations of the unitary group to introduce additional redundancy into the representation of the optimal in-plane rotational alignment, extending and outperforming the existing class averaging algorithm 
that uses only a single representation. The formal algebraic model and
representation theoretic patterns of the proposed MFCA algorithm extend the framework of Hadani and Singer to arbitrary irreducible representations of the unitary group. We conceptually establish the consistency and
stability of MFCA by inspecting the spectral properties of a generalized local parallel transport operator through the lens of Wigner $D$-matrices. We demonstrate the efficacy of the proposed algorithm with numerical experiments.

\keywords{Representation theory \and Spectral theory \and Differential geometry
  \and Wigner matrices \and Cryo-electron microscopy \and Mathematical biology}
\subclass{20G05 \and 33C45 \and 33C55 \and 55R25}
\end{abstract}

\section{Introduction}
\label{sec:introduction}

The past decades have witnessed an emerging and continued impact of cryo-electron microscopy (cryo-EM), the Nobel Prize winning imaging technology for determining three-dimensional structures of macromolecules, on a wide range of natural scientific fields \cite{DCPK+1998,Mackinnon2004,frank2006three,HM2013,CF2015,OJ2017}. Compared with its predecessor, X-ray crystallography, of which the success builds upon the potentially difficult procedure of crystallization, cryo-EM is able to image the macromolecules in their native states and produces large numbers of projection images for samples of molecules rapidly frozen in a thin layer of vitreous ice. The projection images can be thought of as tomographic projections of many copies of an identical molecule at unknown and random orientations. A major computational challenge in reconstructing the three-dimensional molecular structure from these projection images is the extremely low signal-to-noise ratio (SNR) caused by the limited allowable electron dose (so as to avoid damaging the molecule before the imaging completes). It is thus customary to improve the SNR by performing \emph{class averaging} --- the procedure of aligning and then averaging out projection images taken along nearby viewing directions --- from rotationally invariant pairwise comparisons of the projection images \cite{PZF1996,frank2006three}, before the downstream reconstruction workflow such as angular reconstitution \cite{VG1986,VanHell1987,SLFV+2017}. In addition to its scientific value, the rich geometric structure in the cryo-EM imaging model has also inspired many mathematical and algorithmic investigations \cite{Singer2011a,Gurevich2011,SS2012,SingerWu2012VDM,BSS2013,WangSinger2013,zhao2014rotationally,BSAB2014,Gao2015,bandeira2015non,HDM2016,YeLim2017,GBM2019}. 
\subsection{Background: The Mathematical Model of
  Cryo-Electron Microscopy and Class Averaging}
\label{sec:background}
Following \cite{singer2011viewing,hadani2011representation2}, we view the collection of projection images $\left\{ I_i\in\mathbb{R}^{L\times L}\mid i=1,\dots,N \right\}$ as tomographic projection images for the same three-dimensional object along projection directions uniformly sampled from the two-sphere $S^2$, as it is more convenient to consider the imaging model in the molecule's own lab frame, where the molecule is fixed and observed by an electron microscope at various orientations. For simplicity, we assume the projection images are all centered, i.e. the center of mass of the clean projection images are at the center of the images. The goal is to identify and classify projection images produced from similar projection directions, hereafter referred to as \emph{viewing directions}.

A point $x\in \SO ( 3)$ is identified with an orthonormal basis $(\mathbf{e_1}, \mathbf{e_2}, \mathbf{e_3})$ of $\mathbb{R}^3$, with orientation compatible with the canonical orthonormal coordinate frame of $\mathbb{R}^3$. We identify $\mathbf{e_3}\in S^2$ with the viewing direction and denote it for $\pi \left( x \right)$ for the ease of notations. The 2D image obtained by the microscope observed at a spatial orientation $x$ is a real valued function $I:  \mathbb{R}^2 \rightarrow \mathbb{R}$, given by the X-ray transform along the viewing direction:
\begin{equation}
\label{eq:Xray}
I(s, t) = \int_{\mathbb{R}} \phi(s \mathbf{e_1} + t \mathbf{e_2} + r \mathbf{e_3})\,\mathrm{d}r\quad\textrm{for all $\left( s,t \right)\in\mathbb{R}^2$}
\end{equation}
where $\phi:\mathbb{R}^3\rightarrow\mathbb{R}$ is a real-valued function modeling the electromagnetic potential induced from the charges of the molecule. We assume the images $I \left( s,t \right)$ are all supported on a bounded set of $\mathbb{R}^2$ which fits into the size of the projection images.

To measure the similarity between any two projection images $I_i$ and $I_j$, obtained by the tomographic projection along viewing directions $\pi\left(x_i\right)\in S^2$ and $\pi\left(x_j\right)\in S^2$ respectively, we compute a rotationally invariant distance between $I_i$ and $I_j$ defined as
\begin{equation}
    d_{\RID} (I_i, I_j) = \min_{\theta \in [0, 2\pi)}\left\|I_i- R_\theta ( I_j ) \right\|_{\mathrm{F}},
    \label{eq:RID}
\end{equation}
where $R_\theta( I_j ) $ stands for the operation of rotating image $I_j$ by an angle $\theta\in [0,2\pi)$ in the counterclockwise orientation, and $\left\| \cdot \right\|_{\mathrm{F}}$ is the matrix Frobenius norm. The optimal alignment angle between $I_i$ and $I_j$ will be denoted as
\begin{equation}
    \theta_{ij} = \argmin_{\theta \in [0, 2\pi)}\left\|I_i- R_\theta ( I_j ) \right\|_{\mathrm{F}}. 
    \label{eq:opt_align}
\end{equation}

For images $I_x$ and $I_y$ obtained from viewing directions $\pi \left( x \right)$ and $\pi \left( y \right)$ for $x,y\in \SO ( 3)$ and without noise contamination, \cite{hadani2011representation2} models the optimal alignment angle as the \emph{transport data} encoding the angle of in-plane rotation needed to align frames $x,y$ after one of them is parallel-transported to the fibre of the other using the canonical Levi-Civita connection on the unit sphere equipped with an induced Riemannian structure from the ambient space $\mathbb{R}^3$. A rough idea for filtering out far-apart viewing directions is through thresholding the rotationally invariant distances between pairs of projection images against a preset threshold parameter $\epsilon>0$ that should be tuned to reflect the confidence in the accuracy of the imaging process. The pairwise comparison information after thresholding can be conveniently encoded into an \emph{observation graph} $G=\left( V,E \right)$, where each vertex of $G$ stands for one of the projection images, and an edge $\left( i,j \right)$ belongs to the edge set $E$ if and only if the rotationally invariant distance $d_{\RID}\left( I_i,I_j \right)$ is smaller than the threshold. In an ideal noiseless world, the geometry of the graph $G$ is a neighborhood graph on the unit sphere 
$S^2$, namely, two images are connected if and only if their viewing directions $\pi(x_i)$ and $\pi(x_j)$ are close on the unit sphere, $\langle \pi(x_i), \pi(x_j) \rangle \geq 1 - h$, for $h\ll 1$. From the noisy cryo-EM images, the rotationally invariant distances $d_\RID$ are affected by noise and $d_\RID$-based similarity measure will connect images of very different views, introducing short-cut edges on the unit sphere.  The main problem here is thus to distinguish the ``good'' edges from the ``bad'' ones in the graph $G$, or, in other words, to distinguish the true neighbors from the outliers. The existence of outliers makes the classification problem non-trivial. Without excluding the outliers, averaging rotationally aligned images with small invariant distance~\eqref{eq:RID} yields a poor estimate of the true signal, rendering infeasible the 3D \emph{ab initio} reconstruction from denoised images. We refer interested readers to \cite{EKW2016,LMQW2018}  for more detailed statistical analysis of the rotationally invariant distance \eqref{eq:RID}. The focus of this paper is to rectify the noise-contaminated empirical transport data using the spectral information of an integral operator constructed from the initial local transport data. 

\subsubsection{The Class Averaging Algorithm}
\label{sec:class-aver}

One of the most natural ideas for performing class averaging is through the eigenvectors of the class averaging matrix constructed from the empirical transport data $\{ e^{\iota \theta_{ij}} \}_{(i, j) \in E}$ \cite{singer2011viewing,hadani2011representation2}. We briefly recapture the main steps in the class averaging algorithm below. Detailed discussions and the analysis of representation theoretical patterns can be found in \cite{singer2011viewing,hadani2011representation2}. In this section we use notation $[N]=\{1,2,\dots,N\}$ for $N\in\mathbb{N}$.

The algorithm begins with computing rotationally invariant distances $d_{ij}$ between all pairs of projection images $I_i$ and $I_j$, along with the corresponding optimal alignment angles $\theta_{ij}$. After that, construct an $N$-by-$N$ Hermitian matrix $H$ by
\begin{align}
\label{eq:H}
  H_{ij}=
  \begin{cases}
    e^{\iota\theta_{ij}}&\textrm{if $\left( i,j \right)\in E$}, \\
    0 & \textrm{otherwise},
  \end{cases}
\end{align}
where the edge set $E\subset [N]\times [N]$ is obtained by thresholding the pairwise distances $\left\{ d_{ij}:1\leq i,j\leq N \right\}$, i.e., $\left( i,j \right)\in E$ if and only if $d_{ij}$ is below a preset threshold $\epsilon>0$, i.e.,
\begin{align}
  \label{eq:edge-threshold}
  E:= \left\{ \left( i,j \right)\in [N]\times [N]: d_{\RID}\left( I_i, I_j \right)<\epsilon \right\}.
\end{align}
Set $D$ as the diagonal matrix with diagonal entries
\begin{align}
\label{eq:degree-matrix}
  D_{ii}=\sum_{j=1}^N \left| H_{ij} \right|,\quad 1\neq i\leq N
\end{align}
and compute the top three eigenvectors $\psi_1,\psi_2,\psi_3\in\mathbb{C}^N$ of the normalized Hermitian matrix
\begin{align*}
  \widetilde{H}:=D^{-1/2}HD^{-1/2}.
\end{align*}
Each projection image is then associated with a point in $\mathbb{C}^3$ by means of the embedding map
\begin{align*}
    \Psi:\left\{ I_i \right\}_{i=1}^N &\longrightarrow \mathbb{C}^3\\
    I_i&\longmapsto \left( \psi_1 \left( i \right), \psi_2 \left( i \right), \psi_3 \left( i \right) \right)
\end{align*}
where $\psi_1 \left( i \right), \psi_2 \left( i \right), \psi_3 \left( i \right)$ denotes for the $i$th entries of $\psi_1,\psi_2,\psi_3$, respectively. The measure of affinity between $I_i$ and $I_j$ is then computed using the embedding map $\Psi$:
\begin{align}
\label{eq:single-freq-affinity}
  A_{ij}:=\frac{\left| \left\langle \Psi \left( I_i \right), \Psi \left( I_j \right) \right\rangle \right|}{\left\| \Psi \left( I_i \right) \right\|\left\| \Psi \left( I_j \right) \right\|},\quad 1\leq i\neq j\leq N.
\end{align}
Finally, the neighbors of a projection image $I_i$ are determined by thresholding the affinity measures $A_{ij}$:
\begin{align*}
  \textrm{Neighbors of $I_i$ }:=\left\{ I_j\mid A_{ij}>1-\gamma \right\}
\end{align*}
where  $0<\gamma<1$ is another preset threshold parameter that controls the size of the neighborhoods.

\subsection{Main Contributions}
\label{sec:contribution}
The main contributions of this paper are (1) the introduction of the multi-frequency class averaging (MFCA) algorithm to improve the viewing direction classification of cryo-EM single particle images, and (2) a complete characterization of the spectral information of a \emph{generalized local parallel transport operator} underlying the geometric relation in MFCA.  

Specifically, motivated by recent works \cite{bandeira2015non,gao2019multi,fan2019cryo,fan19a}, which incorporate multiple representations of the pairwise comparison information into the synchronization problem,   
we propose in this paper a \emph{multi-frequency class averaging} algorithm using the extended empirical transport data $\{ e^{\iota k \theta_{ij}} \}_{(i, j) \in E}$ for $k = 1, 2, \dots, k_\mathrm{max}$. It creates more than one copy of the class averaging matrix--- one for each ``frequency channel'' corresponding to one irreducible representation of $\SO ( 2 )$ group element.
Those matrices can be viewed as the discretization of the generalized local parallel transport operators $T^{(k)}_h$. A formal definition of $T_h^{\left( k \right)}$ can be found in~\eqref{eq:defn-gen-parallel-transp-op-loc}. The new algorithm uses the top $2k+1$ eigenvectors of the class averaging matrix at frequency $k$ to embed the images into $2k+1$-dimensional complex space. The new frequency $k$-affinity measure is defined as the absolute normalized cross correlation of the embedded vectors. We also propose to aggregate the affinity measures across the frequency channels to enforce the consistency of the nearest neighbor identification. Since the performance of the algorithm depends on the properties and stability of the top eigenvectors, we perform the spectral analysis of the corresponding integral operator $T_h^{(k)}$. We show in Theorem~\ref{thm:eval} and Theorem~\ref{thm:specgap} that the top eigenspace of $T_h^{\left( k \right)}$, denoted as $\mathbb{W}^{\left( k \right)}$, is $\left( 2k+1 \right)$-dimensional. In addition, we show that the top eigenvalue of $T_{h}^{(k)}$ decreases as $k$ increases and the top spectral gap increases as $k$ increases up to a threshold determined by the local neighborhood size. The increasing spectral gap implies the advantage of using higher frequency information for class averaging, as the \emph{numerical stability} of the eigen-decomposition step in MFCA depends on the magnitude of the spectral gap. 

In addition to the characterization of the dimensionality of the top eigenspace $\mathbb{W}^{\left( k \right)}$ of $T_h^{\left( k \right)}$, we also demonstrate in Theorem~\ref{thm:morph} and Theorem~\ref{thm:inner-product} the existence of a canonical identification of $\mathbb{W}^{\left( k \right)}$ with a complex $\left( 2k+1 \right)$-dimensional linear space spanned by $\left( 2k+1 \right)$ linearly independent entry functions in the \emph{Wigner $D$-matrix} associated with the unique $\left( 2k+1 \right)$-dimensional unitary irreducible representation of $\SO ( 3)$. A direct corollary of this canonical identification is the equality between the frequency-$k$ affinity measure and the viewing angle, thus generalizing the result in \cite{hadani2011representation2} for the affinity measure \eqref{eq:single-freq-affinity}. These facts establish the \textrm{admissibility} (consistency) of the proposed MFCA algorithm.

We emphasize that these theoretical results are not straightforward extensions of the techniques in \cite{hadani2011representation2} to the generalized localized parallel transport operator $T_h^{\left( k \right)}$. The generating-function-based approach in \cite{hadani2011representation2} is not easy to generalize to our setting without heavy notation and lengthy mathematical inductions. Instead, we observed that the constructions in \cite{hadani2011representation2} can be greatly simplified using an alternative construction by means of the Wigner $D$-matrices, which has been widely used in studies in mathematical physics concerning the irreducible representation of $\SO ( 3)$.

In the clean, noiseless scenario, the multi-frequency class averaging matrices certainly carry identical information for exactly recovering the affinity among view directions of the projection images; the real advantage, as argued and demonstrated in the theoretical analysis of \cite{gao2019multi} and the experimental results of \cite{fan2019cryo,fan19a}, lies at the low SNR region where utilizing higher-moment information becomes particularly beneficial even without introducing additional independent measurements for those higher moments. Empirically, we observe that the algorithm can tolerate higher level of noise than what is allowed according to the traditional Davis-Kahan theorem~\cite{DK1970}. In addition, the performance of the single frequency-$k$ class averaging algorithm improves as $k$ increases up to a critical frequency index determined by the spectral gap, magnitudes of the top eigenvalues, and the noise level.

Besides the improved numerical stability due to increased spectral gap, using higher frequency information for class averaging can also be interpreted as leveraging the additional redundancy encoded in the consistency of the ``higher order moments,'' which is in line with our continued exploration for a ``geometric harmonic retrieval'' initiated in \cite{gao2019multi,fan2019cryo,fan19a}. Moreover, in contrast with the computationally demanding SDP approach in \cite{bandeira2015non} or the noise-type-dependent approximate message passing approach in \cite{PWBM2018}, the proposed MFCA algorithm is easily parallelizable as the eigen-decompositions for the class averaging matrices in each frequency channel are completely independent. 

\subsection{Organization of the paper}
\label{sec:organization-paper}

The rest of this paper is organized as follows. Section~\ref{sec:mult-freq-class-aver} introduces the MFCA algorithms; Section~\ref{sec:prelim} introduces the basic mathematical set-up and notations for the spectral analysis in the remainder of this paper; Section~\ref{sec:main-results} presents the main theoretical contributions; Section~\ref{sec:theor-analys-multi} interprets the admissibility of MFCA using the theoretical results;  Section~\ref{sec:noise_model} discusses the noise robustness for the algorithm under two probabilistic models. Section~\ref{sec:numerics} illustrates the efficacy of MFCA through some numerical experiments; Section~\ref{sec:concl-future-work} concludes and discusses potential future directions. The basics on group and representation theory and technical proofs are deferred to the Appendix.

\section{Multi-Frequency Class Averaging Algorithms}
\label{sec:mult-freq-class-aver}

Throughout our discussion involving multiple frequency channels, we will fix an integer $k_{\mathrm{max}}\geq 1$ for the total number of frequency channels considered. For each frequency $k=1,\dots,k_{\mathrm{max}}$, we construct a separate class averaging matrix by
\begin{align}
\label{eq:freq-k-ca-mat}
  H_{ij}^{\left( k \right)}=
  \begin{cases}
    e^{\iota k\theta_{ij}}&\textrm{if $\left( i,j \right)\in E$}\\
    0 & \textrm{otherwise}
  \end{cases}
\end{align}

\subsection{Single Frequency-$k$ Affinity Measure}
\label{sec:single}

The Hermitian matrix $H^{\left( k \right)}$ stores the empirical transport data under the $k$th irreducible representation of $\SO(2)$. We then normalize each $H^{\left( k \right)}$ using the same degree matrix $D$ as in \eqref{eq:degree-matrix}; note that all matrices $H^{\left( k \right)}$ share the same sparsity pattern determined by $E$. After performing eigen-decomposition for $\widetilde{H}=D^{-1/2}H^{\left( k \right)}D^{-1/2}$, we keep the top $\left( 2k+1 \right)$ eigenvectors $\psi_1^{\left( k \right)},\dots,\psi_{2k+1}^{\left( k \right)}\in\mathbb{C}^{N}$ and define the  embedding
\begin{align}
\label{eq:embedding-freq-k}
  \Psi^{\left( k \right)}:\left\{I_i\right\}_{i=1}^{N}&\longrightarrow \mathbb{C}^{2k+1}\\
    I_i&\longmapsto \left( \psi_1^{\left( k \right)} \left( i \right), \dots,\psi_{2k+1}^{\left( k \right)} \left( i \right) \right).\nonumber
\end{align}
We compute the affinity measure between $I_i$ and $I_j$ at frequency $k$ as
\begin{align}
\label{eq:affinity-freq-k-prac}
  A^{\left( k \right)}_{ij}:=\frac{\left| \left\langle \Psi^{\left( k \right)} \left( I_i \right), \Psi^{\left( k \right)} \left( I_j \right) \right\rangle \right|}{\left\| \Psi^{\left( k \right)} \left( I_i \right) \right\|\left\| \Psi^{\left( k \right)} \left( I_j \right) \right\|},\quad 1\leq i\neq j\leq N.
\end{align}
Obviously, $\Psi^{\left( 1 \right)}=\Psi$ and $A_{ij}^{\left( 1 \right)}=A_{ij}$ in the traditional class averaging. We can perform $\kappa$-nearest neighbor search using the affinity measure $A_{ij}^{\left( k \right)}$ computed from an individual frequency $k$. The rationale behind the specific forms of \eqref{eq:embedding-freq-k} and \eqref{eq:affinity-freq-k-prac} is the core of this paper. In a nutshell, we use a $\left( 2k+1 \right)$-dimensional embedding because by Theorem~\ref{thm:eigenvalue-asym} and Theorem~\ref{thm:specgap} we expect a spectral gap occurring between the $\left( 2k+1 \right)^\text{th}$ and $\left( 2k+2 \right)^\text{th}$ eigenvector of $H^{\left( k \right)}$ (counting multiplicities). The affinity measure \eqref{eq:affinity-freq-k-prac} is related to the closeness of two viewing directions by the relation \eqref{eq:inner-product} in Theorem~\ref{thm:inner-product}. 

\subsection{Combining Information from Multiple Frequencies}
\label{sec:combine}
Since each affinity measure in~\eqref{eq:affinity-freq-k-prac} reflects the closeness of two viewing directions, combining those scores together can enforce the consistency of the classification results at each frequency and improve the overall accuracy. We propose one way to aggregate the single frequency affinity measure as
\begin{align}
 \label{eq:all-frequency-affinity-prac}
 A^{\textrm{All}}_{ij}:= \prod_{k = 1}^{k_\mathrm{max}}   A_{ij}^{(k)}.
\end{align}
We choose aggregation \eqref{eq:all-frequency-affinity-prac} because the affinity measure \eqref{eq:affinity-freq-k-prac} is related to the viewing angle by the relation \eqref{eq:inner-product} in Theorem~\ref{thm:inner-product}. In particular, comparing \eqref{eq:inner-product} and \cite[Theorem~6]{hadani2011representation2} tells us that
\begin{align*}
  A_{ij}^{\left( k \right)}=A_{ij}^k\qquad\textrm{for all $1\leq i\neq j\leq N$.}
\end{align*}
 We defer more detailed discussions of the geometric relation of this algorithm to Section~\ref{sec:main-alg-struct} and Section~\ref{sec:theor-analys-multi}.

\begin{remark}
   Note that \eqref{eq:affinity-freq-k-prac} and \eqref{eq:all-frequency-affinity-prac} are not the only ways to distill and aggregate the affinity information from multiple irreducible representations. Other natural alternatives include
\begin{align}
\label{eq:affinity-freq-k}
  S^{\left( k \right)}_{ij}:=2\left(\frac{\left| \left\langle \Psi^{\left( k \right)} \left( I_i \right), \Psi^{\left( k \right)} \left( I_j \right) \right\rangle \right|}{\left\| \Psi^{\left( k \right)} \left( I_i \right) \right\|\left\| \Psi^{\left( k \right)} \left( I_j \right) \right\|}\right)^{\frac{1}{k}}-1,\quad 1\leq i\neq j\leq N
\end{align}
which in the noiseless scenario satisfies
\begin{align*}
  S^{\left( k \right)}_{ij} = S^{\left( 1 \right)}_{ij}=2A_{ij}-1, \quad \textrm{for all $k\geq 1$}.
\end{align*}
Therefore, it is natural to combing all $G^{\left( k \right)}_{ij}$ by arithmetic averaging
\begin{align}
  \label{eq:all-frequency-affinity} S^{\textrm{All}}_{ij}:=\frac{1}{k_{\mathrm{max}}}\sum_{k=1}^{k_{\mathrm{max}}}S_{ij}^{\left( k \right)}.
\end{align}
However, our empirical experiments suggest that it is numerically much more stable to avoid taking $k$th roots for large values of $k$. We provide a brief interpretation of this phenomenon in Section~\ref{sec:theor-analys-multi}.
\end{remark}

There can be other approaches to combine the affinity scores from multiple frequencies, such as weighted average among different frequencies or majority voting. We will explore other ways to integrate multi-frequency information in the future. 

\section{Preliminaries for the Spectral Analysis of MFCA}
\label{sec:prelim}
In this section, we introduce our set-up and notations for the spectral analysis of MFCA. For additional concepts in the relevant group and representation theory and Wigner $D$-matrix, please refer to~\ref{sec:app_rep}.  

\subsection{Set-up}
\label{sec:set-up}
Throughout this paper, we view $\SO( 3 )$ as a
$\SO ( 2 )$-bundle over the $2$-dimensional
sphere $S^2$ in $\mathbb{R}^3$. For any $d\in\mathbb{N}_+$, we view $\mathbb{C}^d$ as a Hilbert product space equipped
with the canonical Hermitian inner product induced from the
standard Euclidean inner product on $\mathbb{R}^d$. We will
distinguish two different types of group actions on $\SO
\left( 3 \right)$: If $g \in\SO( 3 )$, $g$ acts
on elements of $\SO( 3 )$ by left
multiplication, denoted as
\begin{equation*}
  g \vartriangleright x:=g x,\qquad\forall g, x \in\SO( 3 ).
\end{equation*}
If $ w \in\SO ( 2 )\subset \SO( 3 )$,
unless otherwise specified, $w$ is assumed to be uniquely identified
with an $\SO( 3 )$ element by 
\begin{equation}
\label{eq:SO2-form1}
   w = w \left( \theta \right)=\begin{pmatrix}
    \cos \theta & -\sin \theta & 0\\
    \sin \theta  &   \cos \theta & 0\\
    0 & 0 & 1
  \end{pmatrix},\qquad \text{for } \theta \in \left[ 0,2\pi \right),
\end{equation}
and acts on elements of $\SO \left( 3
\right)$ by right multiplication, i.e.,
\begin{equation*}
  x \vartriangleleft w :=x w,\qquad\forall x \in \SO \left( 3
  \right),\,\, w  \in\SO ( 2 ).
\end{equation*}
Unless confusions arise, we will also denote
$x \vartriangleleft g=:xg$, $g,x\in\SO( 3 )$ for
the right action of $\SO( 3 )$ on itself, when
the context is clear.

Following the convention of \cite{hadani2011representation2}, we denote the transport data between $x,y\in \SO( 3 )$ by $T \left( x,y \right)$, the unique $\SO \left( 2
\right)$ element satisfying
\begin{equation}
  \label{eq:parallel-transport-operator}
  x\vartriangleleft T \left( x,y \right)=t_{\pi \left(
      x \right),\pi \left( y \right)}y,
\end{equation}
where $t_{\pi \left(x \right),\pi \left(y\right)}$ is the parallel
  transport along the unique geodesic on $S^2$ connecting
  $\pi \left( y \right)$ to $\pi \left( x \right)$. The optimal alignment angle $\theta_{ij}$ computed from \eqref{eq:opt_align} can be used to construct  an approximation of the transport data between $x_i$ and $x_j$ (the observation frames of $I_i$ and $I_j$, respectively), at the presence of measurement and discretization error, by
  \begin{equation}
    \label{eq:empirical-transport-data}
    \widetilde{T} \left( x_i,x_j \right):=e^{\iota \theta_{ij}}.
  \end{equation}
We refer to the $\widetilde{T}\left( x_i,x_j \right)$'s as the \emph{empirical transport data}. As shown  in \cite{hadani2011representation2}, $T(x, y)$ satisfy the following properties:
  \begin{align}
      & T \left( x,y \right)=T \left( y,x
        \right)^{-1},\quad\forall x,y\in \SO(3) \tag{Symmetry}\\
      & T \left( g\vartriangleright x, g\vartriangleright y
        \right)=T \left( x,y \right), \quad\forall x,y\in \SO(3),\,\,\forall g\in \SO( 3 ) \tag{Invariance}\\
      & T \left( x\vartriangleleft w_1, y\vartriangleleft
        w_2 \right)=w_1^{-1}T \left( x,y \right)w_2,
        \quad\forall x,y\in \SO(3),\,\,\forall w_1, w_2\in \SO
        \left( 2 \right). \tag{Equivariance}
  \end{align}
If $\rho:\SO ( 2 )\rightarrow\mathbb{C}$ is any
unitary representation of $\SO \left( 2
\right)$ on $\mathbb{C}$, then the three properties above can also be
cast into
\begin{align}
      & \rho\left(T \left( x,y \right)\right)=\overline{\rho\left(T \left( y,x
        \right)\right)},\quad\forall x,y\in \SO( 3 ) \tag{Symmetry}\\
      & \rho\left(T \left( g\vartriangleright x, g\vartriangleright y
        \right)\right)=\rho\left(T \left( x,y \right)\right), \quad\forall x,y\in \SO( 3 ),\,\,\forall g\in \SO( 3 ) \tag{Invariance}\\
      & \rho\left(T \left( x\vartriangleleft w_1, y\vartriangleleft
        w_2 \right)\right)=\overline{\rho\left(w_1\right)}\,\rho\left(T \left( x,y \right)\right)\,\rho\left(w_2\right),
        \quad\forall x,y\in \SO \left(3 \right) ,\,\,\forall w_1, w_2\in \SO
        \left( 2 \right). \tag{Equivariance}
  \end{align}
We shall only assume the symmetry to be strictly satisfied by the empirical transport data; the other properties will be assumed to hold only approximately. To simplify notations, we denote for any $k\in\mathbb{Z}$
\begin{equation}
\label{eq:Tk}
  T^{\left( k \right)}\left( x,y \right):=\rho_k \left( T \left( x,y \right) \right),\quad\forall x,y\in\SO( 3 )
\end{equation}
where $\rho_k:\SO ( 2 )\rightarrow\mathbb{C}$ is the unique
unitary irreducible representation of $\SO ( 2 )$ with character $k\in\mathbb{Z}$. The corresponding notation for the empirical transport data is $\widetilde{T}^{\left( k \right)} \left( x_i,x_j \right)$. 

In any of these irreducible representations, the empirical transport
data $\{\widetilde{T}^{\left( k \right)}\left(x_i,x_j \right)\mid 1\leq i,j\leq N\}$
approximate the ground truth transport data $\left\{T^{(k)}(x_i, x_j)\mid 1\leq i,j\leq N\right\}$ only when the viewing directions $\pi(x_i)$ and $\pi(x_j)$ are
close to each other, in the sense that the vectors $\pi (x_i)$ and $\pi (x_j)$ belong to some small spherical cap of opening angle $\alpha \in [0,2\pi)$. 

\subsection{Function on $\SO(3)$ and Isotypic Decomposition}
We will use the shorthand notation $\mathcal{H} = \mathbb{C}( \SO(3) )$ for the Hilbert space of smooth complex valued functions on $\SO( 3 )$, with standard Hermitian inner product
\begin{equation}
\label{eq:inner_product}
\langle f_1, f_2 \rangle_\mathcal{H} = \int_{\SO(3)} f_1(x)
\,\overline{f_2(x)} \,\mathrm{d}x,\quad f_1,f_2\in\mathcal{H}.
\end{equation}
Here $\mathrm{d}x$ denotes the normalized Haar measure on $\SO(3)$. 

The left and right actions of the group elements induce corresponding actions on the Hilbert space $\mathcal{H}$ of complex-valued functions over $\SO(3)$:
\begin{equation}
\label{eq:Gactf}
  \begin{aligned}
    g \cdot s \left( x \right)&:= s \left(
      g^{-1}\vartriangleright x \right),\quad\forall
    f\in\mathcal{H}, x\in\SO( 3 ),g\in\SO \left(
      3 \right). \\
    w \cdot s \left( x \right)&:=s \left( x \vartriangleleft w \right),\quad\forall s \in\mathcal{H}, x\in\SO( 3 ), w \in\SO ( 2 ).
  \end{aligned}
\end{equation}

The Hilbert space $\mathcal{H}$ can also be considered as a unitary representation of $\SO ( 2 )$. Let $\rho_k:\SO ( 2 )\rightarrow\mathbb{C}$ be the unique
irreducible unitary representation of $\SO ( 2 )$
of character $k\in\mathbb{Z}$. $\mathcal{H}$ admits an isotypic decomposition
\begin{equation}
\label{eq:isotypic-decomp-1}
  \mathcal{H}=\bigoplus_{k\in\mathbb{Z}}\mathcal{H}_k, 
\end{equation}
where 
\begin{align}
 \label{eq:SO2-isotypic}
  \mathcal{H}_{k}:=\left\{ s\in\mathcal{H}\mid s \left( x \vartriangleleft w \right)= \rho_k(w) s \left( x \right)\,\,\textrm{for all $x\in\SO( 3 )$ and $ w \in \SO(2)$} \right\}.
\end{align}
Note that $\SO\left( 3 \right)$ acts on $\mathcal{H}_k$
unitarily from the left by
\begin{equation*}
  g \cdot s \left( x \right):=  s \left(
g^{-1}\vartriangleright x \right),\quad\forall g \in \SO
  \left( 3 \right),\,\, s \in\mathcal{H}_k,\,\, x\in \SO(3).
\end{equation*}
Each $\mathcal{H}_k$ thus admits an isotypic decomposition
with respect to $\SO ( 3 )$, written as
\begin{equation}
\label{eq:isotypic-decomp-2}
  \mathcal{H}_k=\bigoplus_{n\in\mathbb{N}_{\geq 0}}\mathcal{H}_{n,k}
\end{equation}
where $\mathcal{H}_{n,k}$ denotes the isotypic component corresponding to the unique irreducible representation of $\SO ( 3 )$ of dimension $\left( 2n+1 \right)$, for $n=0,1,\dots$. 
An important observation is that each
$\mathcal{H}_{n,k}$ in~\eqref{eq:SO2-isotypic} is of multiplicity $0$ or $1$ in
$\mathcal{H}_k$:
\begin{theorem}[{\cite[Theorem~7]{hadani2011representation2}}]
\label{thm:mult-one}
  If $n<\left| k \right|$ then
  $\mathcal{H}_{n,k}=0$. Otherwise, $\mathcal{H}_{n,k}$ is
  isomorphic to the unique irreducible representation of
  $\SO ( 3 )$ of dimension $\left( 2n+1 \right)$.
\end{theorem}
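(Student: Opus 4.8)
The plan is to analyze the representation $\mathcal{H}_k$ via the Peter--Weyl theorem applied to $\SO(3)$, and then read off which irreducibles appear, and with what multiplicity, by decomposing with respect to the right $\SO(2)$-action. Concretely, I would start from the observation that $\mathcal{H}$ (continuous functions on $\SO(3)$) decomposes under the left $\times$ right $\SO(3)$-action as $\bigoplus_{\ell\ge 0} V_\ell \otimes V_\ell^{*}$, where $V_\ell$ is the unique irreducible of dimension $2\ell+1$; concretely the matrix coefficients $D^{\ell}_{mn}$ span the $\ell$-th block, with the left $\SO(3)$-action mixing the row index $m$ and the right $\SO(3)$-action mixing the column index $n$. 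Restricting the right action to $\SO(2)\subset\SO(3)$, the identity \eqref{eq:wigner-D-SO2}, namely $D^{\ell}_{mn}(x\vartriangleleft h(\alpha)) = e^{-\iota n\alpha}D^{\ell}_{mn}(x) = \rho_n(h^{-1})D^{\ell}_{mn}(x)$, shows that $D^{\ell}_{mn}$ transforms under the right $\SO(2)$-action by the character $-n$ (equivalently lies in a fixed weight space). Hence the weight-$k$ isotypic component $\mathcal{H}_k$ for the right $\SO(2)$-action — with the sign/labelling convention of \eqref{eq:SO2-isotypic} — is spanned precisely by those $D^{\ell}_{mn}$ with the column index $n$ equal to the prescribed value tied to $k$, and the first index $m$ ranging freely over $-\ell,\dots,\ell$.

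From there the result essentially falls out. For a fixed $\ell$, the column index $n$ in $D^{\ell}_{mn}$ can take the required value only if $|k|\le \ell$; if $|k|>\ell$ there is simply no such column, so the $\ell$-th block contributes nothing. This already gives the first claim: if $n<|k|$ (reading $n$ here as the paper's $\SO(3)$-dimension label, i.e. our $\ell$) then $\mathcal{H}_{n,k}=0$. When $\ell\ge |k|$, exactly one column survives, and the span of $\{D^{\ell}_{m,n}: -\ell\le m\le\ell\}$ for that single fixed column $n$ is, under the left $\SO(3)$-action, a copy of $V_\ell$ (the left action permutes the row index $m$ precisely according to the defining representation on $V_\ell$). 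So each $\ell\ge |k|$ contributes exactly one copy of the $(2\ell+1)$-dimensional irreducible — multiplicity one — which is the second claim. Translating indices back to the paper's notation ($n$ for the $\SO(3)$-label, $k$ for the frequency), this says $\mathcal{H}_{n,k}\cong V_n$ for $n\ge|k|$ and $\mathcal{H}_{n,k}=0$ for $n<|k|$.

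The one point requiring genuine care is the bookkeeping of \emph{which} $\SO(2)$-weight corresponds to $k$ under the conventions \eqref{eq:SO2-isotypic} and \eqref{eq:wigner-D-SO2} — in particular whether $\mathcal{H}_k$ picks out column index $n=k$ or $n=-k$, and correspondingly whether one should invoke $\rho_n$ or $\rho_{-n}=\overline{\rho_n}$. This is where a sign error would silently break the statement, so I would pin it down explicitly from \eqref{eq:wigner-D-SO2}: since $h\cdot s(x)=s(x\vartriangleleft h)$ and $D^\ell_{mn}(x\vartriangleleft h(\alpha))=e^{-\iota n\alpha}D^\ell_{mn}(x)$, the function $D^\ell_{mn}$ lies in the isotypic component on which $\SO(2)$ acts by the character $\alpha\mapsto e^{-\iota n\alpha}$, i.e. by $\rho_{-n}$; matching this against $\rho_k$ in \eqref{eq:SO2-isotypic} forces $n=-k$. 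The rest is routine: completeness of the $D^{\ell}_{mn}$ (Peter--Weyl) guarantees we have found \emph{all} of $\mathcal{H}_k$, and irreducibility/multiplicity-one of each block is the standard statement that the columns of a Wigner $D$-matrix each span a copy of the defining irreducible under the left regular action.

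Since this is Theorem~7 of \cite{hadani2011representation2}, I would cite that reference for the original proof and include the above as a streamlined argument via Wigner matrices, consistent with the paper's stated aim of re-deriving the framework of \cite{hadani2011representation2} through explicit $D$-matrix computations.
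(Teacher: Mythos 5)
The paper does not actually prove this theorem: it states it as a citation to \cite[Theorem~7]{hadani2011representation2}, so there is no internal proof to compare against. You correctly notice this at the end of your write-up.

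Your Peter--Weyl argument is sound, and it is indeed the natural ``$D$-matrix'' proof of the result, consistent with the explicit-Wigner-matrix philosophy of the rest of the paper. The bookkeeping you flag as the delicate point --- matching the convention in \eqref{eq:SO2-isotypic} against \eqref{eq:wigner-D-SO2} to deduce that $\mathcal{H}_k$ is spanned by the column $n=-k$ of each $D^{\ell}$ with $\ell\ge|k|$ --- is carried out correctly. Two minor remarks. First, for a fixed column the left regular action $g\cdot D^{\ell}_{mn}(x)=D^{\ell}_{mn}(g^{-1}x)=\sum_s D^{\ell}_{ms}(g^{-1})D^{\ell}_{sn}(x)$ realizes the \emph{contragredient} representation $g\mapsto D^{\ell}(g^{-1})^{\top}$ rather than $V_\ell$ itself; this is harmless here because every irreducible of $\SO(3)$ is self-dual, but the phrase ``precisely according to the defining representation'' glosses over that step, and a careful writeup should either invoke self-duality explicitly or replace ``defining representation'' by ``a representation isomorphic to $V_\ell$.'' Second, for context: the original Hadani--Singer proof runs through Frobenius reciprocity for the induced representation $\mathrm{Ind}_{\SO(2)}^{\SO(3)}\rho_k$, using that $V_\ell|_{\SO(2)}\cong\bigoplus_{m=-\ell}^{\ell}\rho_m$ is multiplicity-free; your Peter--Weyl route and the Frobenius-reciprocity route are two faces of the same computation, with yours having the advantage of producing explicit bases $\{D^{\ell}_{m,-k}\}_m$ for each isotypic block, which is exactly what the subsequent Theorems~\ref{thm:morph} and~\ref{thm:inner-product} need.
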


\section{Main Theoretical Results}
\label{sec:main-results}

\subsection{Generalized Parallel Transport Operators}
\label{sec:gen-para-transp-op}

The motivation for considering these isotypic decompositions
is to study the top eigenspace of the \emph{generalized
  parallel transport operator} $T^{\left( k
  \right)}:\mathcal{H}\rightarrow\mathcal{H}$, defined as
\begin{equation}
  \label{eq:defn-gen-parallel-transp-op}
  \left(T^{\left( k \right)}s\right)\left( x
  \right):=\int_{\SO(3) }\rho_k \left( T \left( x,y \right)
  \right)s \left( y \right)\,\mathrm{d}y=\int_{ \SO(3) }
  T^{\left( k \right)} \left( x,y \right)s \left( y
  \right)\,\mathrm{d}y,\quad\forall 
  s\in\mathcal{H},\,\, x\in \SO(3), 
\end{equation}
for all $k\in\mathbb{Z}$. When $k=1$, $T^{\left( k \right)}$
reduces to the \emph{parallel transport operator}
$T:\mathcal{H}\rightarrow\mathcal{H}$ defined in
\cite[\S2.3]{hadani2011representation2}. Similar to
\cite[\S2.3.1]{hadani2011representation2}, we can localize
the generalized parallel transport operator $T^{\left( k
  \right)}$ for any $k\in\mathbb{Z}$ as
\begin{equation}
  \label{eq:defn-gen-parallel-transp-op-loc}
  \left(T_h^{\left( k \right)}s\right)\left( x
  \right):=\int_{B \left( x, \alpha \right)}\rho_k \left( T \left( x,y \right)
  \right)s \left( y \right)\,\mathrm{d}y=\int_{B \left( x, \alpha \right)}
  T^{\left( k \right)} \left( x,y \right)s \left( y
  \right)\,\mathrm{d}y,\quad\forall
  s\in\mathcal{H},\,\,  x\in \SO(3), 
\end{equation}
where $B \left( x, \alpha \right)=\left\{ y\in \SO(3) \mid \left( \pi
    \left( x \right),\pi \left( y \right) \right)>\cos
  \alpha =:1-h \right\}$. Using the symmetry, invariance, and equivariance
of the transport data (Section~\ref{sec:set-up}), we
establish the following basic properties of $T^{\left( k
  \right)}$ for any $k\in\mathbb{Z}$:
\begin{enumerate}[(1)]
\item\label{item:1} $T^{\left( k \right)}$ is
  self-adjoint. This can be seen from the symmetry of
  transport data: for all $s,w\in\mathcal{H}$, we have
  \begin{equation*}
    \begin{aligned}
      \left\langle T^{\left( k \right)}s,w
    \right\rangle_{\mathcal{H}}&=\int_{ \SO(3) }\!\!\int_{\SO(3)}\rho_k
    \left( T \left( x,y \right) \right)s \left( y
    \right)\overline{w \left( x \right)}\,\mathrm{d}y\mathrm{d}x\\
     &=\int_{\SO(3) }\!\!\int_{ \SO(3) }s \left( y
    \right)\overline{\rho_k
    \left( T \left( y,x \right) \right)w \left( x \right)}\,\mathrm{d}y\mathrm{d}x=\left\langle s,T^{\left( k \right)}w \right\rangle_{\mathcal{H}}.
    \end{aligned}
  \end{equation*}
\item\label{item:2} $T^{\left( k \right)}$ commutes with the
  action of $\SO ( 3 )$ on $\mathcal{H}$: by the
  invariance of transport data we have for all $g\in \SO
  \left( 3 \right)$ and $s\in\mathcal{H}$, $x\in \SO(3)$, 
  \begin{equation*}
    \begin{aligned}
      \left(T^{\left( k \right)} \left( g\cdot s
        \right)\right)\left( x \right)&=\int_{ \SO(3) }\rho_k
      \left( T \left( x,y \right)\right)s \left(
        g^{-1}\vartriangleright y
      \right)\,\mathrm{d}y \\
      & \stackrel{z:=g^{-1}\vartriangleright
      y}{=\!=\!=\!=\!=\!=\!=}\int_{ \SO(3) }\rho_k
      \left( T \left( g\vartriangleright
          \left(g^{-1}\vartriangleright
            x\right),g\vartriangleright z \right)\right)s
      \left(z\right)\,\mathrm{d}z\\
      &=\int_{ \SO(3) }\rho_k
      \left( T \left( g^{-1}\vartriangleright
            x, z \right)\right)s
      \left(z\right)\,\mathrm{d}z=\left( T^{\left( k
          \right)}s \right)\left( g^{-1}\vartriangleright x
      \right)=\left( g\cdot \left( T^{\left( k \right)}s \right) \right)\left( x \right).
    \end{aligned}
  \end{equation*}
\item\label{item:3} $\bigoplus_{\ell\neq
    -k}\mathcal{H}_{\ell}\subset \mathrm{\ker}\,T^{\left( k
    \right)}$, and $T^{\left( k \right)}$ can
  be viewed as an operator from $\mathcal{H}_{-k}$ to
  itself. This can be verified using the equivariance of
  $T^{\left( k \right)}$. First, note that for
  any $s\in \mathcal{H}$ we have $T^{\left( k
    \right)}s\in\mathcal{H}_{-k}$, since for any  $w\in\SO
  \left( 2 \right)$ we have
  \begin{equation*}
    \begin{aligned}
      w\cdot \left(T^{\left( k \right)}s\right) \left( x
      \right)&=\left(T^{\left( k \right)}s\right)\left(
        x\vartriangleleft w \right)\\
        &=\int_{B \left( x,\alpha \right)}\rho_k \left( T \left( x\vartriangleleft w,y
        \right)\right)s\left( y
      \right)\,\mathrm{d}y=\int_{B \left( x, \alpha \right)} \overline{\rho_k \left( w
        \right)}\rho_k \left( T \left( x,y \right)
  \right)s \left( y \right)\,\mathrm{d}y\\
        &=\rho_{-k} \left( w
        \right)\int_{B \left( x, \alpha \right)} \rho_k \left( T \left( x,y \right)
  \right)s \left( y \right)\,\mathrm{d}y=\rho_{-k}\left( w
  \right) \left(T^{\left( k \right)}s\right) \left( x \right).
    \end{aligned}
  \end{equation*}
This proves that $T^{\left( k \right)}$ maps $\mathcal{H}$
into $\mathcal{H}_{-k}$, by the definition of isotypic
decomposition \eqref{eq:SO2-isotypic} with respect to the
$\SO ( 2 )$ action. The conclusion that $\bigoplus_{\ell\neq
    -k}\mathcal{H}_{\ell}\subset \mathrm{\ker}\,T^{\left( k
    \right)}$ then follows from Schur's Lemma \cite[Theorem 2.1]{brocker2013representations}.
\end{enumerate}
The arguments above can be applied to $T^{\left( k
  \right)}_h$, \emph{mutatis mutandis}, and thus the same
properties hold for the local generalized parallel
transport operator. Invoking Schur's Lemma for a second
time, we know that $T_h^{\left( k \right)}$ acts on
$\mathcal{H}_{n,-k}$ as a scalar, i.e.,
\begin{equation}
\label{eq:schur}
  T^{\left( k \right)}_h\big|_{\mathcal{H}_{n,-k}}=\lambda_n^{\left( k \right)}\left( h \right)\mathrm{Id}\big|_{\mathcal{H}_{n,-k}}.
\end{equation}
The multiplicity-one theorem (Theorem~\ref{thm:mult-one})
tells us that $\lambda_n^{\left( k \right)}=0$ for all
$0\leq n<\left| k \right|$. In order to calculate the remaining
$\lambda_n^{\left( k \right)}$'s ($n\geq \left| k \right|$) explicitly, it suffices to
fix a point $x_0\in\SO ( 3 )$, and pick an
arbitrary function $u\in\mathcal{H}_{n,-k}$ with $u \left(
  x_0 \right)$, and use relation $\lambda_n^{\left( k
  \right)}=\left( T_h^{\left( k \right)} u \right)\left( x_0
\right)/u\left( x_0 \right)$. We will defer such
computations for $0<h\ll 1$ to
Section~\ref{sec:spectral-property}. Next subsection
summarizes these properties, in preparation for the
discussion on the main algebraic structure of the
\emph{generalized intrinsic model} in Section~\ref{sec:main-alg-struct}.

In \cite{singer2011viewing,hadani2011representation2}, it was argued that the Hermitian matrix $H$ in~\eqref{eq:H} should be understood as the discretization (under uniform random sampling on $\SO( 3)$) of an integral operator $T_h^{(1)}$. 
Consequently, many properties of the local transport data matrix $H$ can be studied through its ``continuous limit'' $T_h$, especially the eigenvalues and eigenvectors, which converge to the eigenvalues and eigenfunctions of $T_h$ in an appropriate sense \cite{KG2000}; this perspective is common in the manifold learning literature \cite{BelkinNiyogi2005,BelkinNiyogi2007,coifman2006diffusion,SingerWu2012VDM,HDM2016}. In the class averaging setting, the integral operator $T_h$ enjoys many useful invariance and equivariance properties, which makes it relatively straightforward to study its spectral data using representation theoretic tools. Hadani and Singer noticed that $T_h$ acts on the subspace $\mathcal{H}_{-1}$ of $\mathcal{H}$. 
The space $\mathcal{H}_{-1}$ is also canonically identified with the linear space of sections of a complex line bundle over $\SO( 3 )$ induced by the unitary irreducible representation of $\Unitary \left( 1 \right)$ with character $k=1$ \cite{goldberg1967spin,campbell1971tensor,boyle2016should,eastwood1982edth,marinucci2011random,Malyarenko2011}. Furthermore, $T_h$ commutes with the induced left action of $\SO( 3 )$ on $\mathcal{H}_{-1}$, which by Schur's theorem indicates that the eigenspaces of $T_h$ coincides with the isotypic components of $\mathcal{H}_{-1}$ under the left $\SO( 3 )$ action. In particular, this mechanism can be used to show that the top eigenspace of $T_h$ is the unique isotypic component of $\mathcal{H}_{-1}$ corresponding to the unique three-dimensional unitary irreducible representation of $\SO( 3 )$ for all sufficiently small $h>0$, and that the affinity measure $2A_{ij} - 1$ is exactly identical with the cosine value of the viewing angle between $I_i$ and $I_j$ in the noise-free setting.

\subsection{Spectral Properties of the Local Parallel
  Transport Operator}
\label{sec:spectral-property}

In this subsection we summarize the spectral properties of
$T_h^{(k)}$ for $h \ll 1$ (which is the relevant
regime for class averaging). Proofs for the main theorems discussed in this subsection are deferred to
\ref{sec:spectral-analysis}.  
These proofs essentially
follow the proof ideas of \cite[Theroem~3 and
Theorem~4]{hadani2011representation2}, with technical
modification due to the complication of Jacobi polynomials
--- unlike the case for the Legendre polynomials involved in
the analysis of single-frequency class averaging, no sharp
Bernstein-type inequality is known for Jacobi polynomials
arising from the Wigner $d$-matrices. We refer interested
readers to discussions and conjectures in
\cite{CGW1994,HS2014,KKT2018} for Bernstein-type
inequalities for Jacobi polynomials.

\begin{theorem}[Eigenvalues of $T_h^{\left( k \right)}$ for
  small $h\ll1$]\label{thm:eigenvalue-asym}
\label{thm:eval}
The operator $T_h^{(k)}$ has a discrete spectrum
$\lambda^{k}_n(h)$ for all $n \in \mathbb{N}$, and
$\lambda_n^{\left( k \right)}=0$ for all $0\leq n< \left| k
\right|$. For $n\geq \left| k \right|$ and $h\in (0,2]$, the dimension of the
eigenspace of $T_h^{\left( k \right)}$ corresponding to
$\lambda_n^{\left( k \right)}$ is
$2n+1$. In addition, $\lambda_k^{(k)}$ and $\lambda_{k+1}^{(k)}$ have the following expressions:
\begin{align}
\lambda_k^{\left( k \right)}\left( h
  \right)& = \frac{1-\left( 1-h/2
           \right)^{k+1}}{k+1},\label{eq:lambda-k-k_main}\\
   \lambda_{k+1}^{\left( k \right)}\left( h
  \right)&= \frac{2(k+1)( 1- (1 - h/2)^{k+2})}{k+2} - \frac{(2k+1)(1 - (1-h/2)^{k+1})}{k + 1}.\label{eq:lambda-k-k+1_main}
\end{align}
In the regime $h \ll
1$, the eigenvalue $\lambda_n^{(k)}(h)$ ($n\geq \left| k
\right|$) adopts asymptotic expansion
\begin{equation}
\label{eq:eval}
\lambda^{(k)}_n (h) = \frac{1}{2} h - \frac{1}{8}\left(
  n^2+n-k^2 \right) h^2 + O(h^3).
\end{equation}
\end{theorem}

\begin{figure}[htpb]
\centering
\includegraphics[width=1.0\textwidth]{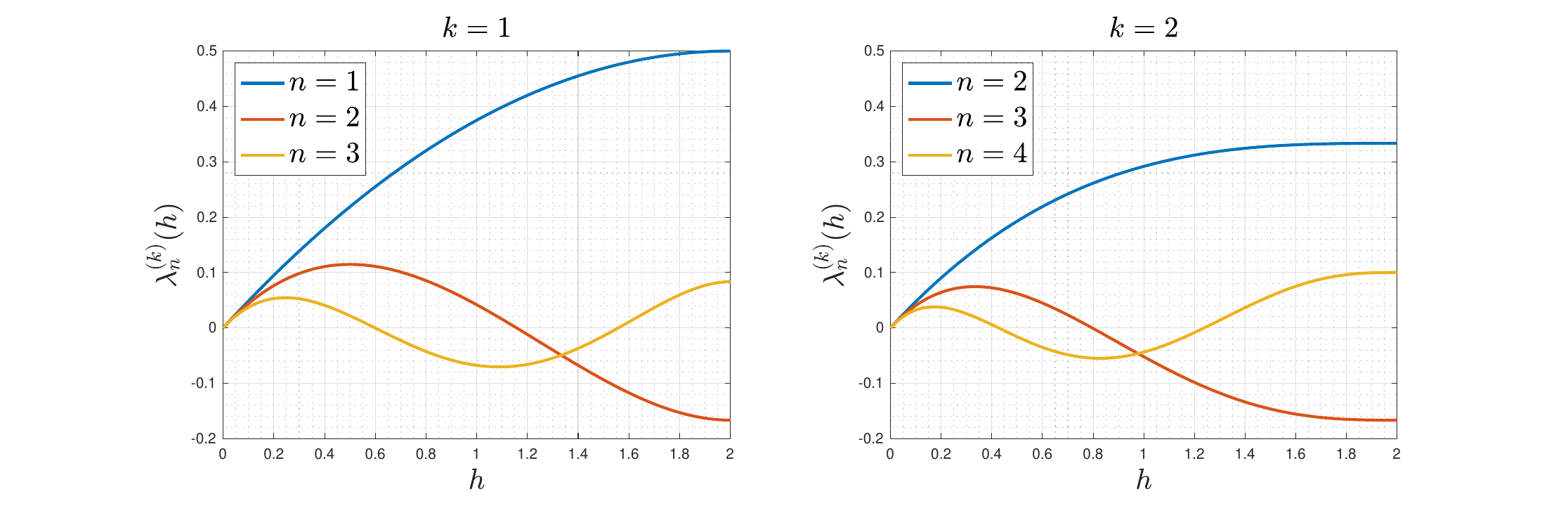}
\caption{{\small The top three eigenvalues 
$\lambda_n^{\left( k \right)}\left( h \right)$ of operator 
$T_h^{\left( k \right)}$, for $k=1$ (left) and $k=2$
  (right) over interval $h \in (0, 2]$.
}}
\label{fig:eval}
\end{figure}

\begin{remark}
  When $k=1$, Theorem~\ref{thm:eval} reduces to \cite[Theorem~3]{hadani2011representation2}. 
\end{remark}
The proof of Theorem~\ref{thm:eigenvalue-asym} in \ref{sec:proof-theor-refthm-asym}
actually proves the stronger conclusion that each eigenvalue $\lambda_n^{(k)}(h)$ is a
polynomial in $h>0$ of degree $\left( n+1 \right)$ whenever $n\geq \left| k
\right|$. The key step in the proof is identifying that the $(-k, -k)$ entry of the Wigner D-matrix $D^{n}_{-k, -k}(x) \in \mathcal{H}_{n, k}$ for $n \geq |k|$ and is an appropriate function $u$ for calculating the eigenvalues. The largest three eigenvalues for cases $k = 1$
and $k = 2$ can be explicitly written out as
\begin{align}
\lambda_1^{(1)}\left( h \right) & = \frac{1}{2}h - \frac{1}{8} h^2, \nonumber \\
\lambda_2^{(1)}\left( h \right) & = \frac{1}{2}h - \frac{5}{8} h^2 + \frac{1}{6}h^3,  \nonumber \\
\lambda_3^{(1)}\left( h \right) & = \frac{1}{2}h - \frac{11}{8} h^2 + \frac{25}{24}h^3 - \frac{15}{64}h^4,
\label{eq:eval_k1}
\end{align}
and
\begin{align}
\lambda_2^{(2)}\left( h \right) & = \frac{1}{2}h - \frac{1}{4} h^2 + \frac{1}{24} h^3, \nonumber \\
\lambda_3^{(2)}\left( h \right) & = \frac{1}{2}h -  h^2 + \frac{13}{24}h^3 - \frac{3}{32}h^4, \nonumber \\
\lambda_4^{(2)}\left( h \right) & = \frac{1}{2}h - 2 h^2 + \frac{57}{24}h^3 - \frac{70}{64}h^4 + \frac{7}{40} h^5.
\label{eq:eval_k2}
\end{align}
 Plots of $\lambda_{k+i}^{(k)}$, for $i = 0, 1, 2$ are provided
 in Figure~\ref{fig:eval}. 
 
\begin{corollary}
\label{cor:1}
  As $k$ increases, the eigenvalue $\lambda_k^{(k)}$  decreases and $\lim_{k \rightarrow \infty} \lambda_k^{(k)} = 0$.
\end{corollary}
\begin{proof}[Proof of Corollary~\ref{cor:1}]
Based on Theorem~\ref{thm:eigenvalue-asym}, the difference between $\lambda_{k+1}^{(k+1)}$ and $\lambda_k^{(k)}$ for $k \geq 1$ is, 
\begin{align}
    \label{eq:eigval_diff}
    \lambda_{k+1}^{(k+1)} - \lambda_k^{(k)} & = \frac{1 - (1-h/2)^{k+2}}{k+2} - \frac{1 - (1-h/2)^{k+1}}{k+1} =  \frac{-1 + (1-h/2)^{k+1}(1 + \frac{h}{2}(k+1))}{(k+1)(k+2)} \nonumber \\
    & \overset{(\text{a})}{<} \frac{-1 + (1 - h/2)^{k+1}(1+ h/2)^{k+1}}{(k+1)(k+2)} = \frac{-1 + (1 - h^2/4)^{k+1}}{(k+1)(k+2)} < 0, 
\end{align}
where (a) is based on the fact that $(1 + \frac{h}{2})^{k+1} > 1 + (k+1) \frac{h}{2}$ for $h \in (0, 2]$ via Taylor expansion. In addition, since $ 0 \leq 1 - h/2 < 1$, $\lim_{k \rightarrow \infty} \lambda_k^{(k)}(h) = \lim_{k \rightarrow \infty} \frac{1 - (1-h/2)^{k+1}}{k+1} = 0$. 
 \end{proof}

\noindent This is an important observation for determining the maximum frequency cutoff, which will be further discussed in Section~\ref{sec:noise_model}.

It is natural to conjecture
 that the top eigenspace of $T_h^{\left( k \right)}$ is the
 $\left( 2k+1 \right)$-dimensional space corresponding to
 eigenvalue $\lambda_k^{\left( k \right)}\left( h \right)$
 for sufficiently small $h>0$. Moreover, denote
 \begin{equation}
   \Delta_k:=\argmax_{h\in (0,2]}\lambda_{k+1}^{\left( k \right)}\left( h \right)=\frac{1}{k+1},
   \label{eq:delta_k}
 \end{equation}
we have the following characterization of the \emph{spectral
gap} for $T_h^{\left( k \right)}$ in the regime $0<h\ll1$.

\begin{theorem}
\label{thm:specgap}
For every value of $h \in (0, 2]$, the largest eigenvalue of
$T_h^{(k)}$ is $\lambda_k^{(k)}(h)$. In addition, for every
value of $h \in (0, \Delta_k]$, the spectral gap
$G^{(k)}(h)$ between the largest and the second largest
eigenvalue of $T_h^{\left( k \right)}$ is
\begin{equation}
\label{eq:specgap}
G^{(k)}(h) = \lambda_{k}^{(k)} - \lambda_{k+1}^{(k)} = \frac{2 - (1-h/2)^{k+1}\left((k+1) h + 2\right)}{ k + 2}.
\end{equation}
\end{theorem}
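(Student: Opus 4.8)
Below I outline how I would prove Theorem~\ref{thm:specgap}.

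I would base the argument on the integral representation of the eigenvalues that underlies the proof of Theorem~\ref{thm:eval}: for $n\geq\left|k\right|$,
\begin{equation*}
  \lambda_n^{\left(k\right)}\left(h\right)=\frac12\int_{1-h}^{1}d^{n}_{k,k}\left(\arccos t\right)\,\mathrm{d}t=\frac{1}{2^{k+1}}\int_{1-h}^{1}\left(1+t\right)^{k}P^{\left(0,2k\right)}_{n-k}\left(t\right)\,\mathrm{d}t,
\end{equation*}
with $\lambda_n^{\left(k\right)}\equiv0$ for $0\leq n<\left|k\right|$, where the second form uses \eqref{eq:wigner-small-d-diagonal}; in particular $\lambda_k^{\left(k\right)}\left(h\right)=\frac{1}{k+1}\left(1-\left(1-h/2\right)^{k+1}\right)$, and every $\lambda_n^{\left(k\right)}$ is a polynomial in $h$. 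Since $\lambda_n^{\left(k\right)}$ occurs with multiplicity $2n+1$ by \eqref{eq:schur} and Theorem~\ref{thm:mult-one}, the statement reduces to: (a) $\lambda_k^{\left(k\right)}\left(h\right)>\lambda_n^{\left(k\right)}\left(h\right)$ for all $n>k$ and all $h\in\left(0,2\right]$; and (b) $\lambda_{k+1}^{\left(k\right)}\left(h\right)\geq\lambda_n^{\left(k\right)}\left(h\right)$ for all $n\geq k+1$ and all $h\in\left(0,\Delta_k\right]$; then \eqref{eq:specgap} is just the value of $\lambda_k^{\left(k\right)}-\lambda_{k+1}^{\left(k\right)}$. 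The restriction $h\leq\Delta_k$ in (b) is genuine, since already $\lambda_3^{\left(1\right)}\left(2\right)=\frac1{12}>-\frac16=\lambda_2^{\left(1\right)}\left(2\right)$.

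First I would dispose of the easy half, namely the closed form together with the comparison against $n=k+1$. Using $P^{\left(0,2k\right)}_0\equiv1$ and $P^{\left(0,2k\right)}_1\left(t\right)=-k+\left(k+1\right)t$, one gets the \emph{pointwise} identity
\begin{equation*}
  d^{k}_{k,k}\left(\arccos t\right)-d^{k+1}_{k,k}\left(\arccos t\right)=\left(k+1\right)\left(\frac{1+t}{2}\right)^{k}\left(1-t\right)\geq0\qquad\text{on }\left[-1,1\right],
\end{equation*}
so that $\lambda_k^{\left(k\right)}\left(h\right)>\lambda_{k+1}^{\left(k\right)}\left(h\right)$ on $\left(0,2\right]$, and $\lambda_k^{\left(k\right)}\left(h\right)-\lambda_{k+1}^{\left(k\right)}\left(h\right)=\frac{k+1}{2}\int_{1-h}^{1}\left(\frac{1+t}{2}\right)^{k}\left(1-t\right)\,\mathrm{d}t$; the substitution $u=1+t$ followed by elementary simplification turns this into exactly \eqref{eq:specgap}. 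Differentiating that identity yields $\frac{\mathrm{d}}{\mathrm{d}h}\lambda_{k+1}^{\left(k\right)}\left(h\right)=2^{-k-1}\left(2-h\right)^{k}\left(1-\left(k+1\right)h\right)$, which both confirms $\Delta_k=\frac1{k+1}$ and shows $\lambda_{k+1}^{\left(k\right)}$ is increasing on $\left(0,\Delta_k\right]$; in particular $\lambda_{k+1}^{\left(k\right)}\left(h\right)\geq c_k h$ there for an explicit constant $c_k>0$ (e.g.\ from the Taylor expansion of Theorem~\ref{thm:eval} and the degree bound on $\lambda_{k+1}^{\left(k\right)}$).

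What remains --- and the step I expect to be the main obstacle --- is a uniform-in-$n$ tail estimate: $\lambda_k^{\left(k\right)}\left(h\right)>\lambda_n^{\left(k\right)}\left(h\right)$ on $\left(0,2\right]$ and $\lambda_{k+1}^{\left(k\right)}\left(h\right)\geq\lambda_n^{\left(k\right)}\left(h\right)$ on $\left(0,\Delta_k\right]$ for all $n\geq k+2$. I would split the range of $n$ at a threshold $N_0=N_0\left(k\right)$. For $k+2\leq n\leq N_0$ the $\lambda_n^{\left(k\right)}$ are explicit polynomials of bounded degree, so the finitely many required inequalities can be verified directly --- via the signs of the coefficients of $\lambda_{k+1}^{\left(k\right)}-\lambda_n^{\left(k\right)}$, or by a Sturm-sequence/discriminant argument as in the $k=1$ analysis of \cite{hadani2011representation2}. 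For $n>N_0$ one estimates
\begin{equation*}
  \left|\lambda_n^{\left(k\right)}\left(h\right)\right|\leq\frac{1}{2^{k+1}}\int_{1-h}^{1}\left(1+t\right)^{k}\left|P^{\left(0,2k\right)}_{n-k}\left(t\right)\right|\,\mathrm{d}t
\end{equation*}
and applies a Bernstein-type bound for Jacobi polynomials on intervals approaching $t=1$, together with the endpoint value $\lambda_n^{\left(k\right)}\left(2\right)=\left(-1\right)^{n-k}k/\left(n\left(n+1\right)\right)$ (read off from the integral representation via Rodrigues' formula) and the lower bound $\lambda_{k+1}^{\left(k\right)}\left(h\right)\geq c_k h$, to force $\left|\lambda_n^{\left(k\right)}\left(h\right)\right|<\lambda_{k+1}^{\left(k\right)}\left(h\right)\leq\lambda_k^{\left(k\right)}\left(h\right)$ on all of $\left(0,2\right]$. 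Assembling the two regimes proves (a) and (b), and hence the theorem. The delicacy here is that, unlike the Legendre case $k=1$ of \cite{hadani2011representation2}, where a sharp Bernstein inequality makes the choice of $N_0$ and the tail bound transparent, for $k\geq2$ only non-sharp Bernstein-type inequalities for $P^{\left(0,2k\right)}_m$ are available (cf.\ \cite{CGW1994,HS2014,KKT2018}); one must take $N_0\left(k\right)$ conservatively large, with controlled dependence on $k$, so that the weaker tail bound still beats $\lambda_{k+1}^{\left(k\right)}\left(h\right)$ \emph{uniformly} over $h\in\left(0,2\right]$ and not merely as $h\to0$, and then check the correspondingly larger finite family of polynomial inequalities. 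This trade-off is exactly the ``slight technical modification'' of \cite[Theorems~3--4]{hadani2011representation2} alluded to before Theorem~\ref{thm:eval}.
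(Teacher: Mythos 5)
Your overall strategy --- reduce to the pointwise dominance $\lambda_k^{(k)}\geq\lambda_{k+1}^{(k)}\geq\lambda_n^{(k)}$, verify finitely many small-$n$ cases directly, and bound the tail uniformly --- matches the paper's, and your derivation of the closed form \eqref{eq:specgap} from $P_0^{(0,2k)}\equiv 1$ and $P_1^{(0,2k)}(t)=-k+(k+1)t$ is essentially the same computation the paper does. The tail estimate is where you diverge from the paper, and where the proposal has a gap.

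There are two problems with the tail argument. First, the asserted chain $\lvert\lambda_n^{(k)}(h)\rvert<\lambda_{k+1}^{(k)}(h)\leq\lambda_k^{(k)}(h)$ ``on all of $(0,2]$'' is impossible: from \eqref{eq:lambda-k-k+1}, $\lambda_{k+1}^{(k)}(2)=-k/\bigl((k+1)(k+2)\bigr)<0$, so $\lambda_{k+1}^{(k)}$ cannot dominate a nonnegative quantity near $h=2$. You only need that chain on $(0,\Delta_k]$; on the rest of $(0,2]$ you must compare directly against $\lambda_k^{(k)}$. Second, and more seriously, the proposed Bernstein-type bound combined with $\lambda_{k+1}^{(k)}(h)\geq c_k h$ cannot close the argument near $h=0$. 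By \eqref{eq:eval-derivative-1}, \emph{every} $\lambda_n^{(k)}$ ($n\geq\lvert k\rvert$) has the identical expansion $\lambda_n^{(k)}(h)=\tfrac{1}{2}h+O(h^2)$; the separation between the eigenvalues is a second-order effect governed by $\partial_h^2\lambda_n^{(k)}(0)=-\tfrac{1}{4}(n^2+n-k^2)$. A linear lower bound for $\lambda_{k+1}^{(k)}$ paired with any upper bound on $\lvert\lambda_n^{(k)}\rvert$ that does not likewise capture the $h/2$ leading term is doomed near $h=0$: both sides are $\sim h/2$, so a Bernstein-type bound on $\lvert P_{n-k}^{(0,2k)}(1-h)\rvert$ (which in any case tends to $1$, not $0$, as $h\to 0$) gives nothing.

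The paper resolves this with two separate mechanisms. Near $h=0$ it compares \emph{derivatives}: $\partial_h\lambda_n^{(k)}(h)=\tfrac{1}{2}(1-h/2)^k\,P_{n-k}^{(0,2k)}(1-h)$, and it uses Szeg\H{o}'s asymptotic formula together with explicit estimates on the location of the first zero of $P_{n-k}^{(0,2k)}(\cos\alpha)$ to show that once $\alpha$ passes the first extremum the magnitude of $\partial_h\lambda_n^{(k)}$ drops below $1/4$, whereas $\partial_h\lambda_{k+1}^{(k)}$ stays above $1/4$ on an $n$-independent initial interval; integrating from $0$ then gives $\lambda_n^{(k)}\leq\lambda_{k+1}^{(k)}$ there (Lemma~\ref{lem:left-bound}). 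For $h$ bounded away from $0$ the paper uses a much cleaner device that your proposal omits entirely: the Hilbert--Schmidt identity $\lVert T_h^{(k)}\rVert_2^2=h/2$, which by summing $(2n+1)\bigl(\lambda_n^{(k)}\bigr)^2$ yields the uniform bound $\lambda_n^{(k)}(h)\leq\sqrt{h}/\sqrt{4n+2}$; this beats $\lambda_k^{(k)}(h)$ on $[h_1^{(k)},2]$ and $\lambda_{k+1}^{(k)}(h)$ on $[h_2^{(k)},\Delta_k]$ once $n$ is large (Lemma~\ref{lem:right-bound}), without any Jacobi asymptotics at all. I would recommend replacing your Bernstein/endpoint step with this two-regime split: the derivative/Szeg\H{o} argument near $h=0$, and the Hilbert--Schmidt bound away from it, exactly because the $\Theta(h)$ degeneracy at the origin forces a second-order comparison there.
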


Again, when $k=1$, Theorem~\ref{thm:specgap} reduces to
\cite[Theorem~4]{hadani2011representation2}. The main
technicality of the proof of Theorem~\ref{thm:specgap},
which is deferred to \ref{sec:proof-theor-refthm-gap}, is to
show that $\lambda^{(k)}_n(h) \leq \lambda_{k+1}^{(k)}(h)$ for every
$h \in (0, \Delta_k ]$ and $n \geq k+1$, which appears evident from
Figure~\ref{fig:eval}. For small $0<h\ll  \Delta_k$, the
spectral gap is approximately 
\begin{equation}
G^{(k)}(h) \sim \frac{ 1 + k }{4} h^2,
\end{equation}
which gets larger as the ``angular frequency''
$k\in\mathbb{N}$ increases. More generally, we have the following Corollary. 
\begin{corollary}
\label{cor:2}
The spectral gap $G^{(k)}(h)$ increases as $k$ increases from $1$ to $k_\mathrm{max} = \left \lfloor \frac{1}{h} \right \rfloor - 1 $.   
\end{corollary}

\begin{proof}[Proof of Corollary~\ref{cor:2}]
We show that for any $k \geq 2$, the difference $G^{(k)}(h) - G^{(k-1)}(h)$ is always positive for any $h \in (0, \Delta_k]$. To begin with, we explicitly write out the difference as
\begin{equation*}
    \begin{aligned}
    G^{(k)}(h) - G^{(k-1)}(h) &= \frac{2 - (1-h/2)^{k+1}((k+1)h+2)}{k+2} - \frac{2 - (1-h/2)^k(kh+2)}{k+1}\\
    &= \frac{(1-h/2)^k((k+1)^2h^2+2kh+4) - 4}{2(k+1)(k+2)}\\
    &= \frac{(1-h/2)^k}{2(k+1)(k+2)}\underbrace{\left((k+1)^2h^2 + 2kh +4 - 4(1-h/2)^{-k}\right)}_{=:\xi(h)}\\
    &= \frac{(1-h/2)^k}{2(k+1)(k+2)} \xi(h),
    \end{aligned}
\end{equation*}
where $\xi(h)$ is defined as a function of $h$. Since the term in front of $\xi(h)$ is always positive for $h \in (0, \Delta_k]$, it suffices to show $\xi(h) > 0$ for any $k \geq 2$ and $h \in (0, \Delta_k]$. To this end, clearly $\xi(h) = 0$ when $h = 0$ then we can instead show the derivative of $\xi(h)$ is positive for any $h \in (0, \Delta_k]$. That is,
\begin{equation*}
    \frac{d\xi(h)}{dh} = 2(k+1)^2h + 2k - \frac{2k}{(1 - h/2)^{k+1}}.
\end{equation*}
Again, when $h = 0$ we observe that $\frac{d\xi(h)}{dh}|_{h = 0} = 0$. So in order to show $\frac{d\xi(h)}{dh} > 0$, for all $h \in (0, \Delta_k]$ we can instead check if the second order derivative of $\xi(h)$ is positive for any $h \in (0, \Delta_k]$. Indeed, we have
\begin{equation*}
\begin{aligned}
    \frac{d^2\xi(h)}{dh^2} &= 2(k+1)^2 - \frac{k(k+1)}{(1-h/2)^{k+2}} = \frac{(k+1)}{(1-h/2)^{k+2}}(2(k+1)(1-h/2)^{k+2} - k) \\
    &\overset{(\text{a})}{>} \frac{(k+1)}{(1-h/2)^{k+2}}(2(k+1)(1 - (k+2)h/2) - k) \overset{(\text{b})}{\geq} 0 
    \end{aligned}
\end{equation*}
where (a) comes from the inequality that $(1 - x)^{a} > 1 - xa$ for any $x \in (0, 1)$ and $a > 2$, (b) is satisfied since $2(k+1)(1 - (k+2)h/2) - k$ is linear and monotonically decreasing for $h$ and the equality only holds when $h = \Delta_k = \frac{1}{k+1}$. Therefore, we obtain that $\frac{d^2\xi(h)}{dh^2} > 0, \; \forall h \in [0, \Delta_k]$ and it follows that $\frac{d\xi(h)}{dh} > 0 , \; \forall h \in (0, \Delta_k]$, furthermore we can conclude that $G^{(k)}(h) - G^{(k-1)}(h) > 0$ for any $h \in (0, \Delta_k]$. 
\end{proof}

This justifies one benefit of
setting $k>1$ for class averaging, as larger spectral gaps
provide more robustness to noise corruption for $k$ satisfying $k < \frac{1}{h}-1$.
More detailed discussion on the performance of the algorithm under noise perturbation is in Section~\ref{sec:noise_model}. In practice, the choice of frequency cutoff depends on the neighborhood size, noise type and noise level and may need to be empirically identified.

\subsection{The Main Algebraic Structure: Generalized
  Intrinsic Model}
\label{sec:main-alg-struct}

Just as the intrinsic model established in
\cite{hadani2011representation2} equates the ``extrinsic
model'' $S^2$ with the ``intrinsic model'' of the top
eigenspace $\mathbb{W}$ of $T=T^{\left( 1 \right)}$, we will
generalize this correspondence to the setting for general
complex irreducible unitary representations of $\SO \left( 2
\right)$. More specifically, we establish the correspondence 
between the following two generalized models:

\begin{itemize}
\item \emph{Generalized Extrinsic Model}: For every point $x=x\left( \varphi,\vartheta,\psi\right) \in \SO ( 3 )$, denote by $\delta_x^{\left( k \right)}: \mathbb{C}
  \rightarrow \mathbb{C}^{2k+1}$ for the unique complex
  morphism sending $1 \in \mathbb{C}$ to the first (index-$\left(-k\right)$)
column of the Wigner $D$-matrix $D^{k}$ (detailed in~\ref{sec:app_rep}), i.e.,
  \begin{equation*}
    D_{\cdot,-k}^k \left( x \right)\!=\! \left( D_{-k,-k}^k \left( x \right), D_{-k+1,-k}^k \left( x \right), \dots, D_{k-1,-k}^k \left( x \right), D_{k,-k}^k \left( x \right)\right)^{\top}\!\in \!\mathbb{C}^{2k+1}.
  \end{equation*}
\item \emph{Generalized Intrinsic Model}: Define
  $\mathbb{W}^{\left( k \right)}$ as the top eigenspace of
  $T_h^{(k)}$, which by Theorem~\ref{thm:eval}
  and~\ref{thm:specgap}, is $(2k+1)$-dimensional. Set for
  every point $x\in\SO ( 3 )$ the map
\begin{equation}
\label{eq:main-morphism}
\varphi^{\left( k \right)}_x = \sqrt{1/(2k+1)}\cdot(\mathrm{ev}_x | \mathbb{W}^{\left( k \right)})^*: \mathbb{C} \rightarrow \mathbb{W}^{\left( k \right)},
\end{equation}
where $\mathrm{ev}_x: \mathcal{H} \rightarrow \mathbb{C}$
is the evaluation morphism at the point $x\in \SO ( 3 )$.
\end{itemize}

The main algebraic structure of the multi-frequency
intrinsic classification algorithm is summarized in the
following main theorem of this section.

\begin{theorem}
\label{thm:morph}
The morphism $\tau:\mathbb{C}^{2k+1}\rightarrow\mathcal{H}$
defined by
\begin{equation*}
  \begin{aligned}
    \tau:\mathbb{C}^{2k+1}&\longrightarrow\mathcal{H}\\
    v & \longmapsto \left( x\mapsto \sqrt{2k+1}\cdot
      \left(\delta_x^{\left( k \right)}\right)^{*} \left( v \right) \right)
  \end{aligned}
\end{equation*}
is an isomorphism between $\mathbb{C}^{2k+1}$ and $\mathbb{W}^{\left( k \right)} \subset
\mathcal{H}$ (as Hermitian vector spaces). Moreover, for
every $x\in \SO ( 3 )$ and $k=0,1,\dots$ there holds
\begin{equation}
\label{eq:composition-identity}
\tau \circ \delta^{\left( k \right)}_x= \varphi^{\left( k \right)}_x.
\end{equation}
\end{theorem}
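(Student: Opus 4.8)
The plan is to verify both assertions by exploiting the representation-theoretic structure of $\mathcal{H}_{-k}$ and the explicit description of the top eigenspace $\mathbb{W}^{(k)}$ as an isotypic component. First I would show that $\tau$ actually lands in $\mathbb{W}^{(k)}$. The map $x \mapsto (\delta_x^{(k)})^{*}(v)$ is, by definition of $\delta_x^{(k)}$, a linear combination $\sum_{m=-k}^{k} \bar v_m\, D^{k}_{m,-k}(x)$ of the entry functions in the $(-k)$th column of $D^{k}$. By \eqref{eq:wigner-D-SO2} with $n = -k$, each such function satisfies $D^{k}_{m,-k}(x \vartriangleleft h(\alpha)) = e^{\iota k \alpha} D^{k}_{m,-k}(x) = \rho_{-k}(h) D^{k}_{m,-k}(x)$, so these functions lie in $\mathcal{H}_{-k}$; moreover the multiplicative formula \eqref{eq:wigner-D-mult} shows that the span of $\{D^{k}_{m,-k}\}_{m=-k}^{k}$ is a left $\SO(3)$-invariant subspace isomorphic to the $(2k+1)$-dimensional irreducible representation of $\SO(3)$. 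By the multiplicity-one Theorem~\ref{thm:mult-one}, $\mathcal{H}_{k,-k}$ is the \emph{unique} copy of that irreducible representation inside $\mathcal{H}_{-k}$, so this span must equal $\mathcal{H}_{k,-k}$. Combining Theorem~\ref{thm:eval} and Theorem~\ref{thm:specgap} (top eigenvalue $\lambda_k^{(k)}(h)$, eigenspace dimension $2k+1$) with the fact that eigenspaces of $T_h^{(k)}$ are isotypic components (property~\eqref{item:2} plus Schur), we get $\mathbb{W}^{(k)} = \mathcal{H}_{k,-k}$, hence $\tau(\mathbb{C}^{2k+1}) \subseteq \mathbb{W}^{(k)}$.

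Next I would check $\tau$ is a Hermitian isomorphism. Injectivity is immediate since the $D^{k}_{m,-k}$ are linearly independent (they are entries of an irreducible matrix representation), and dimension count against $\dim \mathbb{W}^{(k)} = 2k+1$ gives surjectivity. For the isometry property I would compute $\langle \tau(v), \tau(w) \rangle_{\mathcal{H}} = (2k+1)\int_{S^2} \big(\sum_m \bar v_m D^{k}_{m,-k}(x)\big)\overline{\big(\sum_n \bar w_n D^{k}_{n,-k}(x)\big)}\,\mathrm{d}x$ and invoke the Schur orthogonality relations for the entries of the unitary irreducible representation $D^{k}$ of $\SO(3)$ (being careful about the normalization of Haar measure on $S^2$ versus $\SO(3)$ and the $\SO(2)$-fibre integration, which is exactly what the $\sqrt{2k+1}$ factor compensates), obtaining $\langle \tau(v), \tau(w)\rangle_{\mathcal{H}} = \langle v, w\rangle_{\mathbb{C}^{2k+1}}$.

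Finally, the composition identity \eqref{eq:composition-identity}. Here I would unwind both sides as maps $\mathbb{C} \to \mathbb{W}^{(k)}$. For $z \in \mathbb{C}$, $(\tau \circ \delta_x^{(k)})(z)$ is the function $y \mapsto \sqrt{2k+1}\,(\delta_y^{(k)})^{*}(\delta_x^{(k)}(z)) = \sqrt{2k+1}\, z\, \langle D^{k}_{\cdot,-k}(x), D^{k}_{\cdot,-k}(y)\rangle_{\mathbb{C}^{2k+1}}$, which by unitarity of $D^{k}$ and \eqref{eq:wigner-D-mult} equals $\sqrt{2k+1}\, z\, D^{k}_{-k,-k}(y^{-1}x)$ (up to taking adjoints/conjugates consistently). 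On the other side, $\varphi_x^{(k)}(z) = \sqrt{1/(2k+1)}\, z\,(\mathrm{ev}_x|\mathbb{W}^{(k)})^{*}(1)$ is the unique element $w \in \mathbb{W}^{(k)}$ with $\langle s, w\rangle_{\mathcal{H}} = \sqrt{1/(2k+1)}\,\bar z\, \overline{s(x)}$ for all $s \in \mathbb{W}^{(k)}$, i.e. $w$ is (a scalar multiple of) the reproducing kernel of $\mathbb{W}^{(k)}$ at $x$. So the statement reduces to: the reproducing kernel of $\mathbb{W}^{(k)} = \mathcal{H}_{k,-k}$, equipped with $\langle\cdot,\cdot\rangle_{\mathcal{H}}$, is $(2k+1)$ times $D^{k}_{-k,-k}(y^{-1}x)$ — which follows from the Schur orthogonality computation already done, since $\{\sqrt{2k+1}\,D^{k}_{m,-k}\}_m$ is an orthonormal basis of $\mathbb{W}^{(k)}$ and the reproducing kernel is $\sum_m (\sqrt{2k+1}\,D^{k}_{m,-k}(y))\overline{(\sqrt{2k+1}\,D^{k}_{m,-k}(x))}$. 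I expect the main obstacle to be purely bookkeeping: getting the conventions for $(\cdot)^{*}$, complex conjugation, the direction of the $\SO(3)$-action, and the precise normalization constant relating the $S^2$-Haar measure in \eqref{eq:inner_product} to the $\SO(3)$-Haar measure in Schur orthogonality all consistent, so that the factors of $\sqrt{2k+1}$ cancel exactly rather than leaving a stray constant.
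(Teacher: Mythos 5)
Your proposal is correct and follows the paper's overall skeleton (identify $\mathbb{W}^{(k)}$ with the unique isotypic component $\mathcal{H}_{k,-k}$ of $\mathcal{H}_{-k}$, show $\tau$ maps $\mathbb{C}^{2k+1}$ unitarily onto it, then verify \eqref{eq:composition-identity}), but it executes the two middle steps with genuinely different tools. For unitarity, the paper first establishes that $\alpha=(2k+1)^{-1/2}\tau$ intertwines the $\SO(2)$- and $\SO(3)$-actions (equations \eqref{eq:equivariance-SO2}--\eqref{eq:equivariance-SO3}), then concludes via Schur's lemma that $\tau^*\circ\tau$ is a scalar and pins the scalar by computing $\mathrm{Tr}(\tau^*\circ\tau)=2k+1$; you instead evaluate $\langle\tau(v),\tau(w)\rangle_{\mathcal{H}}$ outright using Schur orthogonality of the $D^k$-entries, which is more explicit at the cost of the Haar-normalization bookkeeping you flagged. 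For \eqref{eq:composition-identity}, the paper starts from the tautology $\mathrm{ev}_x\circ\alpha=(\delta_x^{(k)})^*$ and does adjoint algebra using $\tau\circ\tau^*=\mathrm{Id}_{\mathbb{W}^{(k)}}$; your reproducing-kernel argument instead identifies $\varphi_x^{(k)}(1)$ with a normalized reproducing kernel, expands it in the orthonormal basis produced by your Schur-orthogonality step, and matches against the explicit form $(\tau\circ\delta_x^{(k)})(1)(y)=\sqrt{2k+1}\,D^k_{-k,-k}(y^{-1}x)$ — longer, but it makes the geometric content of $\varphi_x^{(k)}$ visible and reuses the earlier computation. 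One sign caveat worth pinning down, beyond the general one you already flagged: with the paper's sesquilinear conventions $(\delta_x^{(k)})^*(v)=\sum_m v_m\,\overline{D^k_{m,-k}(x)}$, not $\sum_m\bar v_m\,D^k_{m,-k}(x)$, so the image of $\tau$ is spanned by the conjugate entries $\overline{D^k_{m,-k}}$ (which lie in $\mathcal{H}_{-k}$ and carry $\overline{D^k}\cong D^k$), whereas the raw entries $D^k_{m,-k}$ lie in $\mathcal{H}_{k}$; and $e^{\iota k\alpha}=\rho_{k}(h(\alpha))$, not $\rho_{-k}(h(\alpha))$. These two slips happen to cancel so your conclusion ``lands in $\mathcal{H}_{-k}$'' is correct, but each intermediate line is off by a complex conjugate, which would bite in the Schur-orthogonality and reproducing-kernel computations if carried through uncorrected.
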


The proof of Theorem~\ref{thm:morph} is deferred to \ref{sec:proof-theor-refthm-morph}. Our proof extends the arguments in the proof of
  \cite[Theorem~5]{hadani2011representation2}. A key
  observation is that the top eigenvector $\mathcal{H}\left(
    \lambda_{k}^{\left( k \right)}\left( h
    \right) \right)$ coincides with the isotypic subspace
  $\mathcal{H}_{k,-k}$ (see
  Section~\ref{sec:gen-para-transp-op}). Furthermore, Theorem~\ref{thm:morph} reveals the correspondence
between the generalized extrinsic and intrinsic models, in
terms of the viewing angle information they encode. This is
summarized in the following result.

\begin{theorem}
\label{thm:inner-product}
For every pair of frames $x, y \in \SO ( 3 )$, we have
\begin{equation}
\label{eq:inner-product}
\left|\langle \varphi^{\left( k \right)}_x(v),
  \varphi^{\left( k \right)}_y(u)\rangle_{\mathbb{W}^{\left(
        k \right)}}  \right | = \left( \frac{\left\langle
      \pi(x), \pi(y) \right\rangle + 1}{2}
\right)^k
\end{equation}
for any choice of unit-norm complex numbers $v, u\in \mathbb{C}$.
\end{theorem}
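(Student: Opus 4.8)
The plan is to transport the inner product from $\mathbb{W}^{(k)}$ back to $\mathbb{C}^{2k+1}$ via Theorem~\ref{thm:morph}, and then reduce \eqref{eq:inner-product} to the evaluation of a single entry of a Wigner $D$-matrix at the ``relative rotation'' $y^{-1}x$.

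\emph{Step 1 (the isomorphism $\tau$ is an isometry).} The morphism $\alpha=(2k+1)^{-1/2}\tau$ intertwines the $\SO(3)$-actions on $\mathbb{C}^{2k+1}$ and on its image $\mathcal{H}_{k,-k}$, which is irreducible; hence Schur's Lemma forces $\tau^{*}\circ\tau$ to be a scalar multiple of the identity on $\mathbb{C}^{2k+1}$, and the normalization $\mathrm{Tr}(\tau^{*}\circ\tau)=2k+1$ computed in the proof of Theorem~\ref{thm:morph} pins it down to $\tau^{*}\circ\tau=\mathrm{Id}$. Combining this with \eqref{eq:composition-identity} and the definition $\delta^{(k)}_x(v)=v\,D^{k}_{\cdot,-k}(x)$ gives
\begin{equation*}
\langle \varphi^{(k)}_x(v),\varphi^{(k)}_y(u)\rangle_{\mathbb{W}^{(k)}}
=\langle \delta^{(k)}_x(v),\delta^{(k)}_y(u)\rangle_{\mathbb{C}^{2k+1}}
=v\bar u\sum_{m=-k}^{k}D^{k}_{m,-k}(x)\,\overline{D^{k}_{m,-k}(y)}.
\end{equation*}

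\emph{Step 2 (collapsing the sum).} Since $D^{k}$ is unitary, $\overline{D^{k}_{m,-k}(y)}=D^{k}_{-k,m}(y^{-1})$, so the last sum is the $(-k,-k)$ entry of $D^{k}(y^{-1})D^{k}(x)$, which by the multiplicative formula \eqref{eq:wigner-D-mult} equals $D^{k}_{-k,-k}(y^{-1}\vartriangleright x)$. Thus, for unit-norm $v,u$, the left-hand side of \eqref{eq:inner-product} equals $\bigl|D^{k}_{-k,-k}(y^{-1}x)\bigr|$. Writing $y^{-1}x=x(\varphi',\vartheta',\psi')$ in Euler angles and applying the factorization \eqref{eq:wigner-D-d} with the closed form \eqref{eq:wigner-d-top-left}, I get $D^{k}_{-k,-k}(y^{-1}x)=e^{\iota k\varphi'}\bigl((1+\cos\vartheta')/2\bigr)^{k}e^{\iota k\psi'}$, whose modulus is $\bigl((1+\cos\vartheta')/2\bigr)^{k}$.

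\emph{Step 3 (identifying $\cos\vartheta'$).} By \eqref{eq:euler-angle} the $(3,3)$ entry of $x(\varphi',\vartheta',\psi')$ is $\cos\vartheta'$; on the other hand $\pi(z)$ is the third column of $z\in\SO(3)$, i.e.\ $\pi(z)=z\,\mathbf{e}_3$, so the $(3,3)$ entry of $y^{-1}x$ equals $\mathbf{e}_3^{\top}y^{\top}x\,\mathbf{e}_3=\langle x\,\mathbf{e}_3,y\,\mathbf{e}_3\rangle=\langle\pi(x),\pi(y)\rangle$. Substituting $\cos\vartheta'=\langle\pi(x),\pi(y)\rangle$ into Step 2 yields \eqref{eq:inner-product}.

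I do not expect a serious obstacle: once Theorem~\ref{thm:morph} is available, the argument is bookkeeping with Wigner-matrix identities. The two points needing care are (i) confirming, via Schur plus the trace normalization of Theorem~\ref{thm:morph}, that $\tau$ is an \emph{exact} (not merely scaled) isometry, so that the inner product passes cleanly between $\mathbb{W}^{(k)}$ and $\mathbb{C}^{2k+1}$; and (ii) the conjugation/sign conventions in \eqref{eq:wigner-D-d}, in particular the unitarity identity $\overline{D^{k}_{m,-k}(y)}=D^{k}_{-k,m}(y^{-1})$ and the evaluation $d^{k}_{-k,-k}(\vartheta')=\bigl((1+\cos\vartheta')/2\bigr)^{k}$ from \eqref{eq:wigner-d-top-left}.
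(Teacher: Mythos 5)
Your proposal is correct and takes essentially the same route as the paper: reduce via Theorem~\ref{thm:morph} to an inner product of $\delta^{(k)}$-images in $\mathbb{C}^{2k+1}$, collapse the sum to a single Wigner $D$-matrix entry using unitarity and the multiplicative formula, then evaluate $d^k_{-k,-k}$ and identify the Euler $\vartheta$ of the relative rotation with the viewing-angle cosine. The only cosmetic differences are that you make the Schur-plus-trace argument for $\tau^*\circ\tau=\mathrm{Id}$ fully explicit (the paper leaves it to a citation), and you obtain $D^k_{-k,-k}(y^{-1}x)$ where the paper writes $D^k_{-k,-k}(x^{-1}y)$ -- these agree in modulus, so nothing changes.
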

The proof of Theorem~\ref{thm:inner-product} is deferred to \ref{sec:proof-theor-refthm-inner-product}.

\begin{remark}
  When $k=1$, Theorem~\ref{thm:morph} and
  Theorem~\ref{thm:inner-product} reduce to
  \cite[Theorem~5]{hadani2011representation2} and
  \cite[Theorem~6]{hadani2011representation2}, respectively,
  up to a different scaling constant for $\tau$. The
  difference arises from our alternative, explicit construction of the
  isomorphism $\tau$ using Wigner $D$-matrices.
\end{remark}

\section{Interpretation of the Theoretical Results for
  Multi-Frequency Class Averaging}
\label{sec:theor-analys-multi}

In this section, we interpret the MFCA algorithm stated in
Section~\ref{sec:mult-freq-class-aver} using the theoretical
results established in Section~\ref{sec:main-results}, and
provide conceptual explanations for the admissibility of MFCA in the noiseless regime.

First, under the assumption that the projection images
$\left\{ I_i\mid 1\leq i\leq N \right\}$ are produced from
orthonormal frames $\left\{ x_i\mid 1\leq i\leq N \right\}$
sampled i.i.d. uniformly on $\SO \left( 3 \right)$ with
respect to the normalized Haar measure, we view $\frac{1}{N} H^{\left( k \right)}$,
the scaled class averaging matrix at frequency $k$ defined in
Section~\ref{sec:mult-freq-class-aver}, as the discretization of the local parallel transport operator $T_h^{\left( k \right)}$. We know from
standard results \cite[Theorem~3.1]{KG2000} that the
 eigenvalues of $\frac{1}{N}H^{\left( k \right)}$ converges to the 
eigenvalues of the generalized localized parallel transport operator $T_h^{\left( k \right)}$ defined in
\eqref{eq:defn-gen-parallel-transp-op-loc} as the number of
samples $N$ goes to infinity and the opening angle $\alpha$ is sufficiently small. In
particular, this implies that for large sample size $N$, the
spectral gap of $\frac{1}{N}H^{\left( k \right)}$ converges to the
spectral gap of $T_h^{\left( k \right)}$, which, by
Theorem~\ref{thm:eval} and Theorem~\ref{thm:specgap}, is
roughly of size $\left( 1+k \right)h^2/4$ for $h \ll \frac{1}{k+1}$ and occurs between the $\left( 2k+1 \right)^{\text{th}}$ and the $\left( 2k+2
\right)^{\text{th}}$ eigenvalues of $H^{\left( k \right)}$ (ranked in
decreasing order). 

Moreover, as argued in
\cite[Theorem~2]{hadani2011representation2}, the MFCA
embedding $\Psi^{\left( k \right)}$ defined in
\eqref{eq:embedding-freq-k} corresponds to the morphism
\eqref{eq:main-morphism} in the following form:
\begin{align}
\label{eq:emp_space}
  \frac{\Psi^{\left( k \right)}\left( x_i\right)}{\left\| \Psi^{\left( k \right)}\left( x_i\right) \right\|} \approx \varphi_{x_i}^{\left( k \right)}\left( 1
  \right),\quad\textrm{for all $x_i\in \SO \left( 3 \right)$},
\end{align}
where $\left\| \cdot \right\|$ stands for the standard norm
on $\mathbb{C}^{2k+1}$. 
Combining~\eqref{eq:emp_space} with
Theorem~\ref{thm:inner-product} 
provides the justification for using $A^{(k)}$ to identify similar viewing angles, 
\begin{align}
\label{eq:approx-identity}
  A_{ij}^{\left( k \right)}=\frac{\left| \left\langle \Psi^{\left( k \right)} \left( I_i \right), \Psi^{\left( k \right)}
  \left( I_j \right) \right\rangle \right|}{\left\| \Psi^{\left( k \right)}
  \left( I_i \right) \right\|\left\| \Psi^{\left( k \right)} \left( I_j \right)
  \right\|} \approx \left|\langle \varphi^{\left( k \right)}_x(1),
  \varphi^{\left( k \right)}_y(1)\rangle_{\mathbb{W}^{\left(
        k \right)}}  \right | = \left( \frac{\left\langle
  \pi \left( x_i \right), \pi \left( x_j \right) \right\rangle+1}{2} \right)^k. 
\end{align}
This relation is demonstrated in the top rows of Figures~\ref{fig:prob_emb_scatter} and~\ref{fig:cryo_scatter}. In fact, Theorem~\ref{thm:inner-product} tells us that the affinity measure $S_{ij}^{\left( k \right)}$ defined in \eqref{eq:affinity-freq-k} coincides with the cosine value for the angle between the two viewing directions in the noiseless regime. The form of the approximation identity \eqref{eq:approx-identity} also suggests avoiding directly taking the $k$th root of the correlation between $\Psi^{\left( k \right)}\left( I_i \right)$ and $\Psi^{\left( k \right)}\left( I_j \right)$ as in \eqref{eq:affinity-freq-k} and \eqref{eq:all-frequency-affinity} since this approach loses control of the numerical relative error when $A^{(k)}_{ij}$ is close to 0. In contrast, it is advantageous to use the multiplicative forms \eqref{eq:affinity-freq-k-prac} and \eqref{eq:all-frequency-affinity-prac} which do not worsen the relative error. The logarithm of the combined affinity $A^\text{All}$ has the following relation with the viewing angles,
\begin{equation}
    \label{eq:logAall}
    \log \left( A^{\text{All}}_{ij} \right) = \sum_{k = 1}^{k_\mathrm{max}} \log \left( A^{(k)}_{ij} \right) \approx \frac{k_\mathrm{max}(k_\mathrm{max} + 1)}{2} \log \left(\frac{\left\langle
  \pi \left( x_i \right), \pi \left( x_j \right) \right\rangle+1}{2} \right).
\end{equation}
Using $A^\text{All}$ or $\log \left(A^\text{All} \right)$ makes small viewing angles much more prominent in the numerical procedures. One may well expect other linear combinations of the $A_{ij}^{\left( k \right)}$'s, which are degree-$k_{\mathrm{max}}$ polynomials of the (cosine value of the) viewing angle. We leave these further explorations to future work.

\section{Analysis under Probabilistic Models}
\label{sec:noise_model}
In this section, we discuss the benefit of using $A^{(k)}$ with $k > 1$ to identify nearest neighbors when the measurement graph is perturbed by noise. 
To this end, we use the random rewiring model~\cite{singer2011viewing} for the entries of $H^{(1)}$ in Section~\ref{sec:rrm_analysis} and extend it to incorporate small angular perturbation in Section~\ref{sec:angle_perturb}. 
We start by randomly generating $N$ orthonormal frames $x_1, x_2, \dots, x_N$ uniformly sampled from $\SO(3)$ according to the Haar measure. 
Each frame $x_i$ can be represented by a $ 3\times 3$ orthogonal matrix $R_i = [ R_i^1, R_i^2, R_i^3 ]$ and $\det(R_i) = 1$. 
We identify the third column $R_i^3$ as the \emph{viewing angle} $\pi(x_i)$ of the molecule. The first two columns $R^1_i$ and $R^2_j$ form an orthonormal basis for the plane in $\mathbb{R}^3$ perpendicular to the viewing angle $\pi(x_i)$. If the viewing angles for two projection images belong to a small spherical cap with opening angle $\alpha$, then we connect the two points in the graph (i.e.\ $(i, j) \in E$ if $\langle \pi(x_i), \pi(x_j) \rangle > \cos \alpha$). If $x_i$ and $x_j$ are two frames with the same viewing angle, $\pi(x_i) = \pi(x_j)$, then $R_i^1, R_i^2$ and $R_j^1, R_j^2$ are two orthogonal bases for the same plane and the rotation matrix $R_i^{-1}R_j$ has the following form:
\begin{equation}
    \label{eq:align1}
    R_i^{-1}R_j = \left( \begin{array}{ccc}
         \cos \theta_{ij} & -\sin \theta_{ij} & 0 \\
         \sin \theta_{ij} & \cos \theta_{ij} & 0 \\
         0 & 0 & 1 
    \end{array}\right ).
\end{equation}
When the viewing angles are slightly different,~\eqref{eq:align1} holds approximately. The optimal in-plane rotational angle $\theta_{ij}$ provides a good approximation to the angle $\theta_{ij}$ that ``aligns'' the orthonormal bases for the planes $\pi(x_i)^\perp$ and $\pi(x_j)^\perp$. Therefore, if $\langle \pi(x_i), \pi(x_j) \rangle$ is close to 1, the angle $\theta_{ij}$ is given by 
\begin{equation}
    \label{eq:align2}
    \theta_{ij} = \argmin_{\theta \in [0, 2\pi)} \| R_i \rho(\theta) - R_j \|_{\mathrm{F}}, \quad \text{with } \rho(\theta) = \left( \begin{array}{ccc}
         \cos \theta & -\sin \theta & 0 \\
         \sin \theta & \cos \theta & 0 \\
         0 & 0 & 1 
    \end{array}\right ).
\end{equation}
In other words, the ground truth local parallel transport data is computed by aligning the local frames within the connected neighborhood, determined by the entries of the matrix $R_i^{-1}R_j$: 
\begin{align}
\cos \theta_{ij} & = \frac{ \left(R_i^{-1}R_j \right)_{11} + \left(R_i^{-1} R_j\right)_{22} }{\sqrt{\left[ \left(R_i^{-1}R_j \right)_{11} + \left(R_i^{-1} R_j\right)_{22}\right]^2 + \left[ \left(R_i^{-1}R_j \right)_{21} - \left(R_i^{-1} R_j\right)_{12} \right]^2  }}, \nonumber \\
\sin \theta_{ij} & = \frac{ \left(R_i^{-1}R_j \right)_{21} - \left(R_i^{-1} R_j\right)_{12}}{\sqrt{\left[ \left(R_i^{-1}R_j \right)_{11} + \left(R_i^{-1} R_j\right)_{22}\right]^2 + \left[ \left(R_i^{-1}R_j \right)_{21} - \left(R_i^{-1} R_j\right)_{12} \right]^2 }}. 
\label{eq:align}
\end{align}

\subsection{Random Rewiring Model}
\label{sec:rrm_analysis}
Starting from the clean neighborhood graph constructed above, we perturb the graph based on the following process: with probability $p$, we keep the clean edge and the associated transport data $\theta_{ij}$; and with probability $1-p$, we remove the edge $(i, j)$ and randomly rewire $i$ or $j$ with a vertex drawn uniformly at random from the remaining vertices that are not already connected to $i$ or $j$. We assume that if the link between $i$ and $j$ is a random link, then $\theta_{ij} = \phi_{ij}$, which is uniformly distributed over $[0, 2\pi)$. Our model assumes that the underlying graph of links between noisy data points is a small-world graph~\cite{watts1998collective} on the sphere, with edges being randomly rewired with probability $1-p$. The alignments take their correct values for true links and random values for the rewired edges. The parameter $p$ controls the signal to noise ratio of the graph connection where $p = 1$ indicates the clean graph.   

The matrix $H^{(k)}$ is a random matrix under this model with 
\begin{equation}
    \label{eq:Hk}
    H_{ij}^{(k)} = \begin{cases}
    e^{\imath k \theta_{ij}}, & \text{if } (i, j) \in E \text{ and with probability } p,  \\
    e^{\imath k \phi_{ij}} , & \text{if } (i, j) \notin E \text{ and with probability } \frac{(1-p)\bar{D}}{N - \bar{D} }. 
      \end{cases}
\end{equation}
Since the expected value of the random variable $e^{\imath k \phi}$ vanishes for $\phi \sim \mathrm{Uniform}[0, 2\pi)$, the expected value of the matrix $H^{(k)}$ is 
\begin{equation}
\mathbb{E} H^{(k)} = p H^{(k)}_{\text{clean}},
\label{eq:Hk_exp}
\end{equation}
where $H^{(k)}_{\text{clean}}$ is the clean matrix that corresponds to $p = 1$ obtained in the case that all links and angles are set up correctly. At each frequency $k$, the matrix $H^{(k)}$ can be decomposed into 
\begin{equation}
H^{(k)} = p H^{(k)}_{\text{clean}} + R^{(k)},  
\end{equation}
where $R^{(k)}$ is a random matrix whose elements are 
\begin{equation}
    \label{eq:Rk}
    R_{ij}^{(k)} = \begin{cases}
    (1 - p) e^{\imath k \theta_{ij}}, & \text{if } (i, j) \in E \text{ and with probability } p,  \\
      - p e^{\imath k \theta_{ij}}, & \text{if } (i, j) \in E  \text{ and with probability } 1-p, \\
      e^{\imath k \phi_{ij}} , & \text{if } (i, j) \notin E \text{ and with probability } \frac{(1-p)\bar{D}}{N - \bar{D} }, 
      \end{cases}
\end{equation}
and $\bar{D}$ is the average degree of the clean neighborhood graph. The elements in $R^{(k)}$ are independent zero mean random variables with finite moments, since the elements of $R^{(k)}$ are bounded for $1 \leq k \leq k_\text{max}$. 

We use $\| M\|$ to denote the spectral norm of a matrix $M$. Since the underlying graph connectivity for all $R^{(k)}$ is identical and the mean and variance of $R_{ij}^{(k)}$ are identical across $k$, the quantity $\| R^{(k)}\|$ does not change over frequency index $k$. To find an upper bound on $\| R^{(k)} \|$, we take $p = 0$, where the matrix $R^{(k)}$ represents a sparse random graph. Since the surface area of a spherical cap with opening angle $\alpha$ is $ 4\pi \sin^2 \frac{\alpha}{2}$ and $N$ points are uniformly distributed over the sphere, the average degree of the random graph is $N \sin^2 \frac{\alpha}{2}$.    
Adapting~\cite[Theorem 2.1]{khorunzhy2001sparse} to our case, we can show that $\| R^{(k)} \| \leq 2 \sqrt{N} \sin \frac{\alpha}{2} $ with high probability. In Figure~\ref{fig:hist_spec}, we can see that the eigenvalues of $R^{(k)}$ follows Wigner's semicircle law~\cite{wigner1955,wigner1958}.

For the following discussion, we denote $ k_\mathrm{max} = \left \lfloor \frac{1}{h} -1 \right \rfloor$. The ordered eigenvalues for $pH^{(k)}_\text{clean}$ are $\ell^{(k)}_1 \geq \ell^{(k)}_2 \geq \dots \geq \ell^{(k)}_N$, and the ordered eigenvalues for $H^{(k)}$ are $ \tilde{\ell}^{(k)}_1 \geq \tilde{\ell}^{(k)}_2 \geq \dots \geq \tilde{\ell}^{(k)}_{2k+2}, \dots \geq \tilde{\ell}^{(k)}_N$, for $k = 1, \dots, k_\text{max}$. The spectral gap after the $(2k+1)^{\text{th}}$ eigenvalue for $pH_\text{clean}^{(k)}$ is denoted as $\delta_k = \ell^{(k)}_{2k+1} - \ell^{(k)}_{2k+2}$. We note that $ \{\ell_i^{(k)} \}_{i = 1}^{2k+1} \approx pN \lambda_k^{(k)}$ and $ \{\ell_i^{(k)} \}_{i = 2k+2}^{4k+3} \approx pN \lambda_{k+2}^{(k)}$, and $\delta_k \approx pN G^{(k)}$, since $\frac{1}{N} H^{(k)}_{\text{clean}}$ is a discretization of the operator $T_h^{(k)}$. We consider the following three scenarios for the discussion of the stability of the algorithm under noise perturbation: (1) small noise regime ($  \delta_1 \geq 2\| R^{(1)} \| $), (2) medium noise regime ($ \delta_1 < 2\| R^{(1)} \| \leq \delta_{k_\text{max}} $), and  (3) large noise regime ($ \delta_{k_\text{max}} < 2\| R^{(1)} \| $). 

\noindent \textit{$\bullet$ Small noise regime.} This noise regime was previously considered in~\cite{singer2011viewing} to determine the threshold probability $p_c$ for the approximation of the top three eigenvectors of $H^{(1)}$ and the top three eigenvectors of $H^{(1)}_{\text{clean}}$ under the random rewiring model. 
According to Corollary~\ref{cor:2}, the spectral gap gets larger
for higher frequency index $k$.
This implies that the linear space spanned by the first $\left( 2k+1 \right)$ eigenvectors of $H^{\left( k \right)}$ is closer to the top eigenspace of $T_h^{\left( k \right)}$, since the approximation error is inversely proportional to the spectral gap according to the renowned Davis--Kahan theorem~\cite{DK1970,YWS2014}. This also explains the choice of extracting the top $\left( 2k+1 \right)$ eigenvectors of $H^{\left( k \right)}$ in single frequency-$k$ class averaging.

\noindent \textit{$\bullet$ Medium noise regime.} In this situation, we can find a $\tilde{k}$ such that for all $ \tilde{k} \leq k  \leq k_\text{max}$, $ \delta_k > 2 \| R^{(1)} \| = 2\| R^{(k)}\|$. In addition, we can show that $ \ell^{(k)}_{4k+3} \!>\! \| R^{(k)}\|$ for $k = 1, \dots, k_\text{max}$. This is because we have $\lambda_{k+1}^{(k)} > G^{(k)}$ for $k \leq k_\text{max}$ and $\lambda_{k+1}^{(k)}$ decreases as $k$ increases according to Theorem~\ref{thm:eigenvalue-asym} and Theorem~\ref{thm:specgap}. Using the same argument as in the small noise regime, we can justify the benefit of using the top $(2k+1)$ eigenvectors of $H^{(k)}$ at $k > 1$.

\noindent \textit{$\bullet$ Large noise regime. } If we further decrease $p$, the spectral norm of $ R^{(k)}$ becomes larger than the spectral gap $\delta_{k_\text{max}}$. According to Davis-Kahan theorem, it seems impossible to recover the eigenvectors if the eigenvalue perturbation is too large. However, we observe that under this situation, the subspace spanned by the top $2k+1$ eigenvectors of $H^{(k)}$ still has  non-trivial correlation with the subspace spanned by the top $2k+1$ eigenvectors of $H^{(k)}_\text{clean}$, if $ \tilde{\ell}_{2k+1} > \| R_k^{(k)}\|$ and in other words $ \ell_{2k+1} > \frac{1}{2}\| R_k^{(k)} \|$. This phenomenon is similar to the phase transition for eigenvalues and eigenvectors of a low rank matrix under the additive perturbation of a Gaussian  Wigner matrix in~\cite[Section 3.1]{benaych2011eigenvalues}, although our underlying clean matrices $H^{(k)}_{\text{clean}}$ are full rank.
It seems that the eigenvectors of the unperturbed matrix are possible to recover even when the spectral gap is much smaller than that required by Davis-Kahan. In this case, the Davis-Kahan theorem is insufficient to bound the distance between the subspaces since it does not consider the nature of the perturbation. It is useful to use perturbation bounds that take into account the nature of the perturbation such as the upper bound on the entry-wise deviation of the eigenvector in~\cite[Theorem 8]{EBW2017}. The theorem only applies to the situation with $\ell_{2k+1} > \| R^{(k)} \|$. According to the theorem, both $\delta_k$ and $\ell_{2k+1} - \| R^{(k)} \|$ appear in the denominators of the terms in the upper bound for the entry-wise deviation of the eigenvector. As $k$ increases, the spectral gap $\delta_k$ increases, while the term $\ell_{2k+1} - \| R^{(k)} \|$ decreases based on Corollaries~\ref{cor:1},~\ref{cor:2}, and Theorem~\ref{thm:specgap}. This implies that the upper bound of the deviation in~\cite[Theorem 8]{EBW2017} will decrease initially as $k$ increases from 1 because the reduction in the term that contains $\delta_k$ dominates, and then it will increase when the increments in the terms containing $\ell_{2k+1} - \| R^{(k)} \|$ becomes dominant. 
We empirically observe that the accuracy of the affinity measure $A^{(k)}$ increases with increasing $k$ up to a critical cutoff $k_c$ as detailed in Section~\ref{sec:rrm}. We identify $k_c$ as the point when $ \ell^{k_c}_{2k_c+2} \geq \frac{1}{2} \| R^{(k_c)} \|$ and $ \ell^{(k_c+1)}_{2k_c+4} < \frac{1}{2} \| R^{(k_c)} \|$, which corresponds to when $\tilde{\ell}_{2k+2}^{(k)}$ becomes very close to $ \| R^{(k)} \|$. The estimation of the top eigenspace gets less accurate when $k$ increases beyond $k_c$, which will result in worse classification results using $A^{(k)}$ (see Figure~\ref{fig:rrm_vark}). The eigenvector perturbation of a full-rank matrix with additive random matrix is still an open problem and we will provide theoretical justification for our observations in the future. 

Based on the discussions above, we see the benefit of using $A^{(k)}$ for $k > 1$ to select nearest neighbors because the underlying embedding $\Psi^{(k)}$ can be more stable than $\Psi^{(1)}$. Under additive noise perturbation in~\eqref{eq:Hk}, each embedding $\Psi^{(k)}$ is perturbed randomly, but they have non-trivial correlation with the corresponding true eigenspace when the noise is not too large. In addition, $\Psi^{(k)}(i)$ encodes the viewing direction information in terms of the degree-$k$ polynomial of the frame $x_i$ and the underlying information on $x_i$ is perturbed differently at different $k$ even though the noise is not independent. The combined score is able to identify pairs that have consistently high affinities across $k$ and filter out pairs that only have a couple of high scores across $k$.

\subsection{Random Rewiring Model with Small Angular Errors}
\label{sec:angle_perturb}
We extend the random rewiring model in Section~\ref{sec:rrm_analysis} to incorporate small angular errors in the pairwise alignment angles for the correctly connected pairs. Specifically, we consider additive errors in the angle,
\begin{equation}
    \label{eq:true_angle_perturb}
    \tilde{\theta}_{ij} = \theta_{ij} + \varepsilon_{ij},
\end{equation}
where $\varepsilon_{ij}$ are independently drawn from a distribution $\gamma$ on the interval $[0, 2\pi)$. We also assume that $\mathbb{E}(\varepsilon_{ij}) = 0\; \text{mod} \; 2\pi$. We can evaluate $c_k = \mathbb{E}(e^{\imath k \varepsilon})$ for $\varepsilon\sim \gamma([0, 2\pi))$. The matrix $H^{(k)}$ is a random matrix under this model with
\begin{equation}
    \label{eq:Hk2}
    H_{ij}^{(k)} = \begin{cases}
    e^{\imath k \left(\theta_{ij} + \varepsilon_{ij}\right) }, & \text{if } (i, j) \in E \text{ and with probability } p,  \\
    e^{\imath k \phi_{ij}} , & \text{if } (i, j) \notin E \text{ and with probability } \frac{(1-p)\bar{D}}{N - \bar{D} }. 
      \end{cases}
\end{equation}
Since the expected value of the random variable $e^{\imath k \phi}$ vanishes for $\phi \sim \mathrm{Uniform}[0, 2\pi)$, the expected value of the matrix $H^{(k)}$ is 
\begin{equation}
\label{eq:Hk2_exp}
\mathbb{E} H^{(k)} = c_k p H^{(k)}_{\text{clean}},
\end{equation}
where $H^{(k)}_{\text{clean}}$ is the clean matrix that corresponds to $p = 1$ obtained in the case that all links and angles are set up correctly. At each frequency $k$, the matrix $H^{(k)}$ can be decomposed into 
\begin{equation}
H^{(k)} = c_k p H^{(k)}_{\text{clean}} + R^{(k)},  
\end{equation}
where $R^{(k)}$ is a random matrix whose elements are 
\begin{equation}
    \label{eq:Rk2}
    R_{ij}^{(k)} = \begin{cases}
    (e^{\imath k \varepsilon_{ij}} - c_k p) e^{\imath k \theta_{ij}}, & \text{if } (i, j) \in E \text{ and with probability } p,  \\
      - c_k p e^{\imath k \theta_{ij}}, & \text{if } (i, j) \in E  \text{ and with probability } 1-p, \\
      e^{\imath k \phi_{ij}} , & \text{if } (i, j) \notin E \text{ and with probability } \frac{(1-p)\bar{D}}{N - \bar{D} }. 
      \end{cases}
\end{equation}
The analysis follows the steps in Section~\ref{sec:rrm_analysis} and $\| R^{(k)} \| \leq 2 \sqrt{N} \sin \frac{\alpha}{2}$ with high probability. 
Comparing Eq.~\eqref{eq:Hk2_exp} with Eq.~\eqref{eq:Hk_exp}, we find that the main difference is that the eigenvalues of $\mathbb{E} H^{(k)}$ are scaled by $c_k$ at frequency $k$. The condition for the spectral algorithm to work is that the spectral gap $c_k pNG^{(k)}$ and the top eigenvalue $c_k p N \lambda^{(k)}_k$ are sufficiently large compared with $\| R^{(k)} \|$. For a well concentrated distribution $\gamma$, we can first evaluate $c_k$ and then determine the critical cutoff frequency $k_c$ that satisfy the condition. With the same $p$ in the random rewiring model, $k_c$ gets smaller in the presence of additional small angular errors since $c_k < 1$. In Section~\ref{sec:rrm}, we show the performance of the algorithms on a couple of examples where the angular noise follows a von Mises distribution.  

\subsection{Discussions}
In the previous two models, we only consider independent edge noise, i.e., the entries in $R^{(k)}$ for a fixed $k$ are independent. Across different frequencies, the entries $R_k$ are dependent through the relations of the irreducible representations of the angles ($\theta_{ij}$, $\phi_{ij}$, and $\varepsilon_{ij}$) and the graph connectivity. We note that these are simplified models for illustrating the benefits of using $A^{(k)}$ for $k>1$. In the application to cryo-EM 2-D image analysis, the edge perturbations are induced by the independent noise from each image. In this case, for fixed frequency, the entries in $R^{(k)}$ becomes dependent since the edge connections and alignments are affected by the noise in each image node. Still we observe similar benefits of using $A^{(k)}$ for $k > 1$ with the cryo-EM class averaging experiments detailed in Section~\ref{sec:exp_cryo}. We leave the analysis of node level noise to future work.  In addition, the current analysis focuses on data points that are uniformly distributed on the manifold. For non-uniformly distributed data points, different normalization techniques introduced in diffusion maps~\cite{coifman2006diffusion} are needed to compensate for the non-uniform sampling density.

\begin{figure}
	\captionsetup[subfigure]{oneside,margin={0.03cm,0cm}}
    \centering
    \subfloat[$H^{(1)}$, $p = 0.5$]{
    \includegraphics[width = 0.23 \textwidth ]{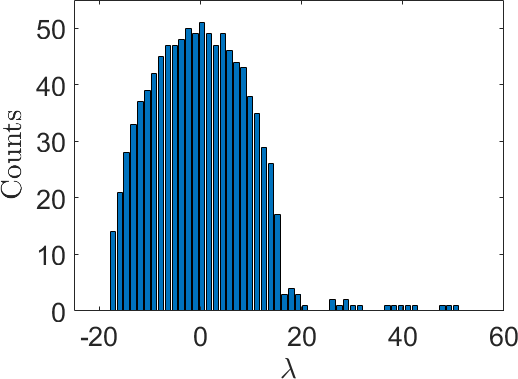}
    \label{fig:hist_spec_k1}
    }
    \subfloat[$H^{(4)}$, $p = 0.5$]{
    \includegraphics[width = 0.23 \textwidth ]{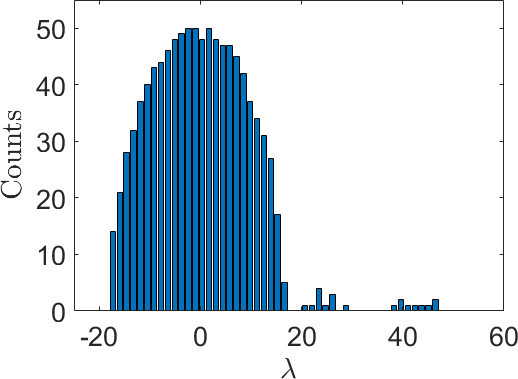}
    \label{fig:hist_spec_k4}
    }
    \subfloat[$H^{(8)}$, $p = 0.5$]{
    \includegraphics[width = 0.23 \textwidth ]{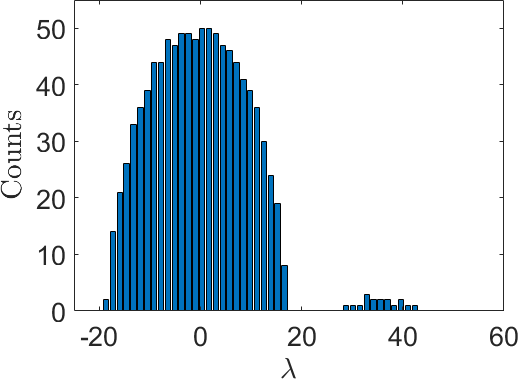}
    \label{fig:hist_spec_k8}
    } 
    \subfloat[$H^{(20)}$, $p = 0.5$]{
    \includegraphics[width = 0.23 \textwidth ]{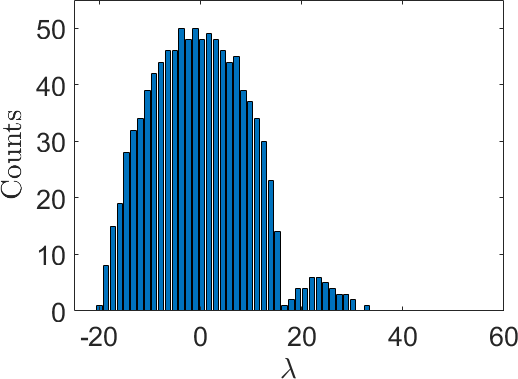}
    \label{fig:hist_spec_k20}
    } \\
    \subfloat[$H^{(1)}$, $p = 0.3$]{
    \includegraphics[width = 0.23 \textwidth ]{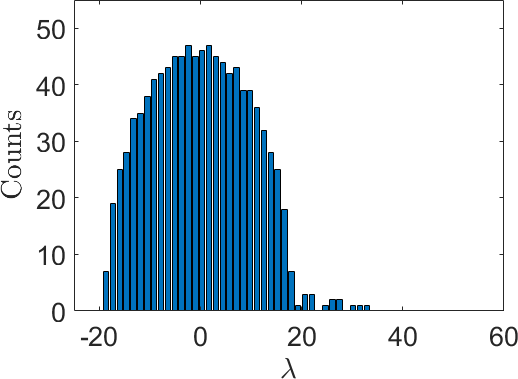}
    \label{fig:hist_spec_k1_p3}
    }
    \subfloat[$H^{(4)}$, $p = 0.3$]{
    \includegraphics[width = 0.23 \textwidth ]{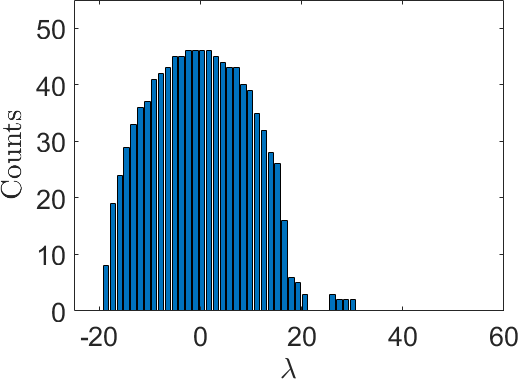}
    \label{fig:hist_spec_k4_p3}
    }
    \subfloat[$H^{(8)}$, $p = 0.3$]{
    \includegraphics[width = 0.23 \textwidth ]{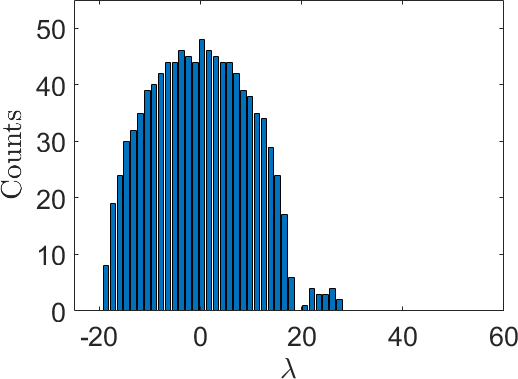}
    \label{fig:hist_spec_k8_p3}
    } 
    \subfloat[$H^{(20)}$, $p = 0.3$]{
    \includegraphics[width = 0.23 \textwidth ]{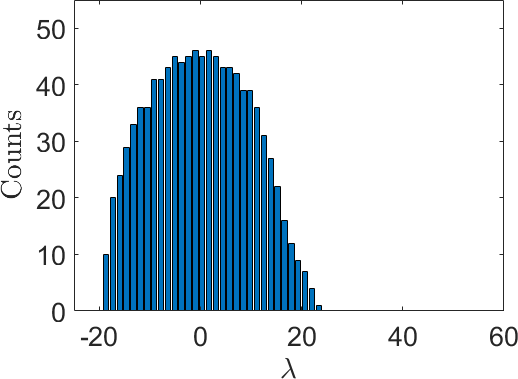}
    \label{fig:hist_spec_k20_p3}
    } \\
    \subfloat[$R^{(1)}$, $p = 0.5$]{
    \includegraphics[width = 0.23 \textwidth ]{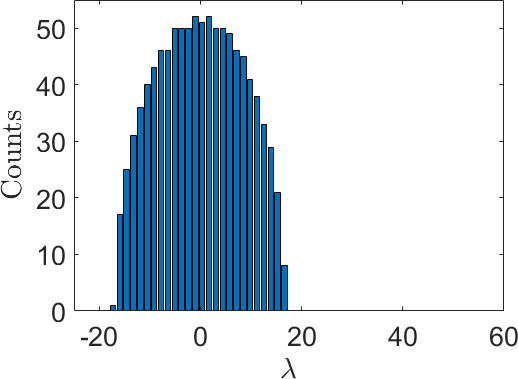}
    \label{fig:hist_spec_Rk1_p5}
    }
    \subfloat[$R^{(20)}$, $p = 0.5$]{
    \includegraphics[width = 0.23 \textwidth ]{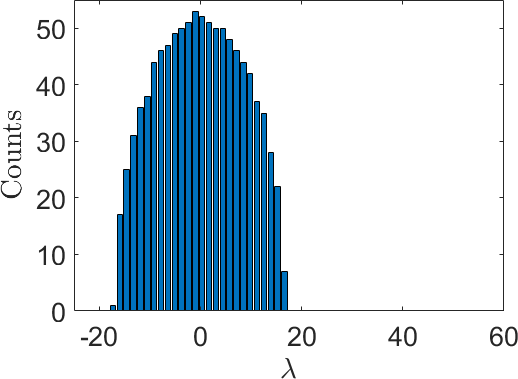}
    \label{fig:hist_spec_Rk20_p5}
    }
    \subfloat[$R^{(1)}$, $p = 0.3$]{
    \includegraphics[width = 0.23 \textwidth ]{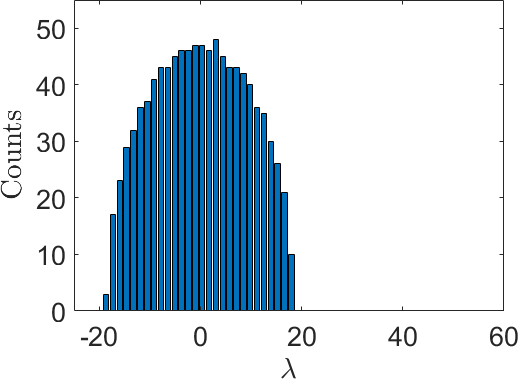}
    \label{fig:hist_spec_Rk1_p3}
    }
    \subfloat[$R^{(20)}$, $p = 0.3$]{
    \includegraphics[width = 0.23 \textwidth ]{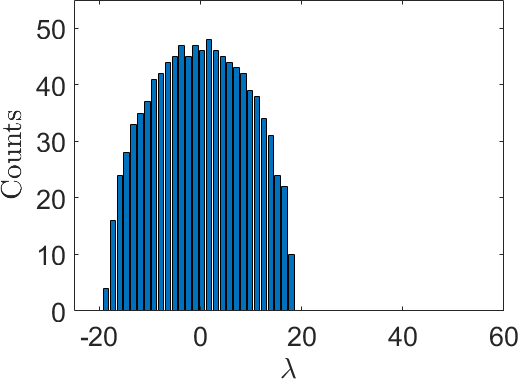}
    \label{fig:hist_spec_Rk20_p3}
    }
    \caption{\small Histograms of the eigenvalues of $H^{(k)}$ and $R^{(k)}$ in~\eqref{eq:Hk} for data generated from random rewiring model with $N = 1000$, $p = 0.5$, and $p = 0.3$. }
    \label{fig:hist_spec}
\end{figure}

\begin{figure}
	\captionsetup[subfigure]{oneside,margin={0.03cm,0cm}}
    \begin{center}
    \subfloat[$p = 0.5$]{
    \includegraphics[width = 0.3\textwidth]{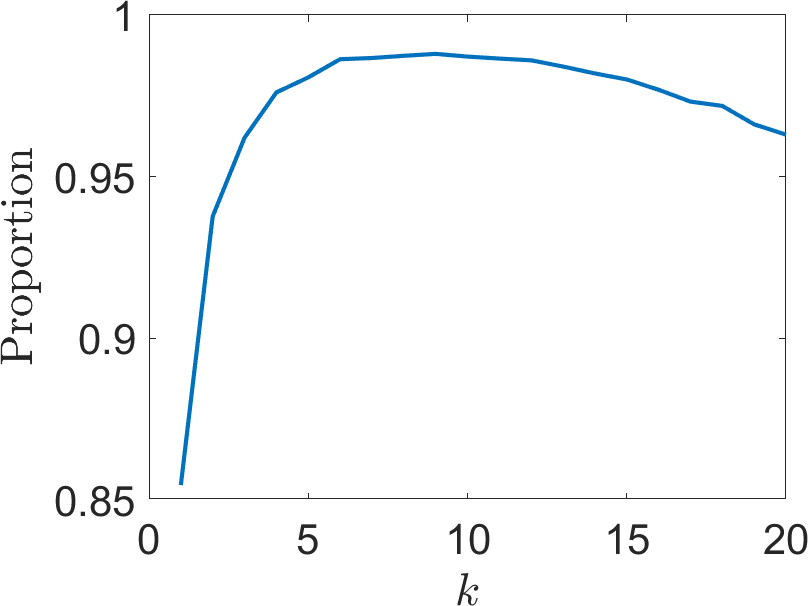}
    \label{fig:rrm_n1000_p5}
    }
    \subfloat[$p = 0.3$]{
    \includegraphics[width = 0.3\textwidth]{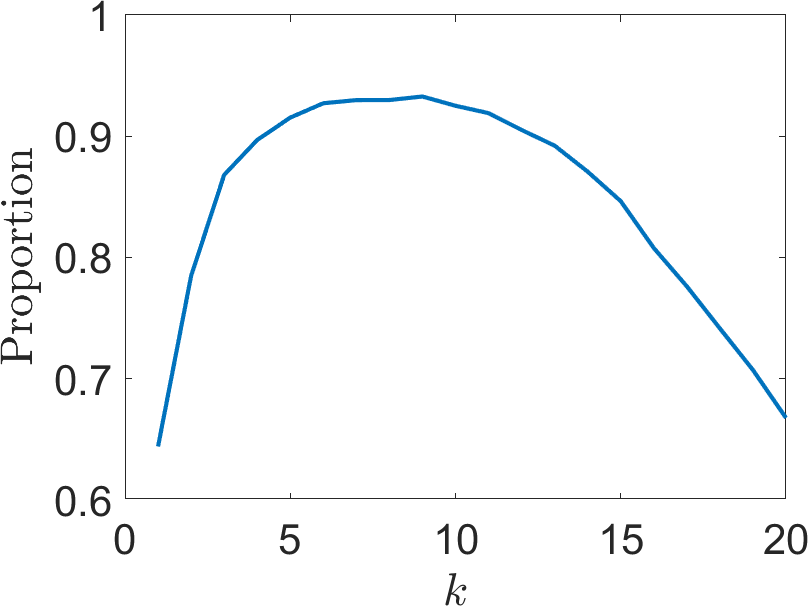} 
    \quad
    \label{fig:rrm_n1000_p3}
    }
    \subfloat[$p = 0.15$]{
    \includegraphics[width = 0.3\textwidth]{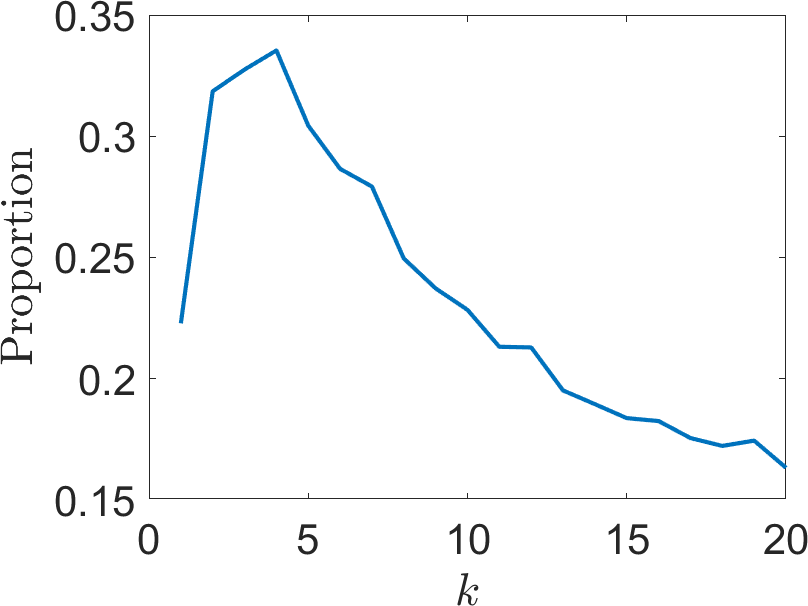} \quad
    \label{fig:rrm_n1000_p15}
    }
    \end{center}
    \caption{\small Proportion of the estimated nearest neighbors that satisfy $\langle \pi(x_i), \pi(x_j)\rangle \! >\! 0.85$ for $p\! =\! 0.5$, $0.3$, and $0.15$. The number of frames $N = 1000$ and the number of nearest neighbors is 50.}
    \label{fig:rrm_n1000_CA}
\end{figure}

\begin{figure}[t!]
\centering
\setlength\tabcolsep{1.0pt}
\renewcommand{\arraystretch}{0.6}
\begin{tabular}{p{0.05\textwidth}<{\centering} p{0.3\textwidth}<{\centering} p{0.3\textwidth}<{\centering} p{0.3\textwidth}<{\centering}}
& $\boldsymbol{k = 1}$ & $\boldsymbol{k = 3}$ & $\boldsymbol{k = 5}$ \\
\raisebox{1.4cm}{\rotatebox{90}{$\boldsymbol{p = 1}$}} &   \includegraphics[width = 0.3\textwidth]{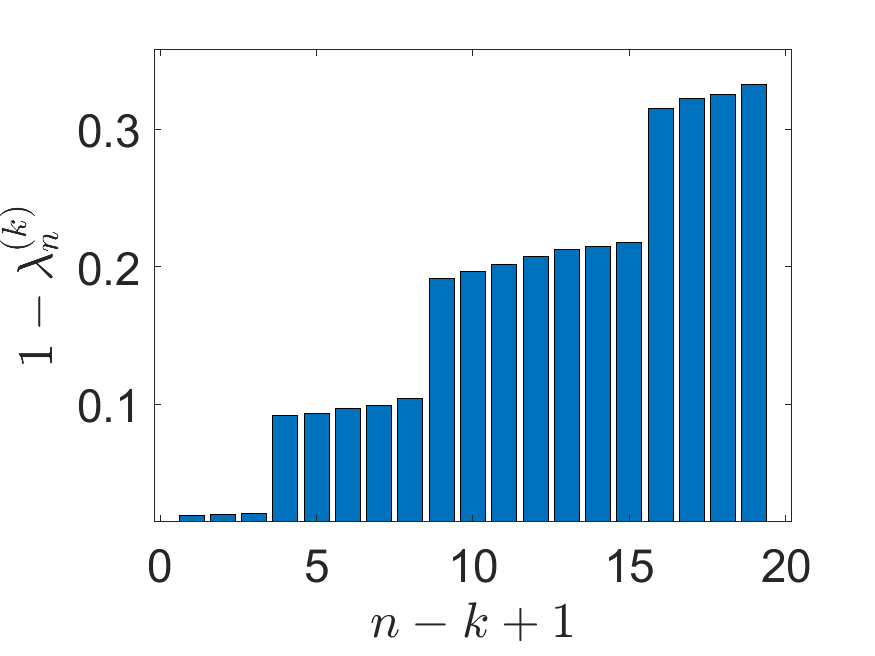} & \includegraphics[width = 0.3\textwidth]{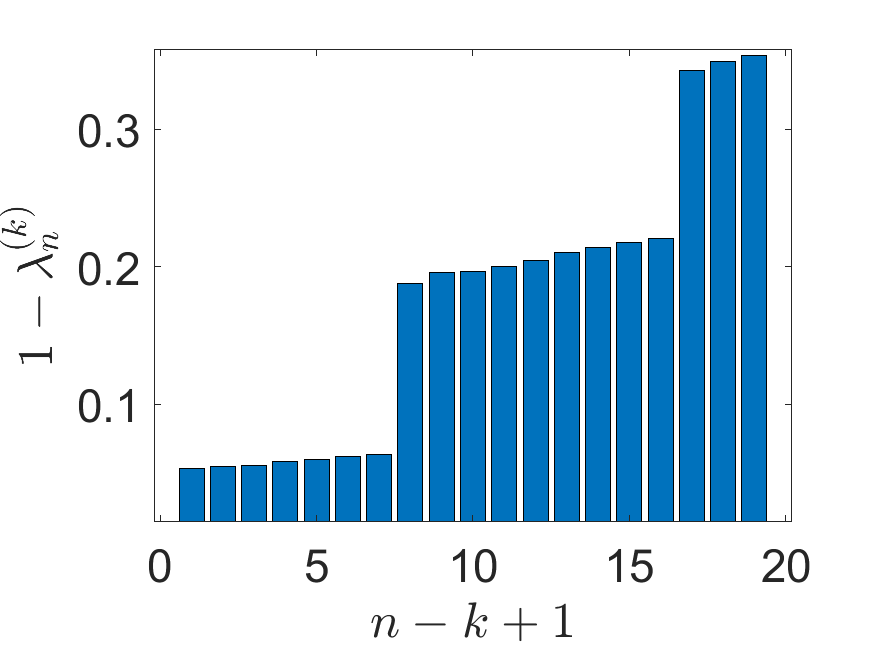}& \includegraphics[width = 0.3\textwidth]{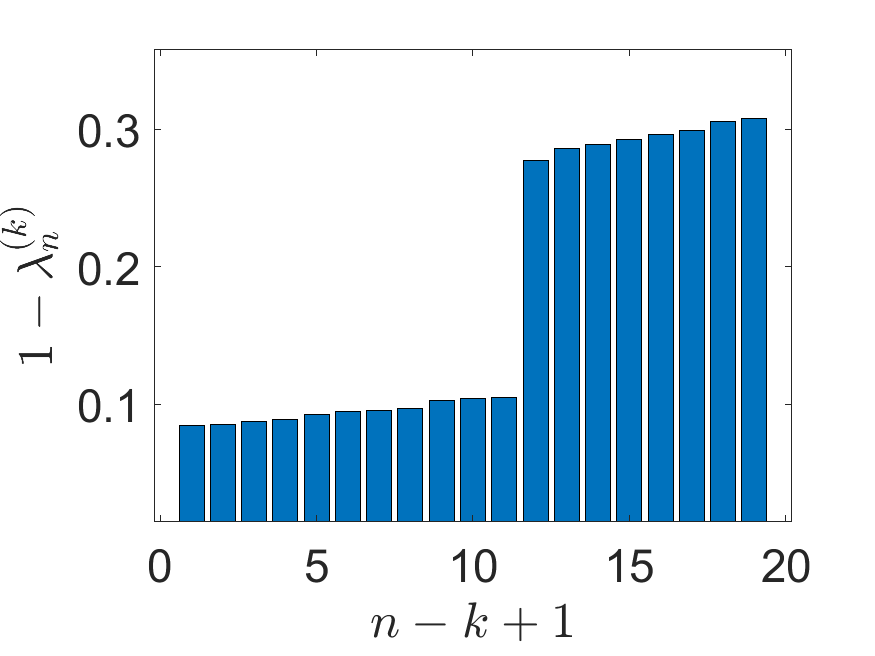} \\
\raisebox{1.3cm}{\rotatebox{90}{$\boldsymbol{p = 0.2}$}}  &   \includegraphics[width = 0.3\textwidth]{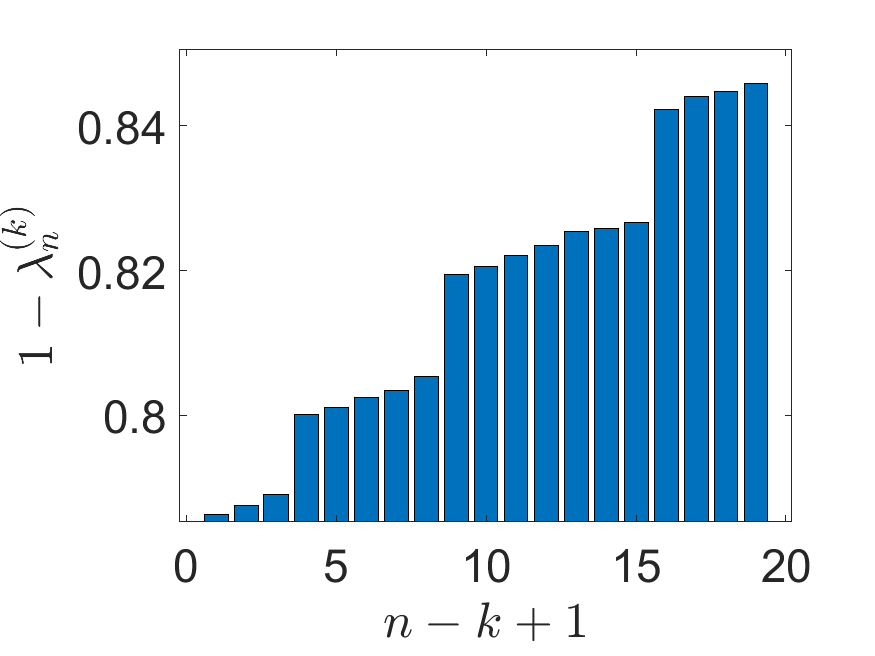} & \includegraphics[width = 0.3\textwidth]{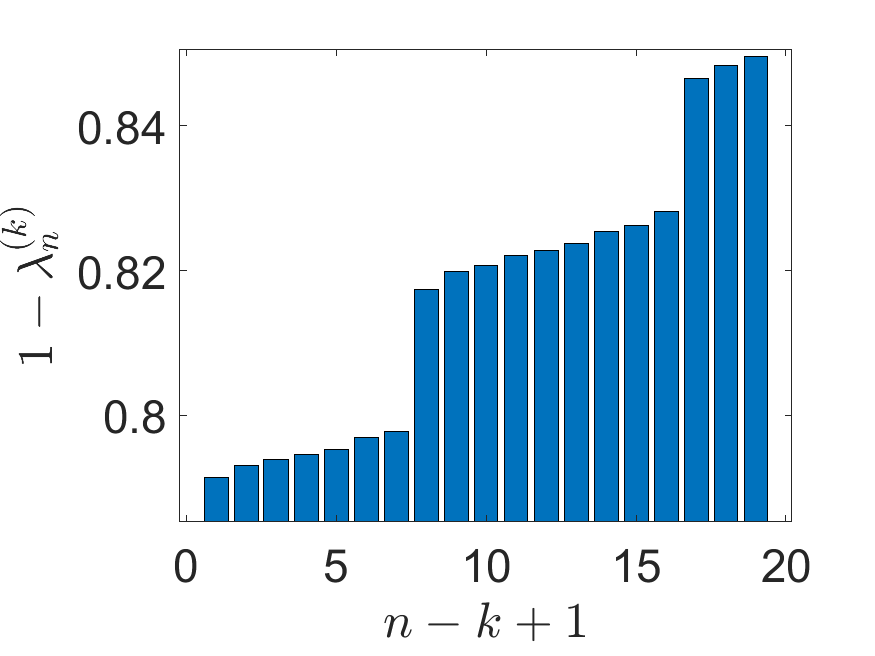}& \includegraphics[width = 0.3\textwidth]{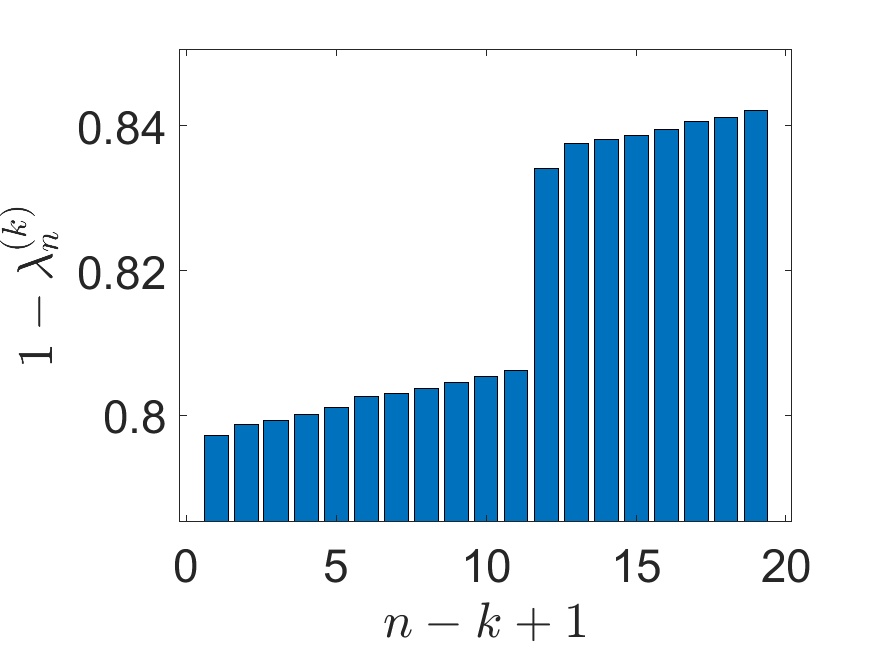} \\
\raisebox{1.3cm}{\rotatebox{90}{$\boldsymbol{p = 0.1}$}}  &   \includegraphics[width = 0.3\textwidth]{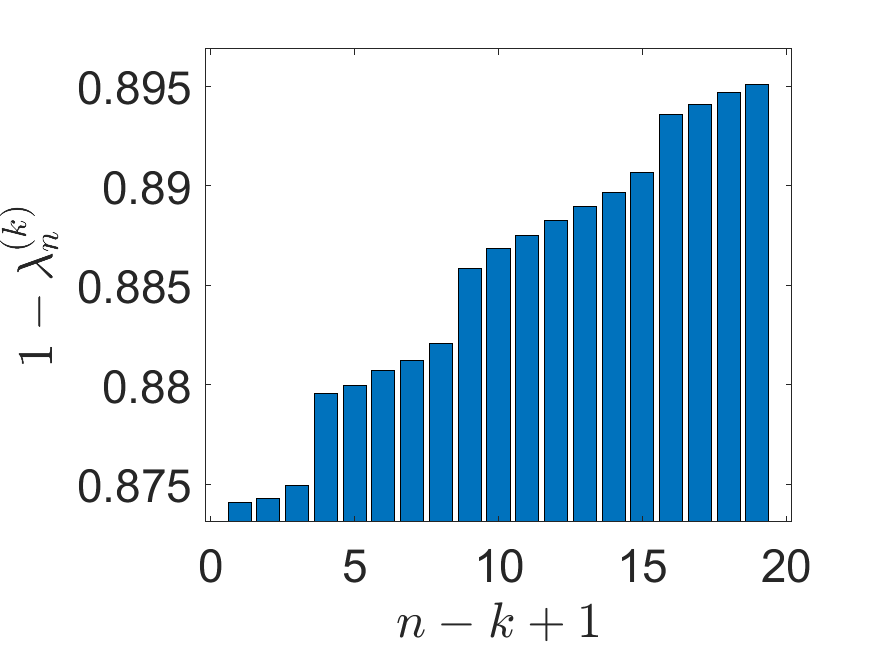} & \includegraphics[width = 0.3\textwidth]{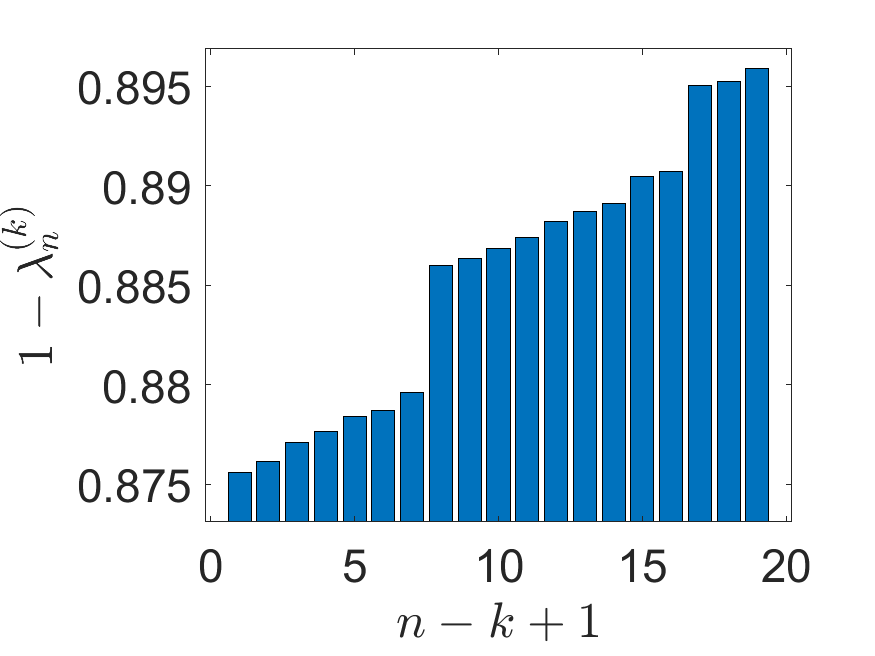}& \includegraphics[width = 0.3\textwidth]{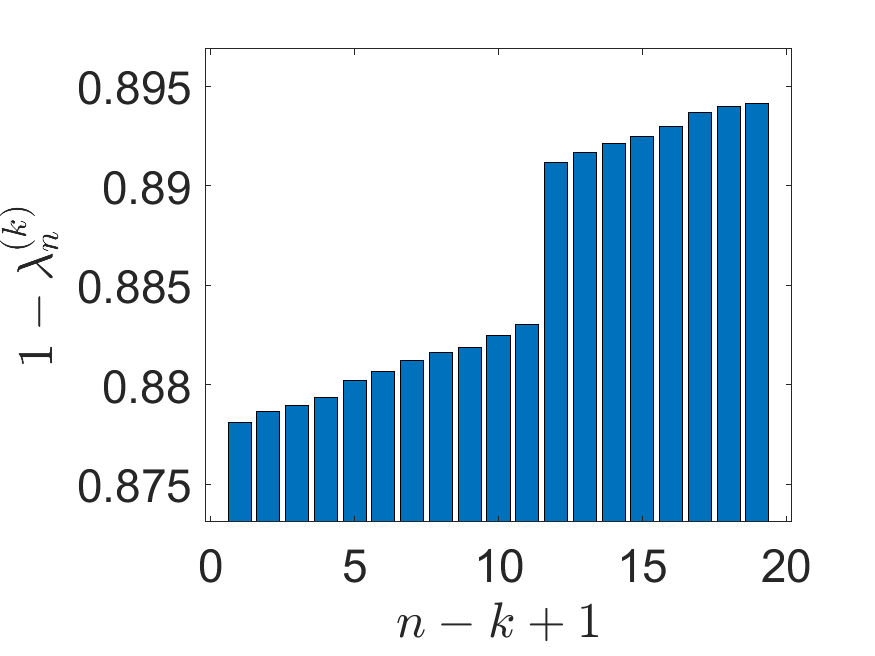} \\
\raisebox{1.2cm}{\rotatebox{90}{$\boldsymbol{p = 0.08}$}}  &   \includegraphics[width = 0.3\textwidth]{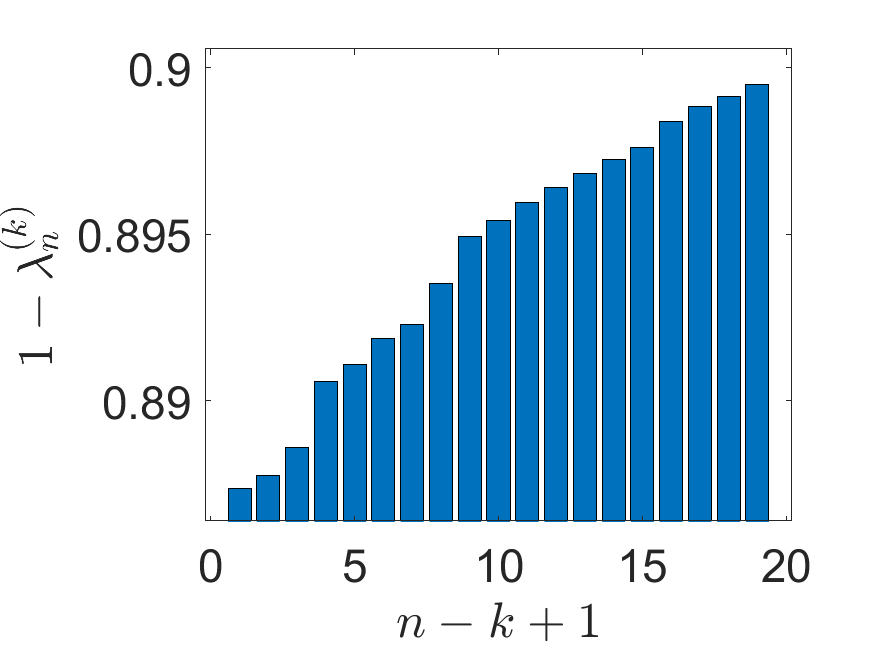} & \includegraphics[width = 0.3\textwidth]{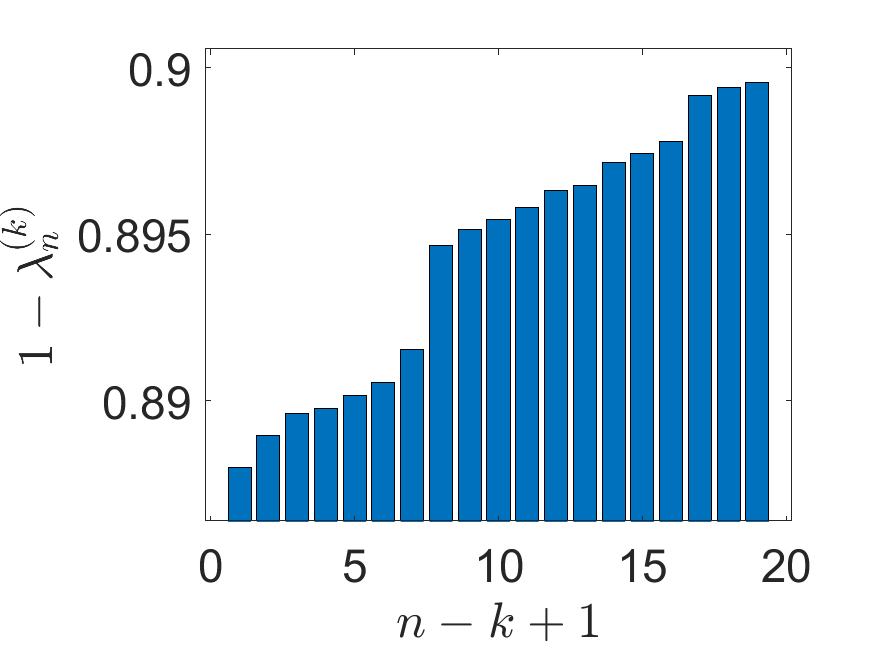}& \includegraphics[width = 0.3\textwidth]{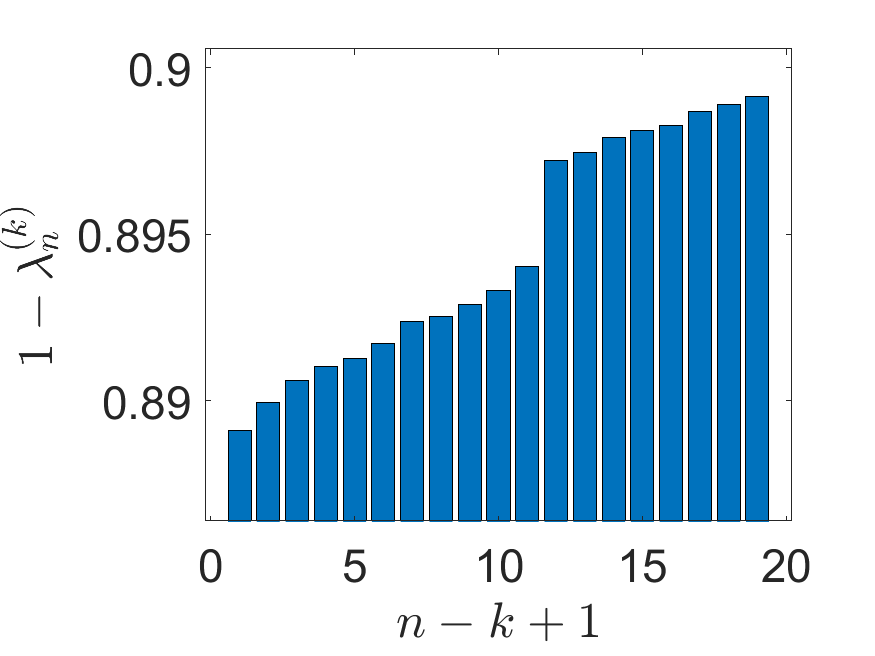} \\
\end{tabular}
    \caption{\small Bar plots of the 19 largest eigenvalues of the $\widetilde{H}^{(k)}$ at different $k$ and $p$ values.}
    \label{fig:prob_eval_bar}
\end{figure}

\section{Numerical Results}
\label{sec:numerics}

We conducted two sets of numerical experiments. The first
set involves simulations of the probabilistic model introduced in ~\cite{singer2011viewing}. The second set applies the proposed algorithm on the noisy simulated projection images of a 3-D volume of 70S ribosome. We point out that there is no direct way to compare the performance of classification algorithms on real microscope images, since their viewing directions are unknown. 
The only way to compare classification algorithms on real data is indirectly, by evaluating the resulting 3-D reconstructions. Here we conduct only numerical experiments from which conclusions can be drawn directly for 2-D images.  All experiments in this section were executed on a Linux machine with 16 Intel Xeon 2.5GHz cores and 512GB of RAM. 

\begin{figure}[t!]
\centering
\setlength\tabcolsep{1.0pt}
\renewcommand{\arraystretch}{0.6}
\begin{tabular}{p{0.05\textwidth}<{\centering} p{0.3\textwidth}<{\centering} p{0.3\textwidth}<{\centering} p{0.3\textwidth}<{\centering}}
& $\boldsymbol{k = 1}$ & $\boldsymbol{k = 5}$ & $\boldsymbol{k = 10}$ \\
\raisebox{1.4cm}{\rotatebox{90}{$\boldsymbol{p = 1}$}} &   \includegraphics[width = 0.3\textwidth]{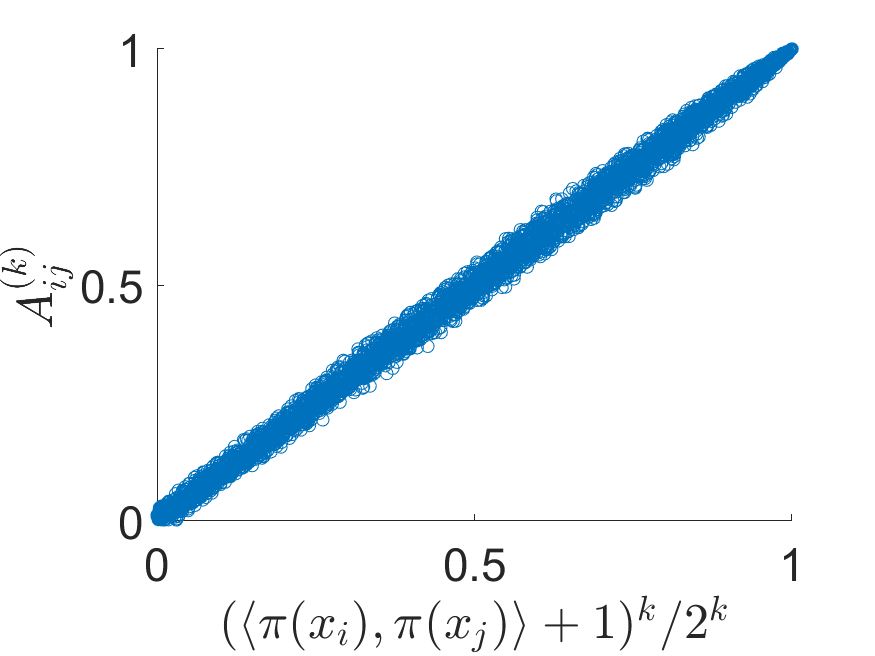} & \includegraphics[width = 0.3\textwidth]{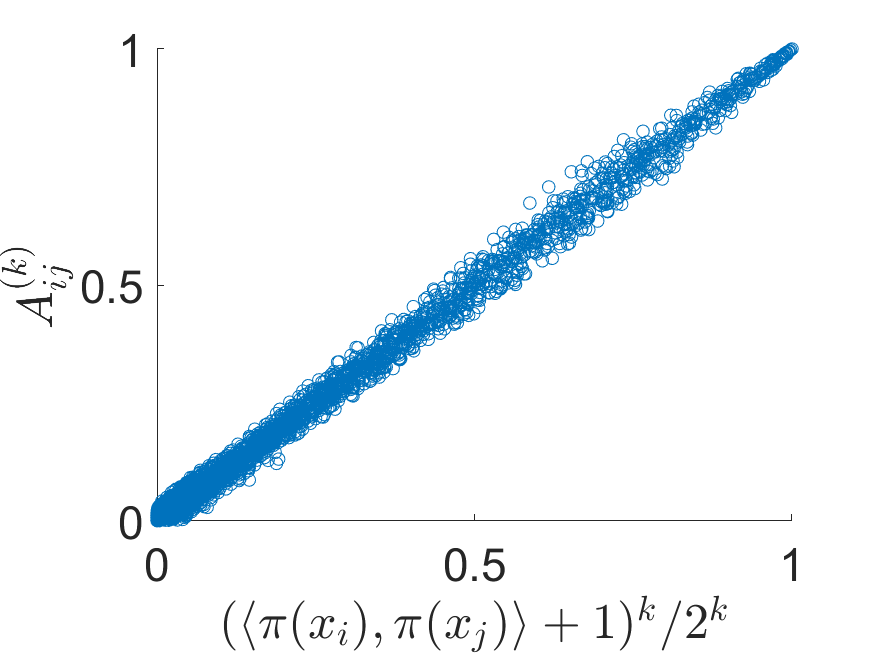}& \includegraphics[width = 0.3\textwidth]{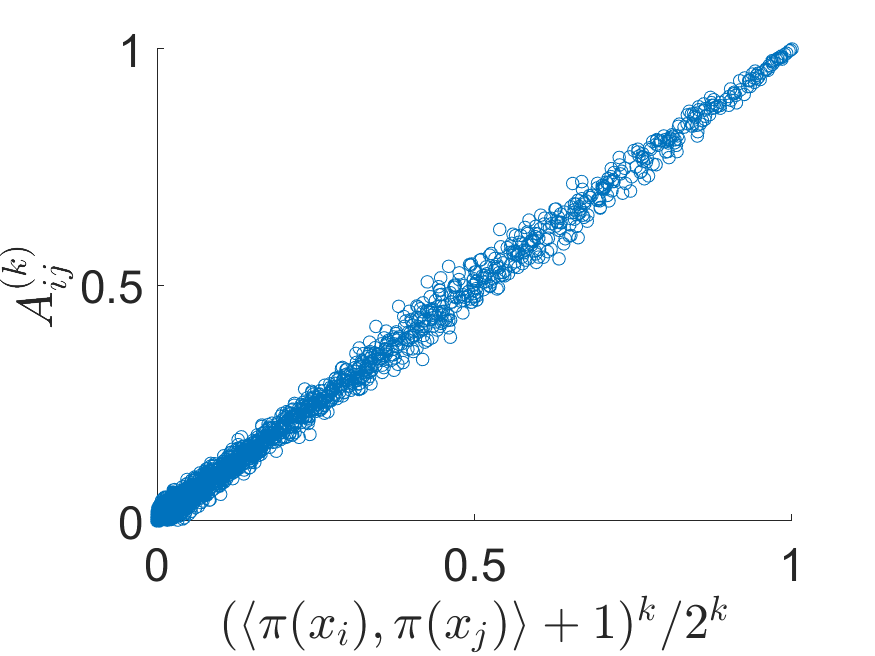} \\\hline
\raisebox{1.3cm}{\rotatebox{90}{$\boldsymbol{p = 0.2}$}}   &   \includegraphics[width = 0.3\textwidth]{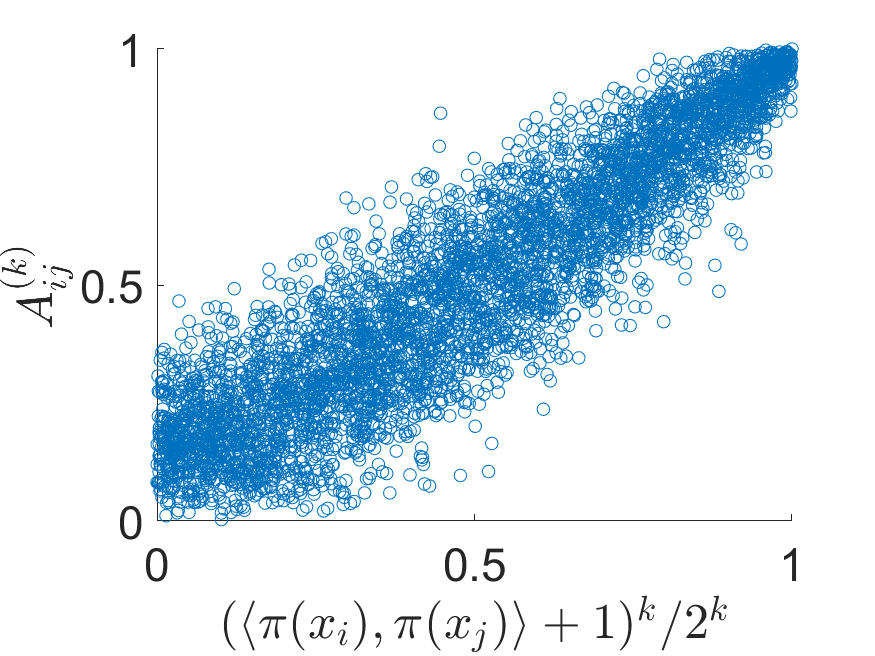} & \includegraphics[width = 0.3\textwidth]{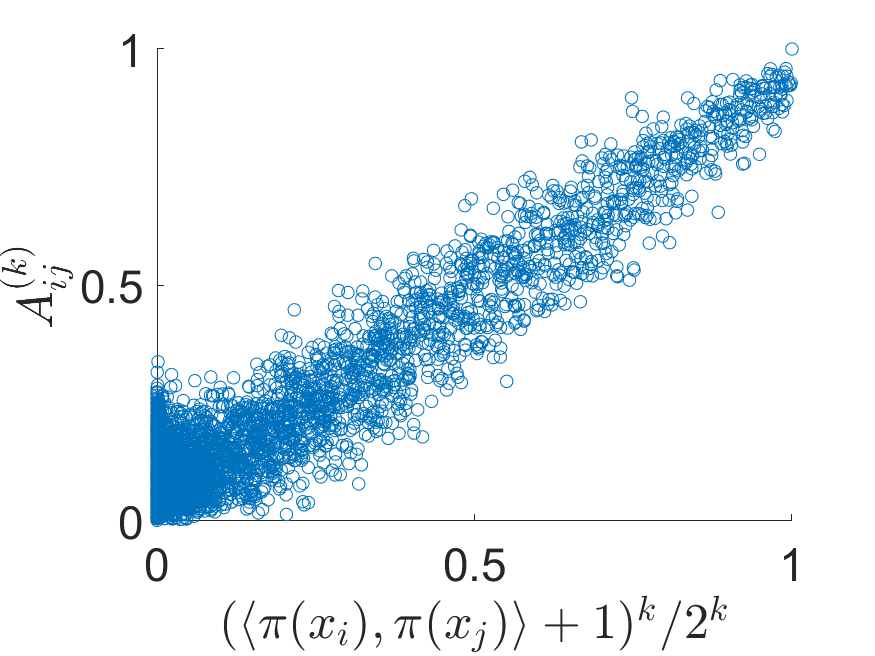}& \includegraphics[width = 0.3\textwidth]{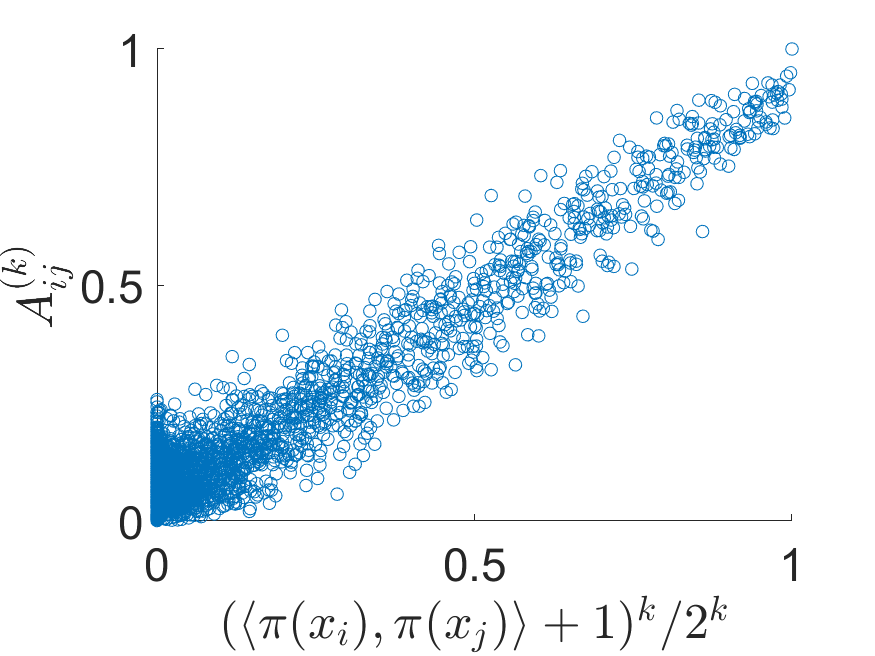} \\\hline
\raisebox{1.3cm}{\rotatebox{90}{$\boldsymbol{p = 0.1}$}}  &   \includegraphics[width = 0.3\textwidth]{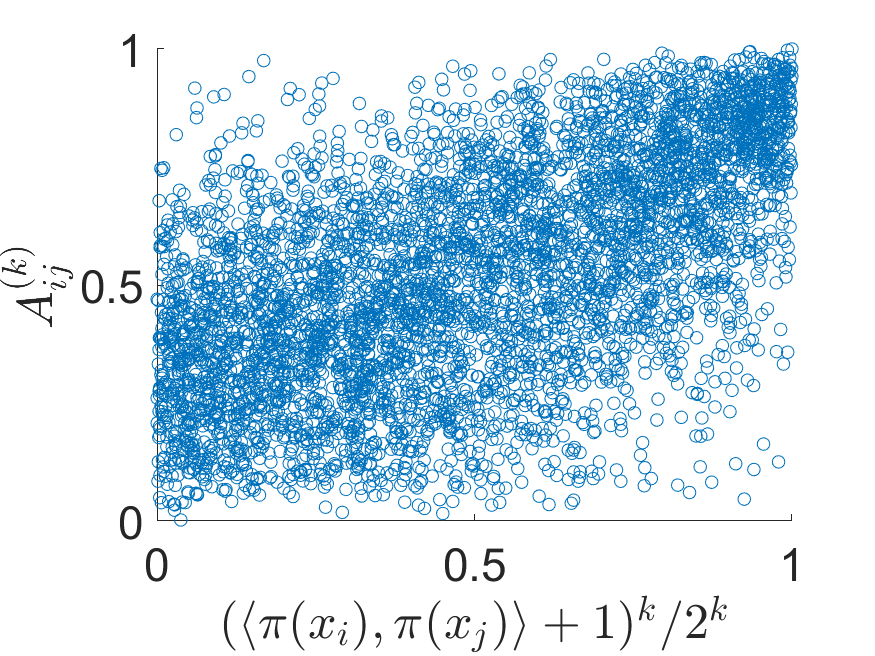} & \includegraphics[width = 0.3\textwidth]{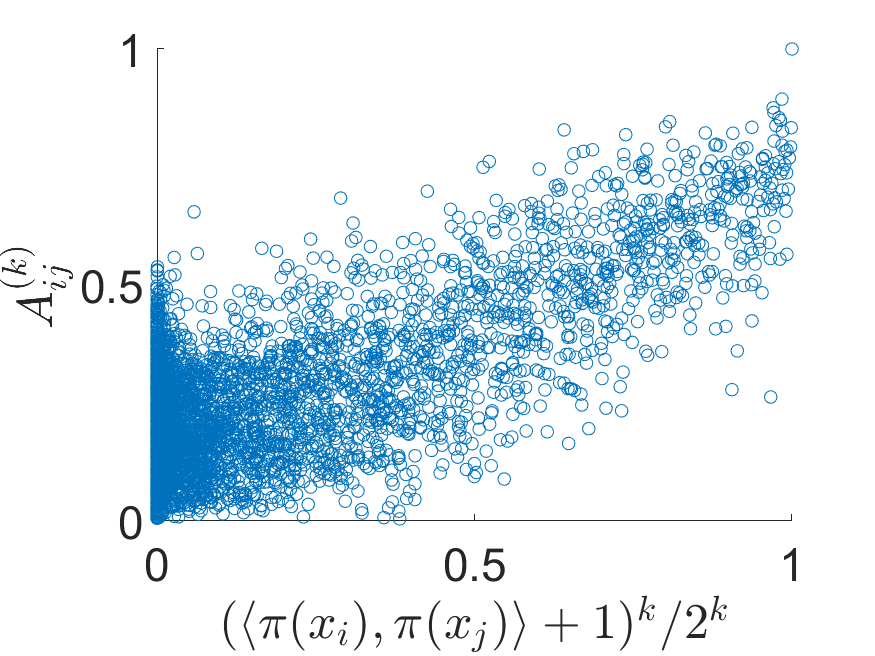}& \includegraphics[width = 0.3\textwidth]{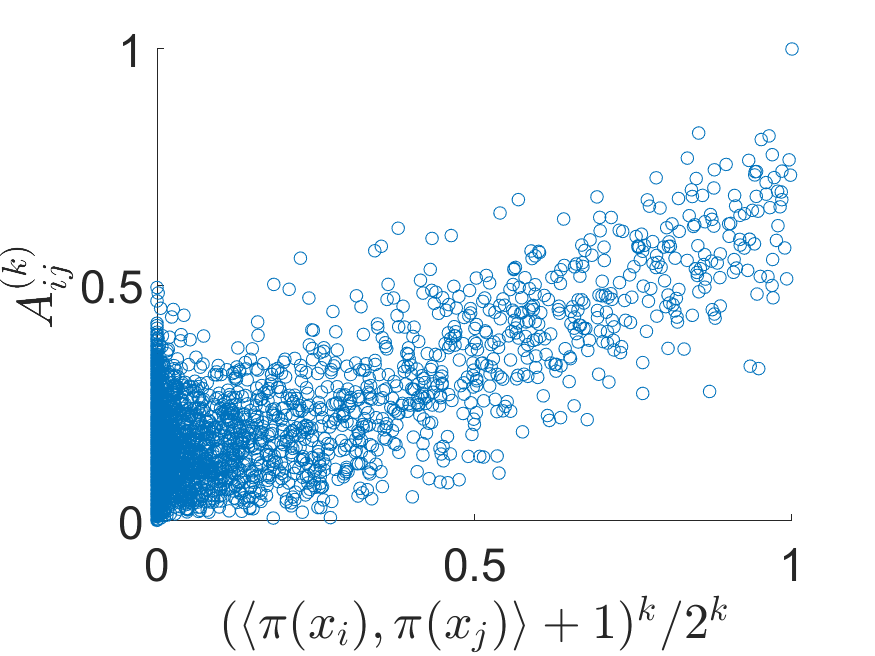} \\\hline
\raisebox{1.2cm}{\rotatebox{90}{$\boldsymbol{p = 0.08}$}}  &   \includegraphics[width = 0.3\textwidth]{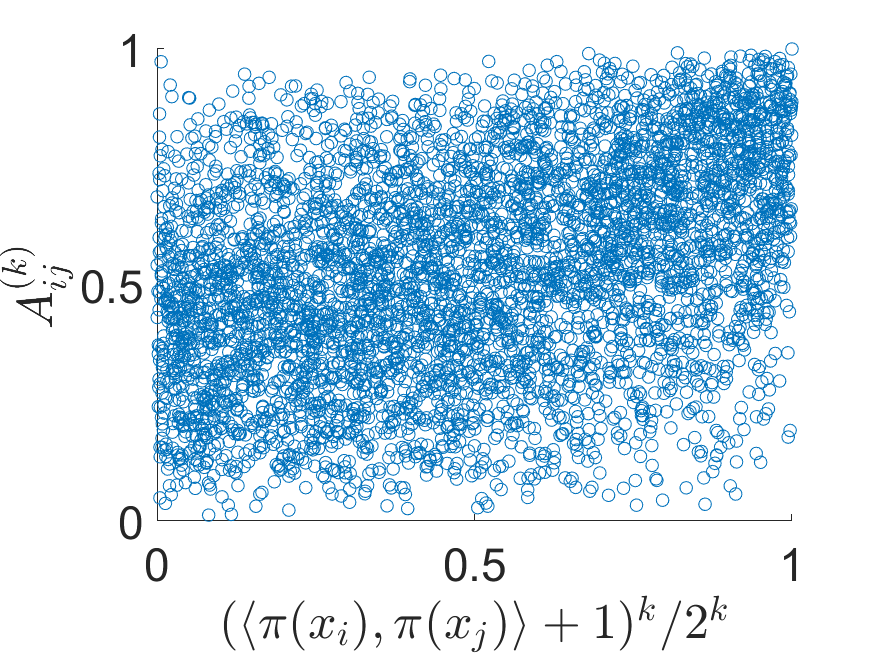} & \includegraphics[width = 0.3\textwidth]{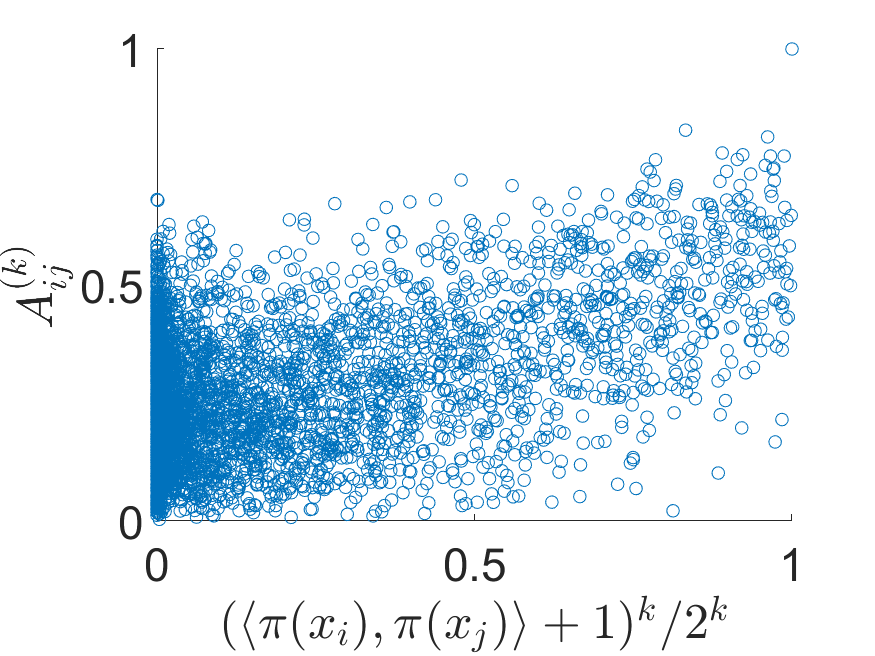}& \includegraphics[width = 0.3\textwidth]{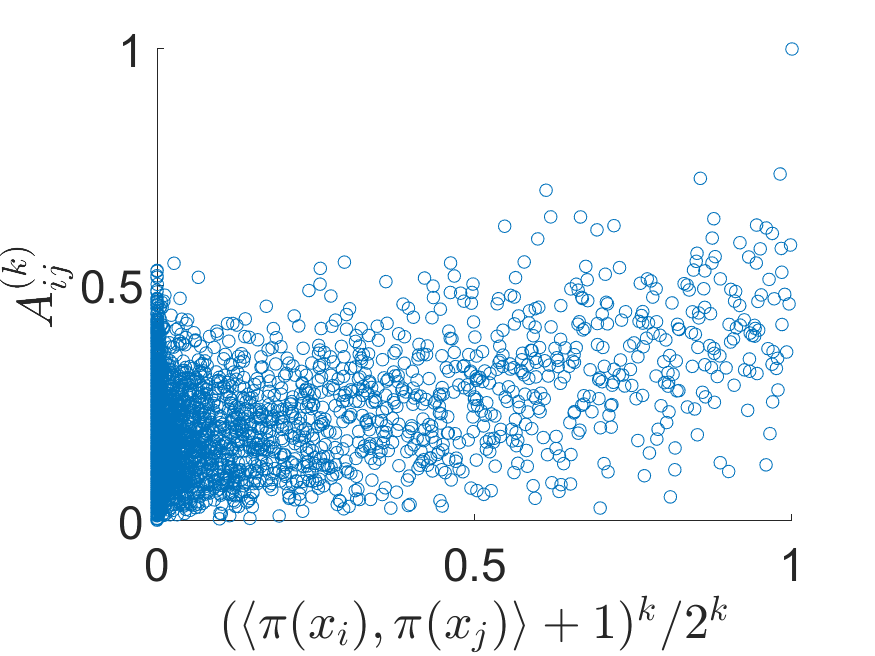} \\
\end{tabular}
    \caption{\small Scatter plots of $A^{(k)}_{ij}$ against $\left(\langle \pi(x_i), \pi(x_j) \rangle + 1 \right)^k / 2^k$ at $p = 1, 0.2, 0.1$ and $0.08$ and $k = 1, 5, \, \text{and }10$. The robustness of the approximation \eqref{eq:approx-identity} is considerably more robust for larger values of $k$.}
    \label{fig:prob_emb_scatter}
\end{figure}

\begin{figure}
    \centering
    \subfloat[$p = 1$]{
    \includegraphics[width = 0.23\textwidth]{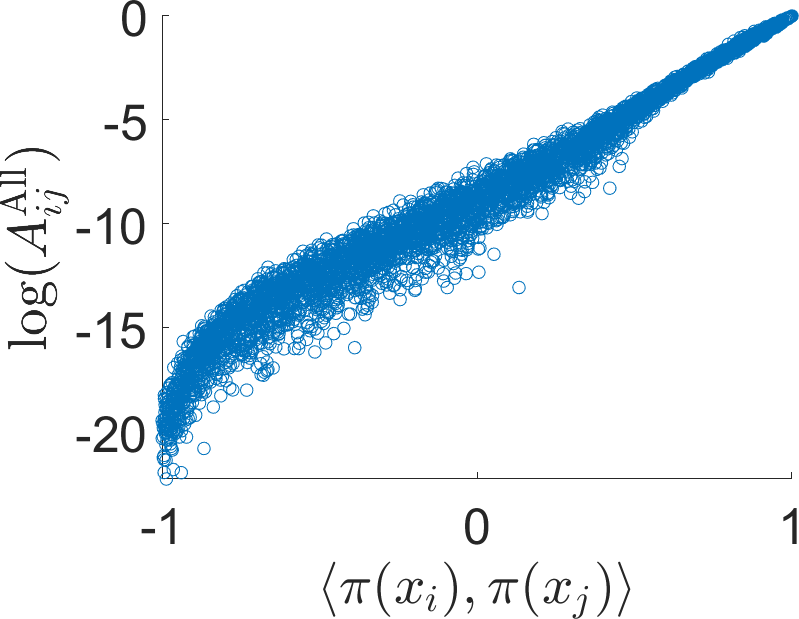}
    \label{fig:mfca_scatter_p100}
    }
   \subfloat[$p = 0.2$]{
    \includegraphics[width = 0.23\textwidth]{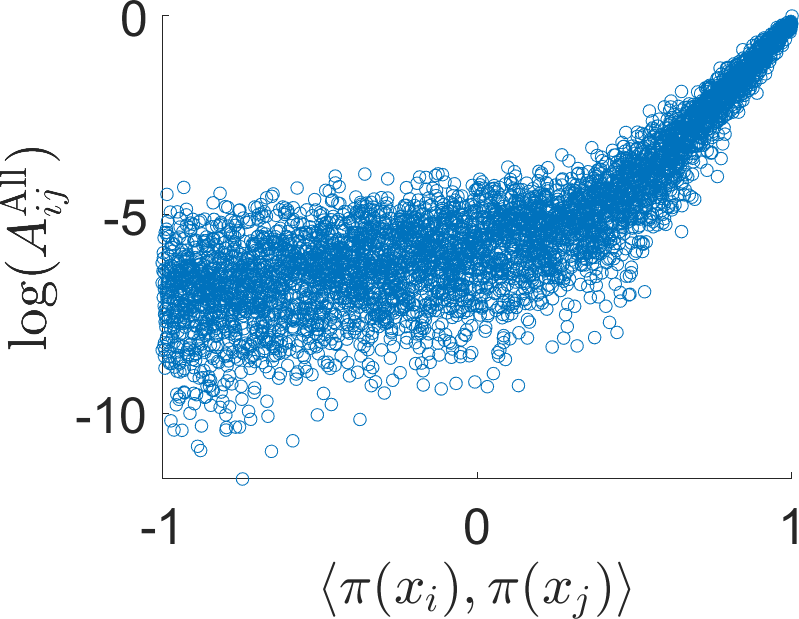}
    \label{fig:mfca_scatter_p20}
    }
    \subfloat[$p = 0.1$]{
    \includegraphics[width = 0.23\textwidth]{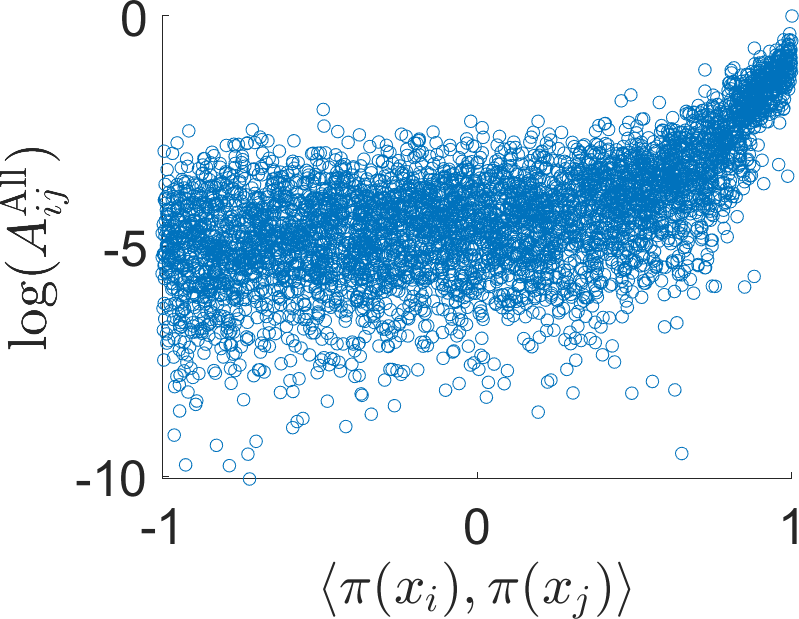}
    \label{fig:mfca_scatter_p10}
    }
    \subfloat[$p = 0.08$]{
    \includegraphics[width = 0.23\textwidth]{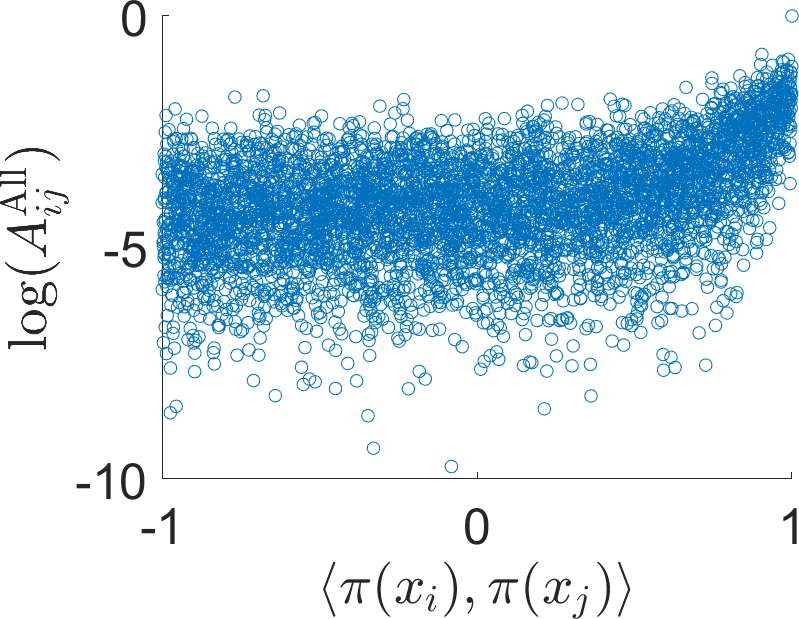}
    \label{fig:mfca_scatter_p8}
    }
    \caption{\small Scatter plots for log multi-frequency class averaging affinity $\log A^{\text{all}}_{ij}$ against $\langle \pi(x_i), \pi(x_j) \rangle $ at $p = 1,\, 0.2, \, 0.1$ and $0.08$.}
    \label{fig:mfca_scatter}
\end{figure}

\subsection{Experiments with Random Rewiring Model}
\label{sec:rrm}
We generate $N = 10,000$ orthonormal frames $x_1, \dots, x_N$ in $\mathbb{R}^3$, uniformly sampled from $\SO(3)$ with respect to the normalized Haar measure. 
To generate the noisy graph under the probabilistic model introduced in~\cite{singer2011viewing}, we keep the correct edge in the neighborhood graph with probability $p$, and use the ground truth local parallel transport data $e^{\imath k \theta_{ij}}$ in~\eqref{eq:align}.  With probability $1 - p$, we rewire the edge such that the node $i$ is connected to a randomly selected node that is not connected with $i$. For the rewired edge, the optimal in-plane rotational alignment angle is replaced with an angle uniformly sampled from $0$ to $2 \pi$. 

In the first experiment, we use a small dataset with $N = 1000$ frames in order to visualize all eigenvalues of $H^{(k)}$. The clean geometric neighborhood is constructed by connecting points where $\langle \pi(x_i) , \pi(x_j) \rangle > 0.8$ (the opening angle $\alpha = 36.9^\circ$) to make sure that the graph is well connected. We vary $p$ and compute all the eigenvalues of $H^{(k)}$ to illustrate the analysis in Section~\ref{sec:noise_model}. Figure~\ref{fig:hist_spec} shows the histograms of the eigenvalues of the matrices $H^{(k)}$ and $R^{(k)}$. We observe that the top eigenvalue of $H^{(k)}$ decreases as $k$ decreases which is consistent with Corollary~\ref{cor:1}. The upper bound for $\| R^{(k)} \|$ as discussed in Section~\ref{sec:noise_model} is $2 \sqrt{N} \sin \frac{\alpha}{2} = 20$, which is consistent with the results shown in the bottom row of Figure~\ref{fig:hist_spec}. In addition, the same figure shows that, $\| R^{(k)} \|$ does not vary with frequency index $k$ under the random rewiring model. Comparing Figure~\ref{fig:hist_spec_k1} with Figure~\ref{fig:hist_spec_k4}, we see that the spectral gap between $(2k+1)^\text{th}$ and $(2k+2)^\text{th}$ eigenvalues increases. Increasing $k$ further, we observe that the $(2k+2)^\text{th}$ eigenvalue of $H^{(k)}$, i.e. $\tilde{\ell}^{(k)}_{2k+2}$, becomes very close to the right edge of the semi-circle as shown in Figure~\ref{fig:hist_spec_k8}.  
Figure~\ref{fig:rrm_n1000_CA} shows the proportion of the estimated 50 nearest neighbors for each frame that satisfy  $\langle \pi(x_i), \pi(x_j) \rangle > 0.85$. 
The proportion reaches the maximum at $k = 9$ for $p = 0.5$ and $p = 0.3$. 

In the second experiment, we use $10,000$ frames to show the spectral properties and the performance of the MFCA algorithm for large sample size. The clean geometric neighborhood graph is constructed by connecting points where $\langle \pi(x_i) , \pi(x_j) \rangle > 0.92$ (within $23.1^\circ$ opening angle). We compute the eigenvalues and eigenvectors of the normalized Hermitian matrix, $\widetilde{H}^{(k)} = D^{-1/2}H D^{-1/2}$. Figure~\ref{fig:prob_eval_bar} shows the top eigenvalues of $\widetilde{H}^{(k)}$. The multiplicities $2k + 1, 2k + 3, 2k + 5, \dots$ of the top eigenvalues are clearly demonstrated in the bar plots for $p = 1$ (the first row in Figure~\ref{fig:prob_eval_bar}). As $p$ decreases, the top spectral gap gets smaller and when $p = 0.1$, it is hard to identify the spectral gap for $k = 1$, whereas the top spectral gap at $k = 5$ is still noticeable. This is consistent with our expectation for improved spectral stability for larger $k$.

\begin{figure}
    \centering
    \subfloat[$p = 1$]{
    \includegraphics[width = 0.23\textwidth]{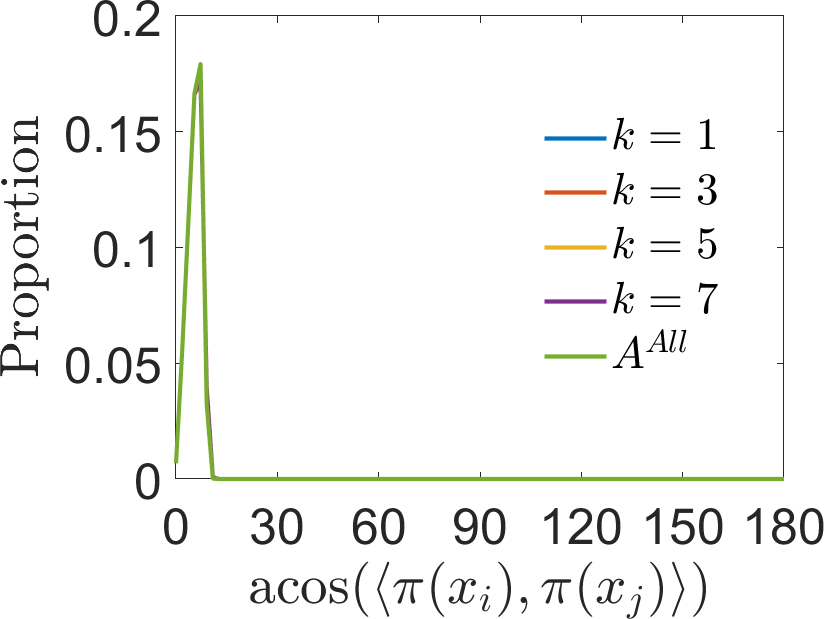}
    \label{fig:mfca_hist_p100}
    }
   \subfloat[$p = 0.2$]{
    \includegraphics[width = 0.23\textwidth]{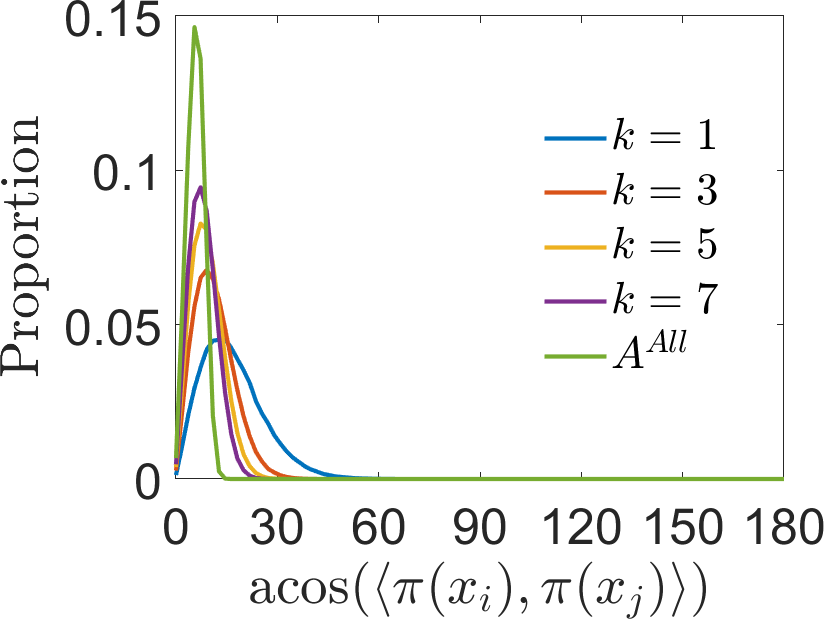}
    \label{fig:mfca_hist_p20}
    }
    \subfloat[$p = 0.1$]{
    \includegraphics[width = 0.23\textwidth]{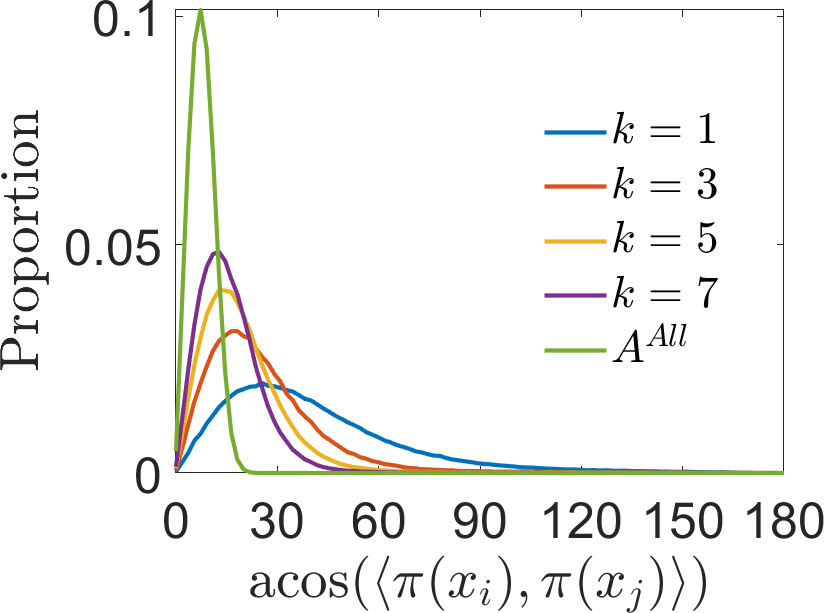}
    \label{fig:mfca_hist_p10}
    }
    \subfloat[$p = 0.08$]{
    \includegraphics[width = 0.23\textwidth]{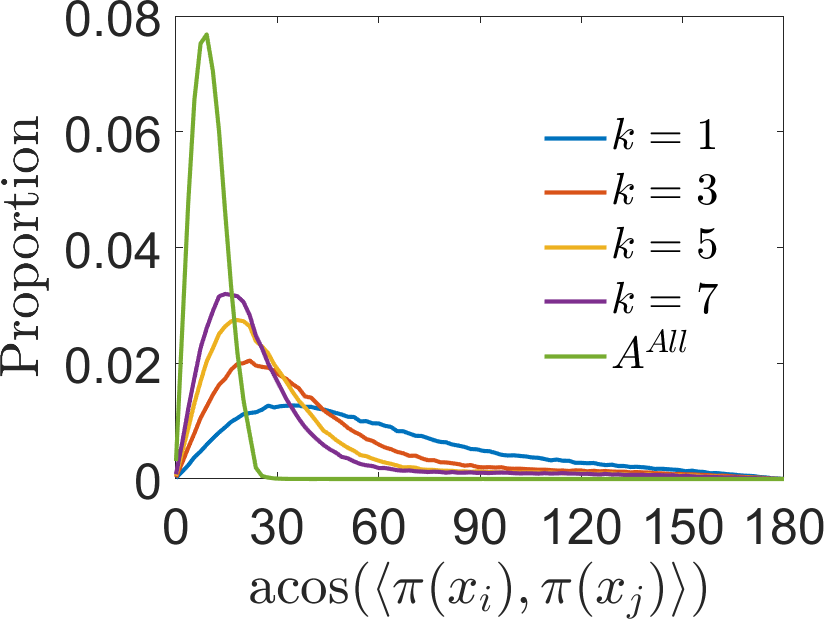}
    \label{fig:mfca_hist_p8}
    }
    \caption{\small Histogram of the angles ($x$-axis, in degrees) between the viewing directions of 10,000 simulated frames and its 50 neighboring points at $p = 1,\, 0.2,\, 0.1,\,\text{and } 0.08$. For $A^\text{All}$, we use $k_\text{max} = 20$.  }
    \label{fig:mfca_hist}
\end{figure}
The estimated $A_{ij}^{(k)}$'s provide good approximations to $\left(\langle \pi(x_i), \pi(x_j) \rangle + 1 \right )^k/2^k $ (see the top row of Figure~\ref{fig:prob_emb_scatter}). This approximation deteriorates as $p$ decreases.  The lower left sub-figure of Figure~\ref{fig:prob_emb_scatter} shows that the original single frequency class averaging nearest neighbor search algorithm fails at $p = 0.08$. Figure~\ref{fig:mfca_scatter} shows the scatter plots of the combined affinity against the dot products $\langle \pi(x_i), \pi(x_j) \rangle $ between the true viewing angles at varying $p$. Even at $p = 0.08$, the combined affinity $A_{ij}^\text{All}$ is still able to identify frames of similar viewing directions. 
\begin{figure}
	\captionsetup[subfigure]{oneside,margin={0.03cm,0cm}}
    \centering
     \subfloat[$p = 0.2$]{
    \includegraphics[height = 0.25 \textwidth]{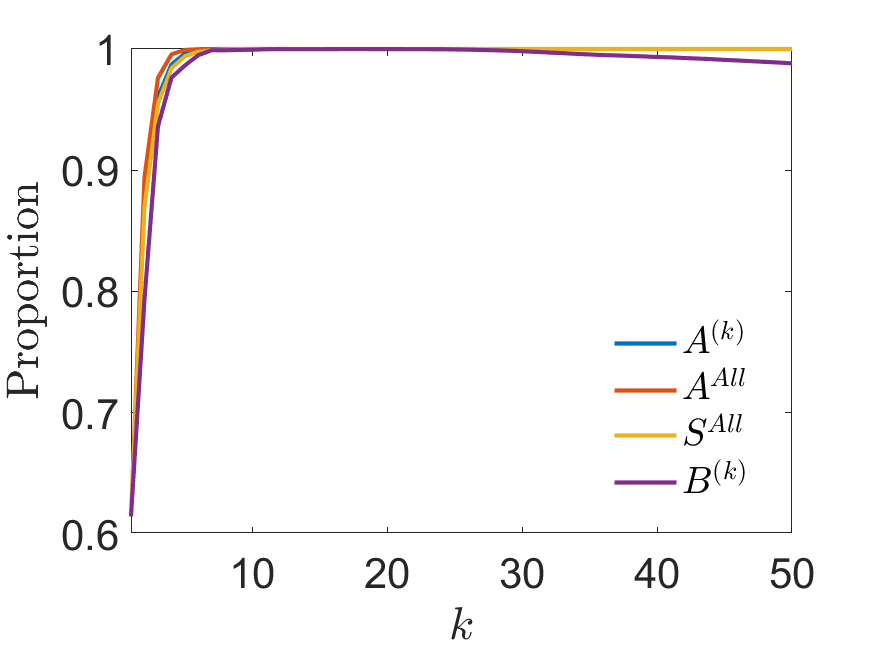}
    \label{fig:rrm_vark_p20}
    } 
    \subfloat[$p = 0.1$]{
    \includegraphics[height = 0.25 \textwidth]{./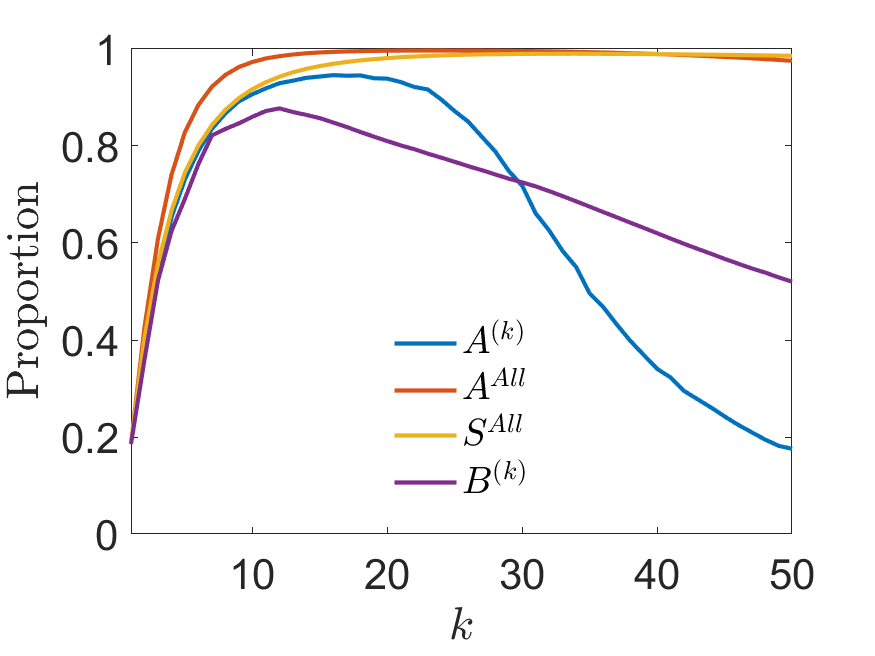}
    \label{fig:rrm_vark_p10}
    } 
    \subfloat[$p = 0.08$]{
    \includegraphics[height = 0.25 \textwidth]{./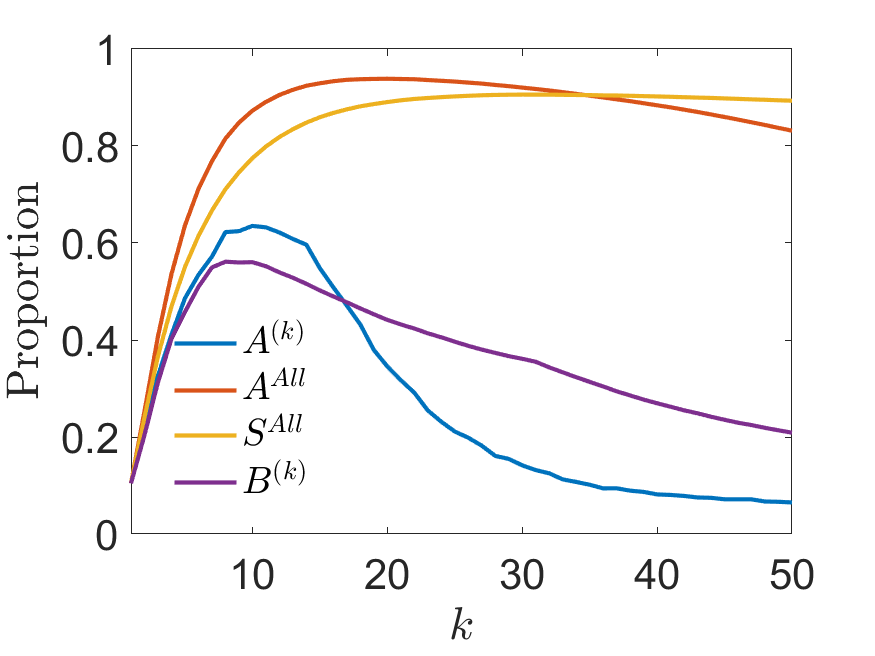}
    \label{fig:rrm_vark_p8}
    }
    \caption{\small Comparing the performance of different affinities according to $A^{(k)}$, $A^{\text{All}}$, $S^{\text{All}}$, and $B^{(k)}$. We evaluate the proportion of the estimated nearest neighbors that satisfy $\langle \pi(x_i), \pi(x_j)\rangle > 0.95$. }
    \label{fig:rrm_vark}
\end{figure}

We evaluate the performance of the proposed algorithms on the nearest neighbor search by 
inspecting the magnitudes of the angles between the viewing directions of frames identified as neighbors by the algorithm. We identify for each frame $50$ nearest neighbors with respect to the affinity measure, and plot in Figure~\ref{fig:mfca_hist} the histogram of the angles between the viewing directions of neighboring frames for varying rewiring probabilities $p=1,0.2,0.1,0.08$. From Figure~\ref{fig:mfca_hist}, we observe that using the affinity $A^{(k)}$ in~\eqref{eq:affinity-freq-k-prac} at higher frequency helps improve the performance of the single-frequency class averaging nearest neighbor search algorithm, especially for the noisy graph at $p = 0.08$ (i.e., $92\%$ of the true edges are corrupted). Moreover, combining the measures at different $k$'s according to~\eqref{eq:all-frequency-affinity-prac} further improves the classification results with significant reduction of outliers at $p = 0.08$ compared to the single frequency nearest neighbor identification results. 
\begin{figure}
	\captionsetup[subfigure]{oneside,margin={0.03cm,0cm}}
	\centering
	\subfloat[$\langle \pi(x_i), \pi(x_j) \rangle = -0.20$]{\includegraphics[width= 0.3\textwidth]{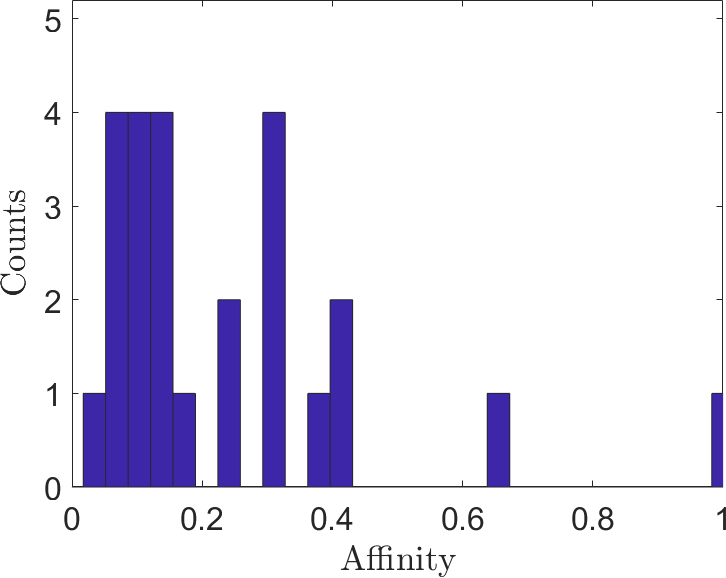}
	\label{fig:hist_single_pair_a}}\quad
	\subfloat[$\langle \pi(x_i), \pi(x_j) \rangle = 0.99$]{\includegraphics[width= 0.3\textwidth]{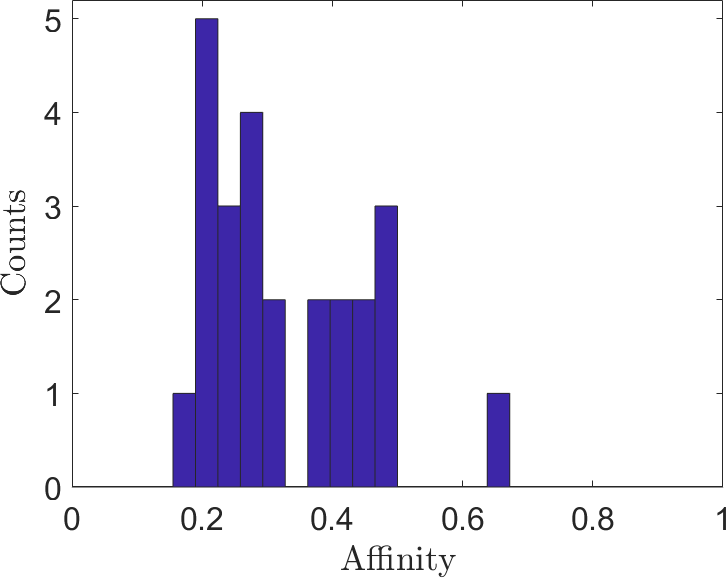}
		\label{fig:hist_single_pair_b}
		}
	\caption{\small Histograms of the affinities $A^{(k)}_{ij}$ with $k = 1, \dots, 25$ for (a) a pair of wrongly identified nearest neighbors by $A^{(1)}$ and (b) a good nearest neighbor pair identified by $A^\text{All}$, but not by any $A^{(k)}$. The data is generated under random rewiring model with $p = 0.08$.}
	\label{fig:hist_single_pair}
\end{figure}

Singer et al. proposed to use more than top 3 eigenvectors from $\widetilde{H}^{(1)}$ for nearest neighbor classification in~\cite[Section 7]{singer2011viewing}. We include it as an additional baseline for comparison here to illustrate the benefit of using the eigenvectors of $\widetilde{H}^{(k)}$ for $k > 1$. Specifically, using the top $2k+1$ eigenvectors of $\widetilde{H}^{(1)}$, we define the affinity $B^{(k)}$ as,
\begin{equation}
    \label{eq:B}
    \widetilde{\Psi}^{(1)}_k(i) = \left(\psi^{(1)}_1(i), \psi^{(1)}_2(i), \dots, \psi^{(1)}_{2k + 1}(i)\right), \quad 
    B^{(k)} = \frac{\left | \langle \widetilde{\Psi}^{(1)}_k(i), \widetilde{\Psi}^{(1)}_k(j) \rangle \right |}{\|\widetilde{\Psi}^{(1)}_k(i) \| \| \widetilde{\Psi}^{(1)}_k(j) \|}. 
\end{equation}
We compare the performance of the algorithms in terms of the proportion of estimated nearest neighbors that satisfy $ \langle \pi (x_i), \pi (x_j) \rangle > 0.95$. Figure~\ref{fig:rrm_vark_p10} shows that under large noise regimes, where $90\%$ of the clean edges are randomly rewired, using $A^{(k)}$ at $k = 16$ outperforms the previous class averaging algorithm that uses only the eigenvectors from $\widetilde{H}^{(1)}$. 
As shown in Figure~\ref{fig:rrm_vark_p8}, combining the information from different frequency channels can significantly boost the performance in finding true nearest neighbors. For $S^\text{All}$, the proportion reaches the maximum value $0.90$ at $k = 30$. For $A^\text{All}$, the proportion reaches the maximum value $0.94$ at $k = 20$. Because the higher-order terms $A^{(k)}$ get much smaller than 1 and become less informative, incorporating more $A^{(k)}$ components deteriorates the performance of the combined score $A^{\text{All}}$ when $k > 20$. The combined affinity $S^{\text{All}}$ is more stable at large $k$. 

To understand why the combined affinities can significantly improve the classification results at $p = 0.08$, we check the values of $A^{(k)}$ for $k = 1, \dots, 25$ for pairs of frames $x_i$ and $x_j$ that satisfy $\langle \pi(x)_i, \pi(x)_j \rangle < 0.95$, but are still identified as nearest neighbors by $A^{(1)}$. We observe that although the corresponding affinities at frequency 1 are above 0.97, $A^{(k)}_{ij}$ at other frequency indices are below 0.7 and concentrated on the interval $(0, 0.2]$ (see the example in Figure~\ref{fig:hist_single_pair_a}). Therefore, the combined affinity $A^{\text{All}}$ is very small and such pair will be removed from the nearest neighbor list. In contrast, for a pair of true nearest neighbors that does not appear in any nearest neighbor list by $A^{(k)}$ for $k = 1, \dots 25$, we observe that although the affinities are lower than 0.7, all individual affinities lie between 0.2 and 0.5 (see Figure~\ref{fig:hist_single_pair_b}). Thus the combined affinity $A^\text{All}$ is higher for the pair in Figure~\ref{fig:hist_single_pair_b} than the pair in Figure~\ref{fig:hist_single_pair_a}. In summary, $A^\text{All}$ is able to not only reject wrongly identified nearest neighbors by $A^{(k)}$, but also find new correct nearest neighbors that are missed by $A^{(k)}$.  
 
In the third experiment, we incorporate the small angular perturbation into the random rewiring model according to Eq.~\eqref{eq:Hk2}. Specifically, we assume that the distribution of the angular error follows the von Mises distribution,
\begin{equation}
    \label{eq:vonMises}
    \gamma(\varepsilon) = \frac{e^{\kappa \cos(\varepsilon)}}{2 \pi I_0(\kappa)},
\end{equation}
where $I_0(\kappa)$ is the modified Bessel function of order $0$. The parameter $\kappa$ controls the concentration of the distribution. For this particular distribution, $c_k = \mathbb{E}(e^{\imath k \epsilon}) = \frac{I_k(\kappa)}{I_0(\kappa)}$, where $I_k(\kappa)$ is the modified Bessel function of order $k$ for $k>0$. 
The clean geometric neighborhood graph is constructed by connecting points where $\langle \pi(x_i) , \pi(x_j) \rangle > 0.7$ with 10,000 frames. We fix $p = 0.08$ ($92\%$ of the clean edges are randomly rewired) and vary the parameter $\kappa$ in von Mises distribution. We show the accuracy of the 50-nearest neighbor identification in Figure~\ref{fig:result_ang_perturb}. Figure~\ref{fig:ang_pdf} depicts the distribution of the angle $\varepsilon$ with $\kappa = 500$ and $\kappa = 64$. 
\begin{figure}
    \captionsetup[subfigure]{oneside,margin={0.03cm,0cm}}
    \centering
    \subfloat[$\gamma(\varepsilon)$]{
    \includegraphics[width = 0.24 \textwidth]{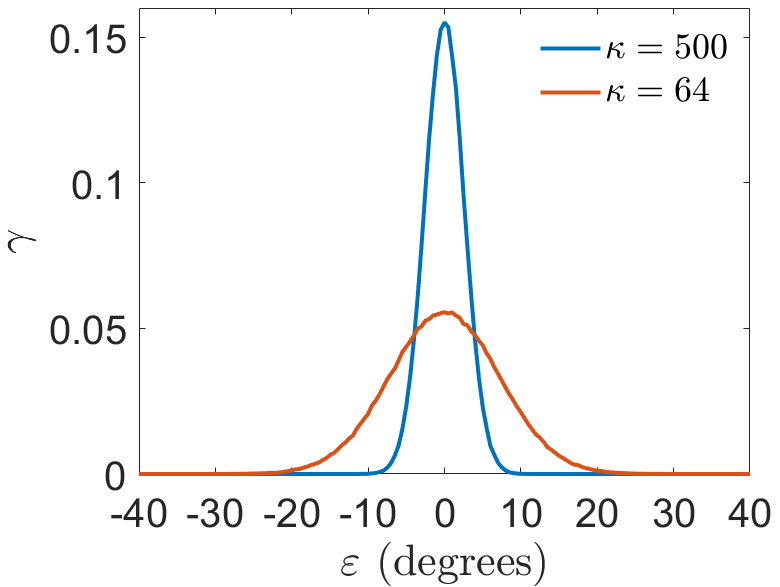}
    \label{fig:ang_pdf}
    }
    \subfloat[$\kappa \rightarrow \infty$] {
    \includegraphics[width = 0.24 \textwidth]{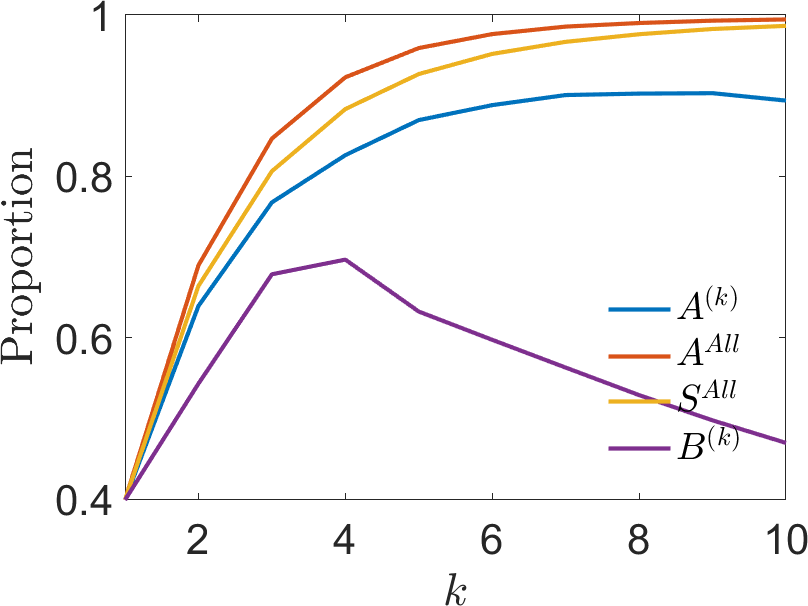}
        \label{fig:ang_beta5000}
    }
    \subfloat[$\kappa = 500$]{
    \includegraphics[width = 0.24 \textwidth]{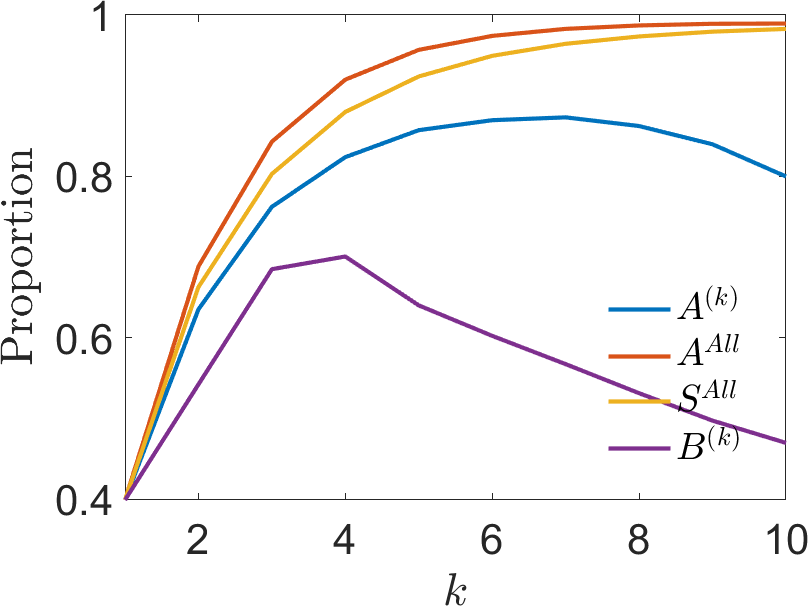}
    \label{fig:ang_beta500}
    }
    \subfloat[$\kappa = 64$]{
    \includegraphics[width = 0.24 \textwidth]{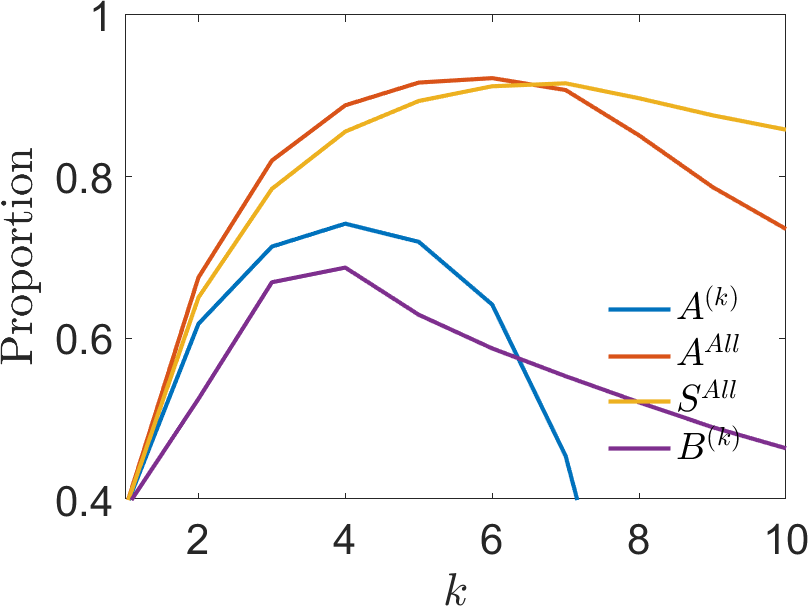}
    \label{fig:ang_beta64}
    }
	\caption{\small Comparing the performance of different affinities according to $A^{(k)}$, $A^{\text{All}}$, $S^{\text{All}}$, and $B^{(k)}$. \protect \subref{fig:ang_beta5000}--\protect \subref{fig:ang_beta64} The proportion of the estimated nearest neighbors that satisfy $\langle \pi(x_i), \pi(x_j)\rangle > 0.95$ with various $\kappa$. \protect\subref{fig:ang_pdf} Distributions of $\varepsilon_{ij}$ for $\kappa = 500$ and $\kappa = 64$. }
    \label{fig:result_ang_perturb}
\end{figure}
Figure~\ref{fig:ang_beta5000} shows the results for random rewiring model without angular perturbation and the performance of $A^{(k)}$ is consistently better than $B^{(k)}$. Comparing Figure~\ref{fig:ang_beta5000} with Figure~\ref{fig:rrm_vark_p8}, we find that we achieve higher accuracy in the nearest neighbor identification from a more densely connected graph in all approaches. From Figures~\ref{fig:ang_beta5000}--\ref{fig:ang_beta64}, we find the performance of $B^{(k)}$ is stable over small angular perturbation. In comparison, the performance of single frequency affinity $A^{(k)}$ deteriorates as $\kappa$ increases. This is due to the fact that $c_k$ gets smaller as $\kappa$ increases and both the top spectral gap and top eigenvalue of $\mathbb{E}H^{(k)}$ depend on $c_k$. Despite this, the combined scores still achieve higher accuracy than $A^{(k)}$ and $B^{(k)}$. 

\begin{figure}[t!]
	\captionsetup[subfigure]{oneside,margin={0.03cm,0cm}}
	\captionsetup[subfigure]{labelformat=empty}
	\centering
	\subfloat[Clean Projections]{\includegraphics[width= 0.2\textwidth]{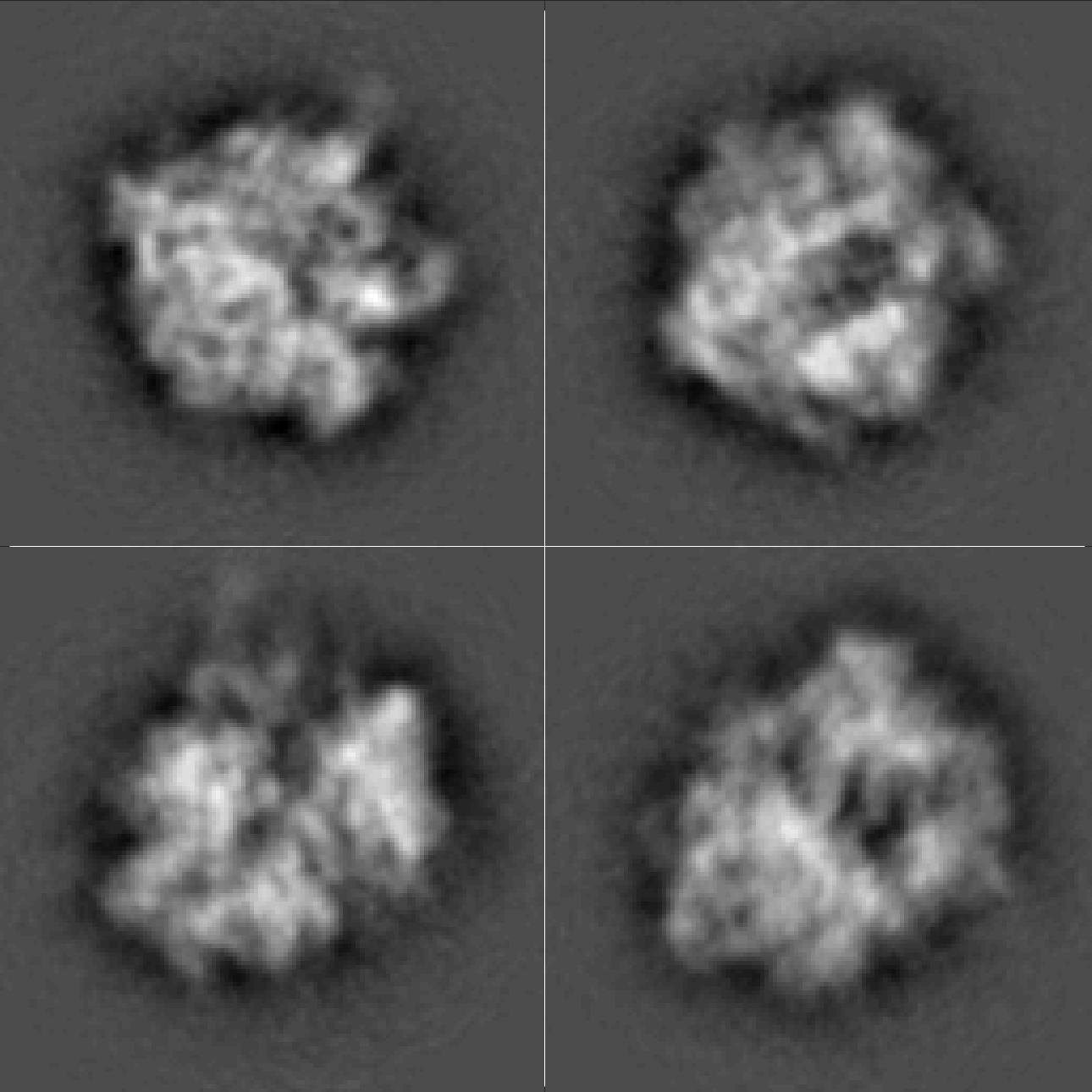}}\;
	\subfloat[$\text{SNR} = 0.05$]{\includegraphics[width= 0.2\textwidth]{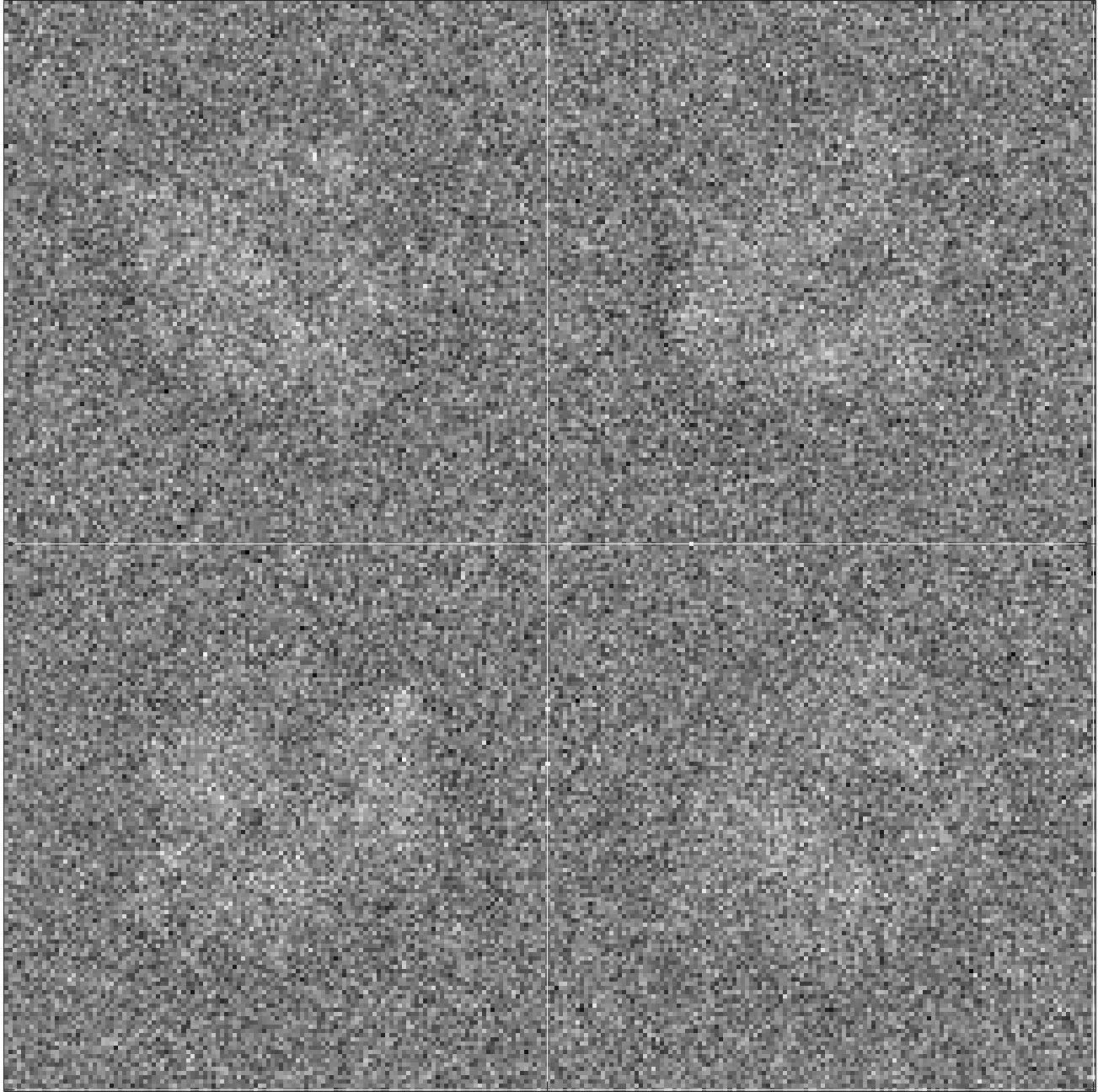}}\;
	\subfloat[$\text{SNR} = 0.01$]{\includegraphics[width= 0.2\textwidth]{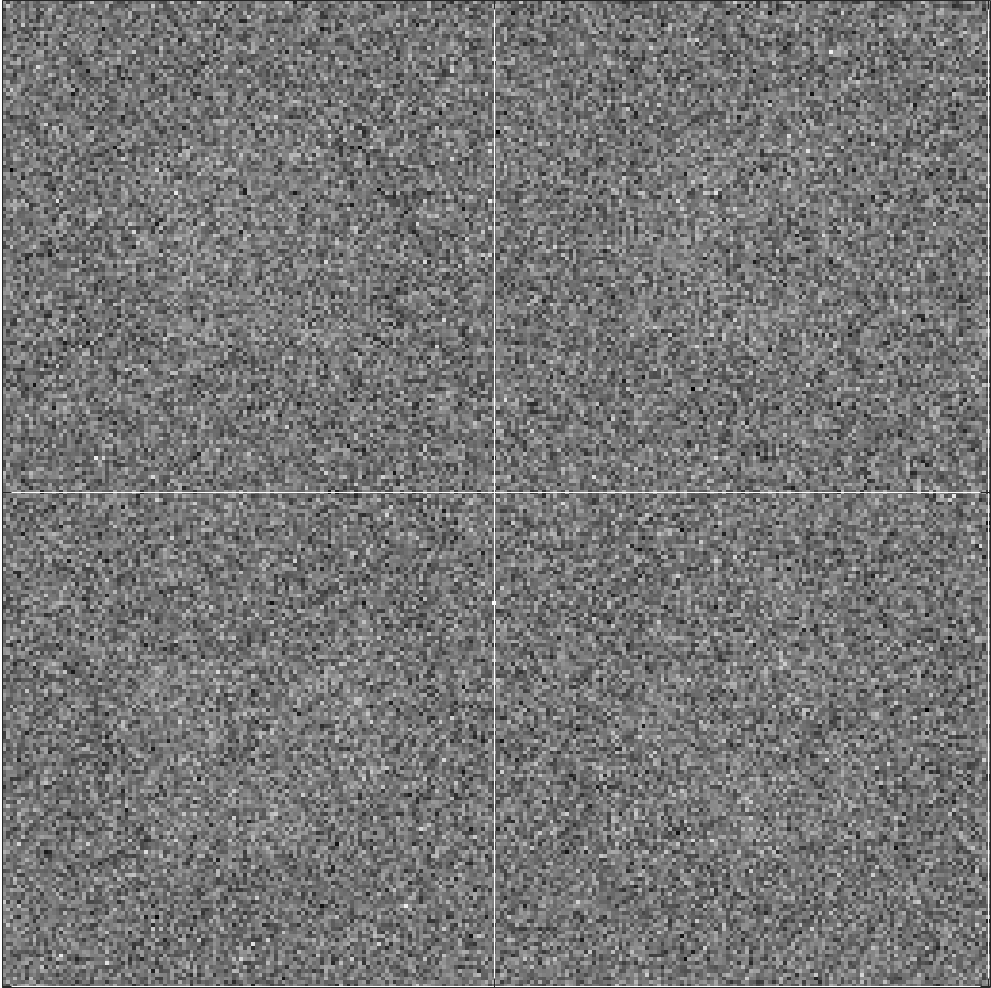}}\;
	\subfloat[$\text{SNR} = 0.008$]{\includegraphics[width= 0.2\textwidth]{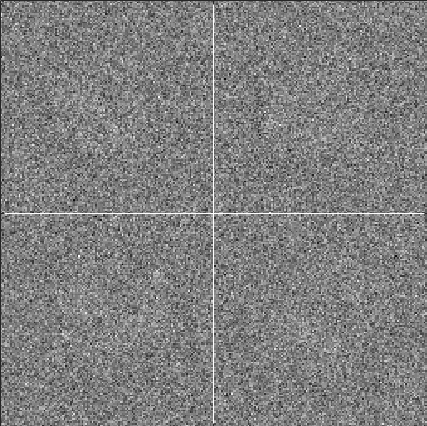}}
\caption{\small Samples of simulated projection images on 70S ribosome. From left to right: Clean projection images, images contaminated by additive white Gaussian noise with signal to noise ratio SNR$ = 0.05,\, 0.01, \text{ and }0.008$.}
\label{fig:cryo_sample}
\vspace{-0.4cm}
\end{figure}

\begin{figure}[t!]
	\centering
	\setlength\tabcolsep{1.0pt}
	\renewcommand{\arraystretch}{0.6}
	\begin{tabular}{p{0.05\textwidth}<{\centering} p{0.3\textwidth}<{\centering} p{0.3\textwidth}<{\centering} p{0.3\textwidth}<{\centering}}
		& $\boldsymbol{k = 1}$ & $\boldsymbol{k = 3}$ & $\boldsymbol{k = 5}$ \\
		\raisebox{1.2cm}{\rotatebox{90}{\textbf{Clean}}} 
		&\includegraphics[width = 0.3\textwidth]{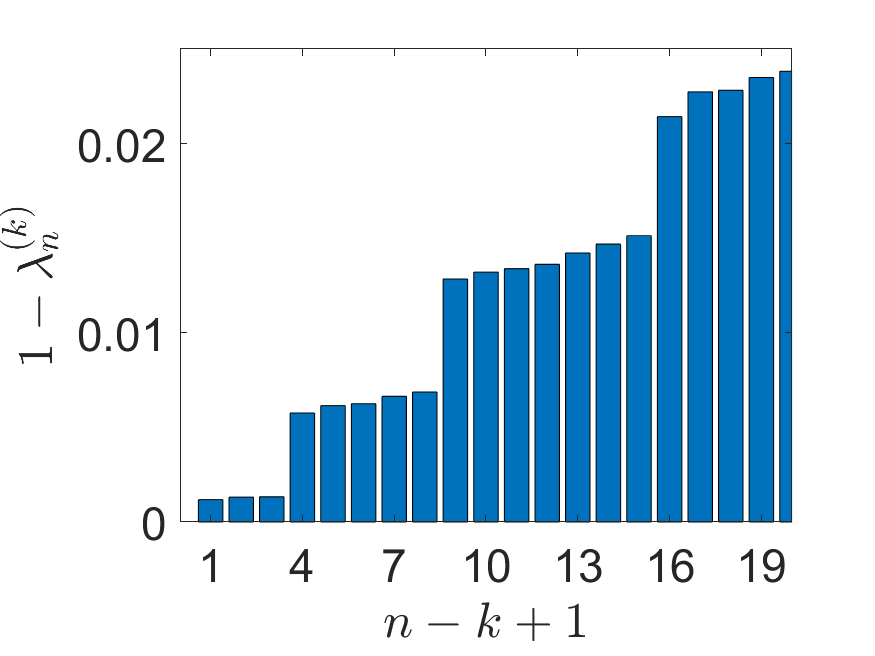} 
		&\includegraphics[width = 0.3\textwidth]{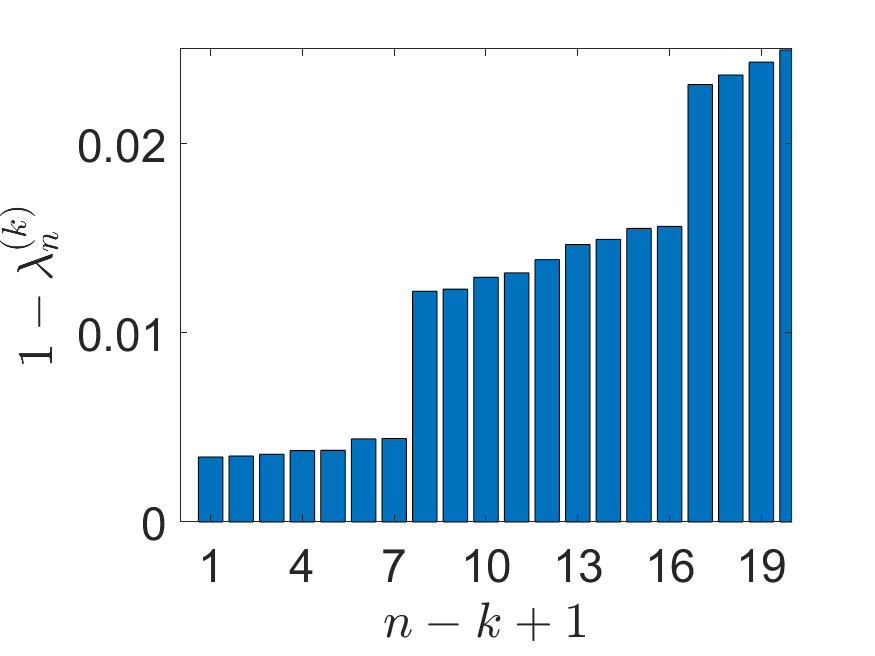} 
		&\includegraphics[width = 0.3\textwidth]{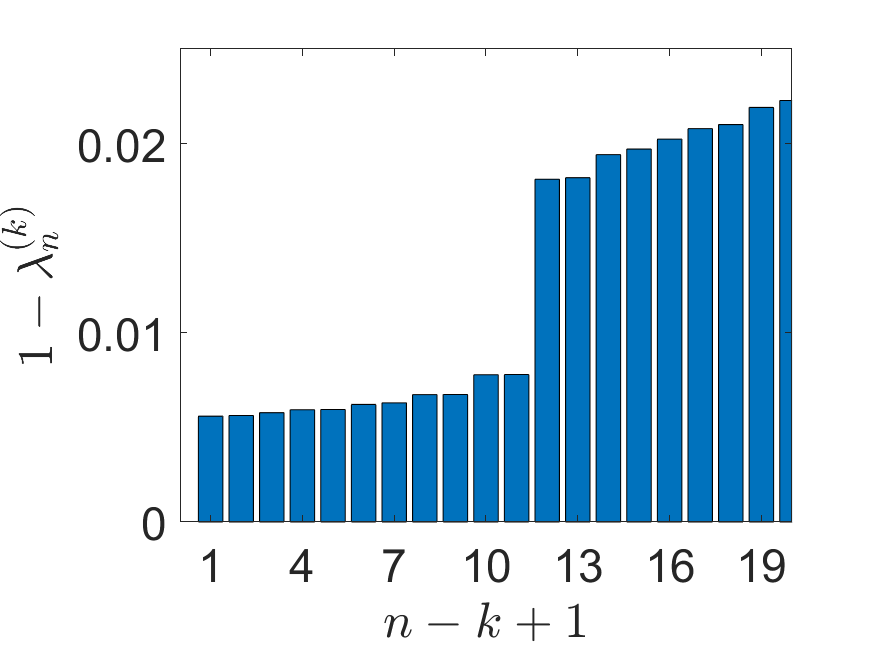} \\
		\raisebox{1.2cm}{\rotatebox{90}{${\textbf{SNR}\bm{ = 0.05}}$}} 
		&\includegraphics[width = 0.3\textwidth]{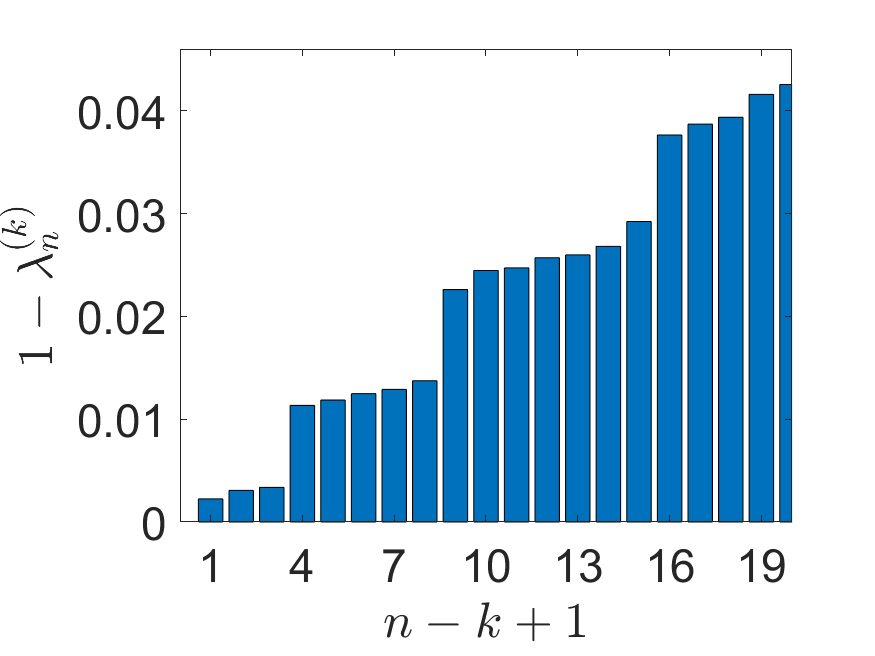} 
		&\includegraphics[width = 0.3\textwidth]{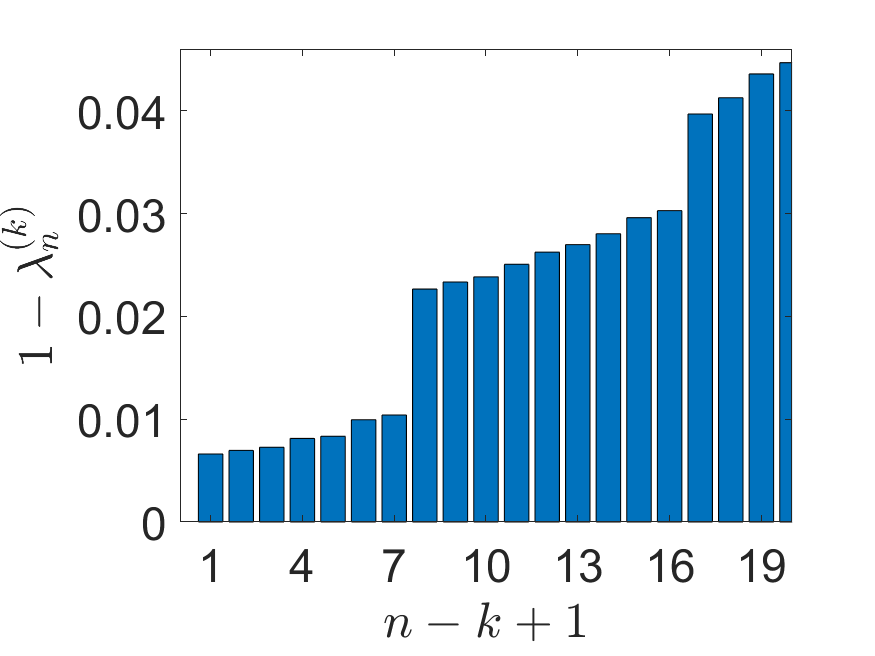} 
		&\includegraphics[width = 0.3\textwidth]{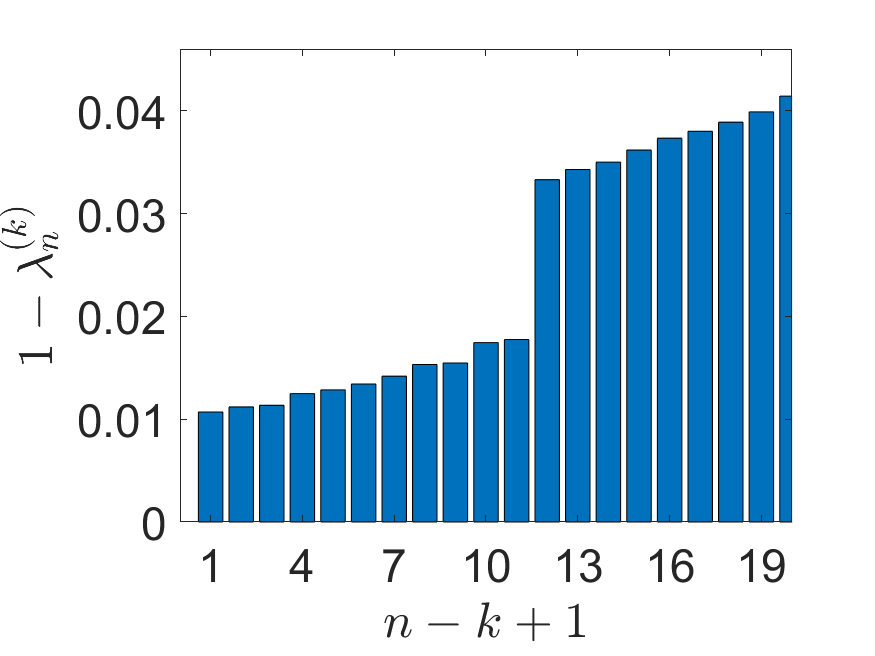} \\
		\raisebox{0.9cm}{\rotatebox{90}{$\boldsymbol{\textbf{SNR} = 0.01}$}} 
		&\includegraphics[width = 0.3\textwidth]{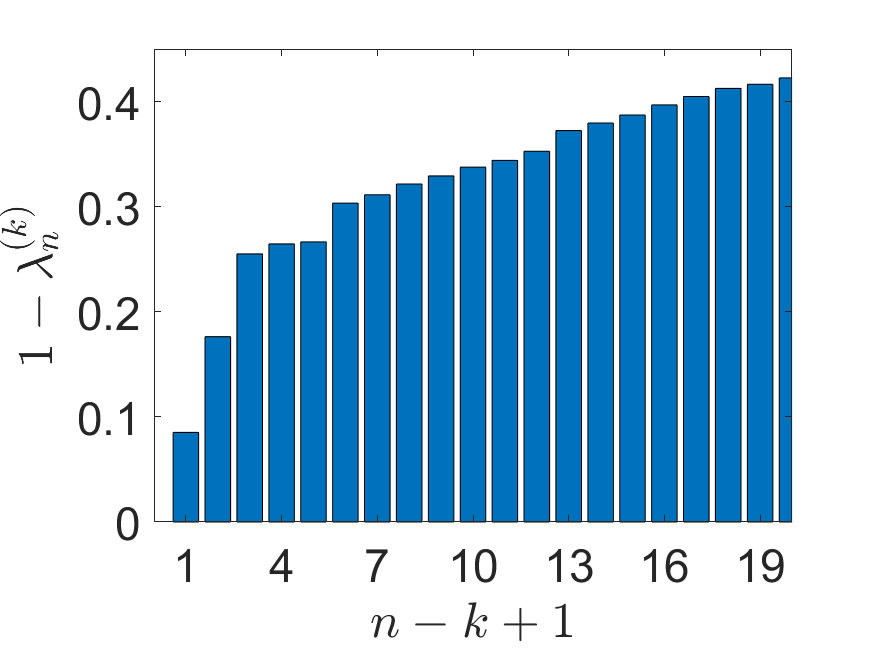} 
		&\includegraphics[width = 0.3\textwidth]{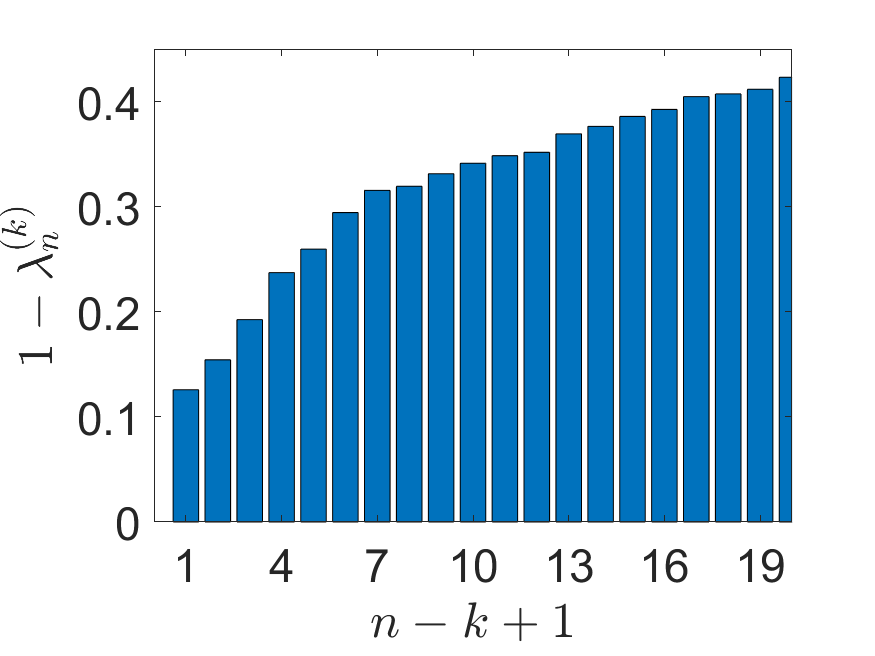} 
		&\includegraphics[width = 0.3\textwidth]{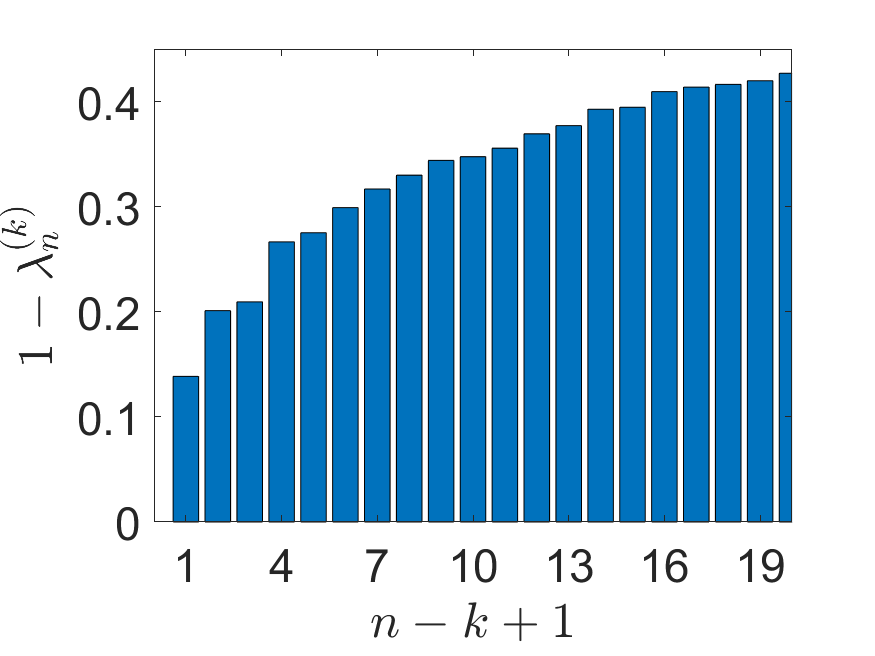} \\
        \raisebox{0.8cm}{\rotatebox{90}{$\boldsymbol{\textbf{SNR} = 0.008}$}} 
		&\includegraphics[width = 0.3\textwidth]{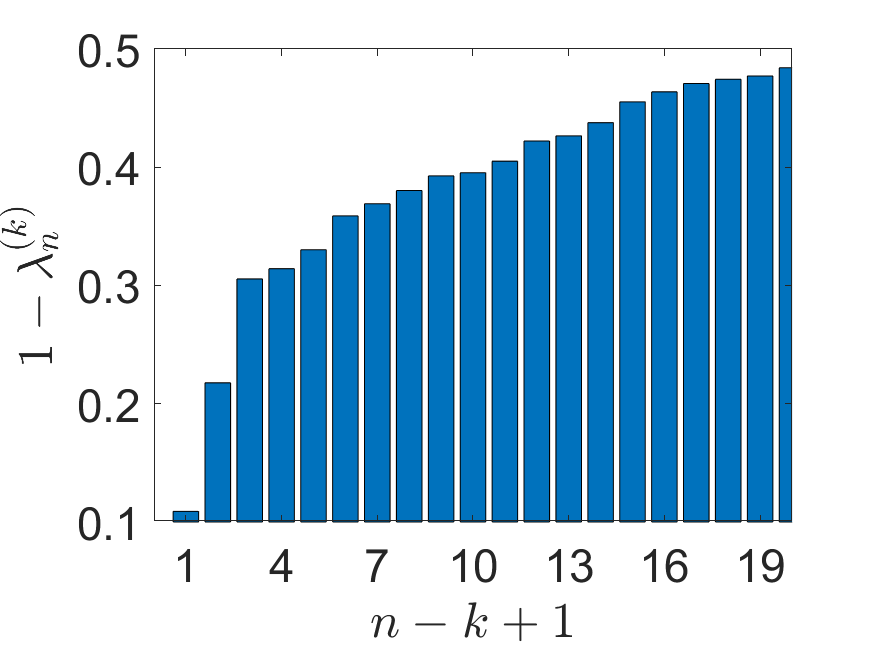} 
		&\includegraphics[width = 0.3\textwidth]{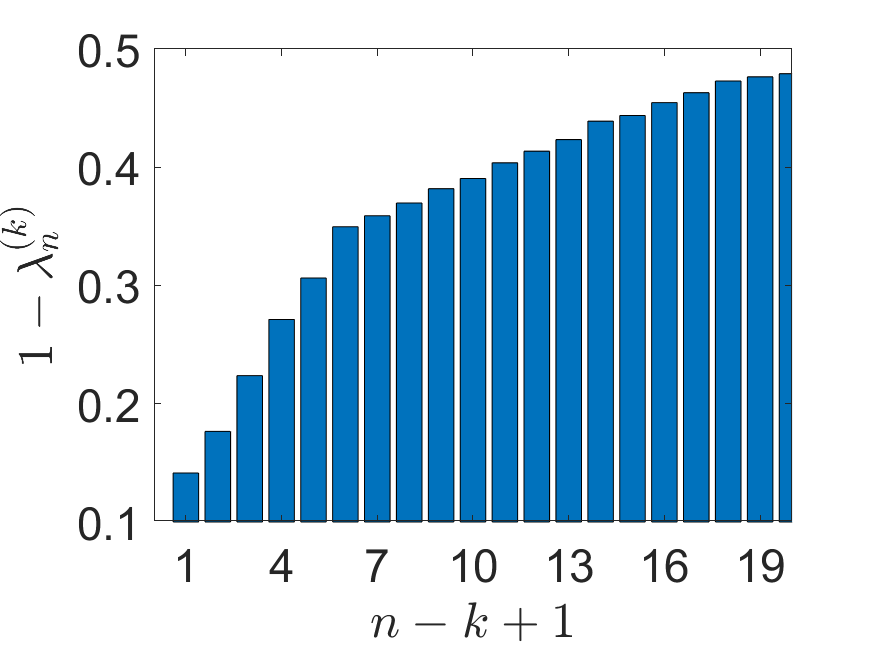} 
		&\includegraphics[width = 0.3\textwidth]{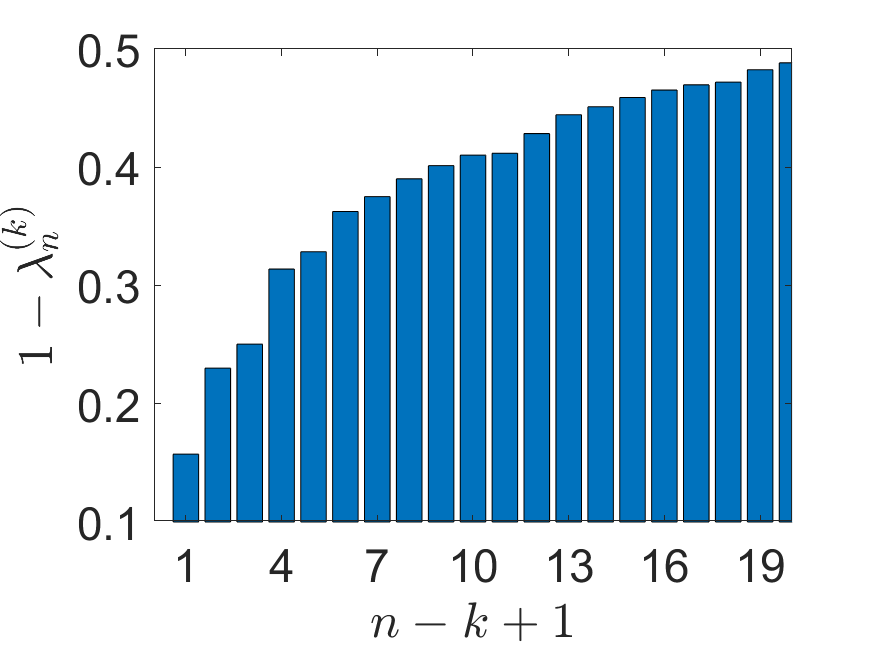} \\
	\end{tabular}
	\caption{\small Bar plots of the top 20 eigenvalues at different frequency $k$ and signal to noise ratio (SNR) for simulated cryo-EM projection images.}
	\label{fig:cryo_spectrum}
	\vspace{-0.4cm}
\end{figure}

\subsection{Experiments with Simulated Cryo-EM Images}
\label{sec:exp_cryo}

\begin{figure}[t!]
	\centering
	\setlength\tabcolsep{1.0pt}
	\renewcommand{\arraystretch}{0.6}
	\begin{tabular}{p{0.05\textwidth}<{\centering} p{0.3\textwidth}<{\centering} p{0.3\textwidth}<{\centering} p{0.3\textwidth}<{\centering}}
		& $\boldsymbol{k = 1}$ & $\boldsymbol{k = 5}$ & $\boldsymbol{k = 10}$ \\
		\raisebox{1.5cm}{\rotatebox{90}{\textbf{Clean}}} 
		&\includegraphics[width = 0.3\textwidth]{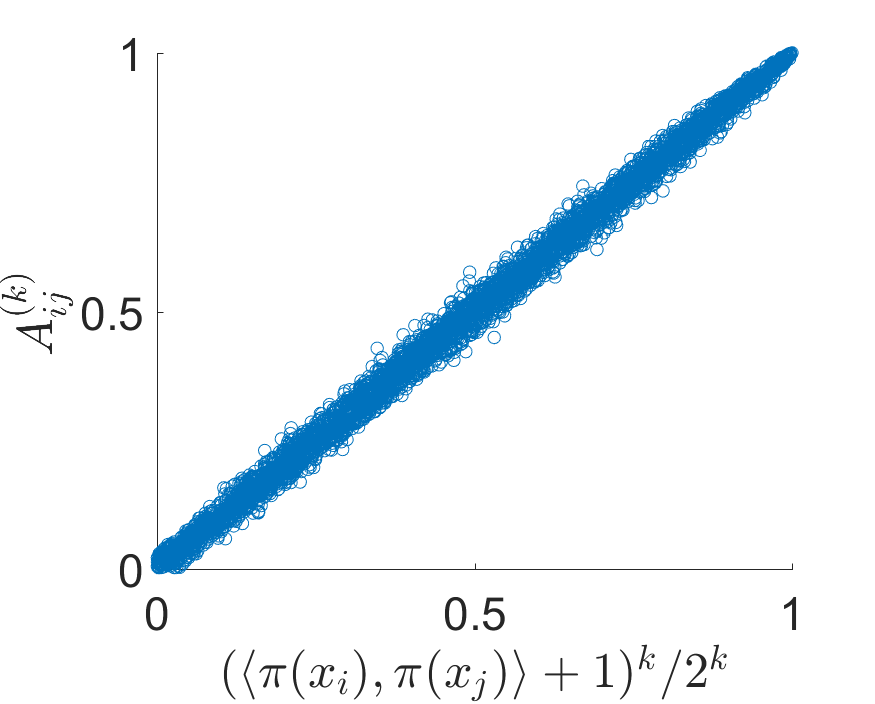} 
		&\includegraphics[width = 0.3\textwidth]{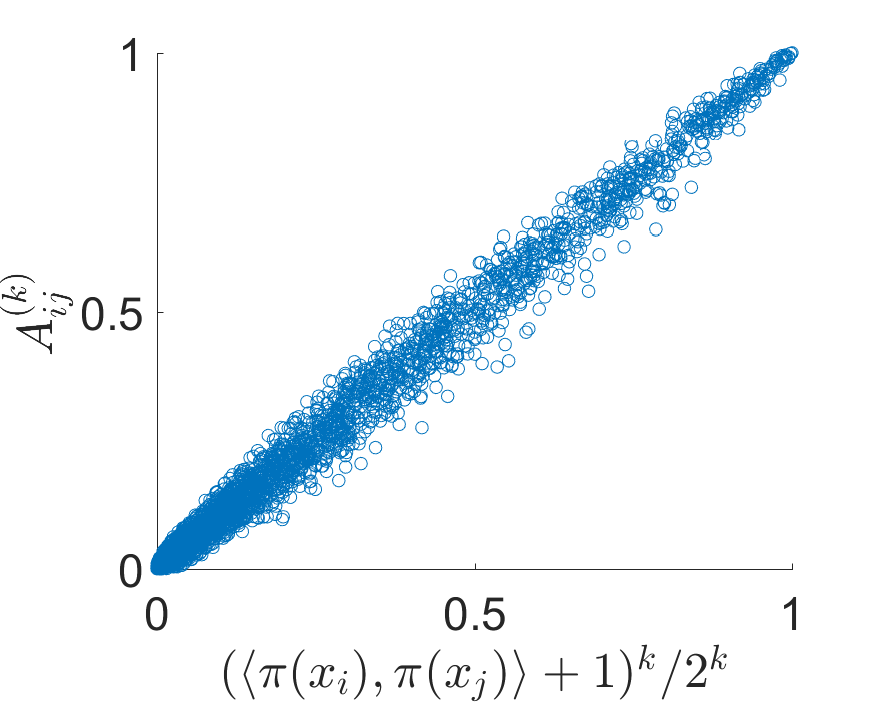} 
		&\includegraphics[width = 0.3\textwidth]{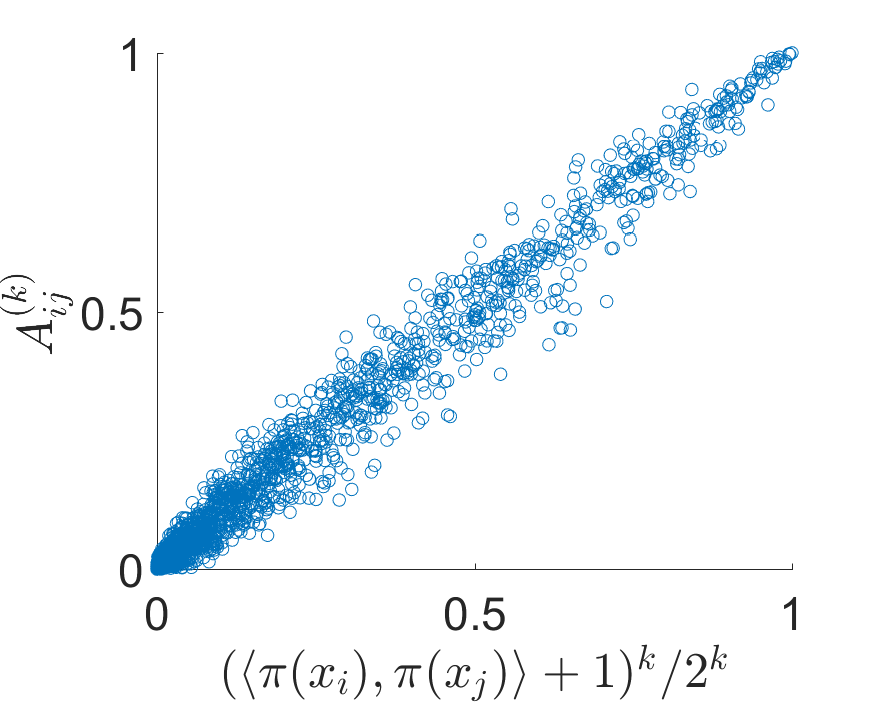}\\\hline
		\raisebox{1.0cm}{\rotatebox{90}{$\bm{\textbf{SNR} = 0.05}$}} 
		&\includegraphics[width = 0.3\textwidth]{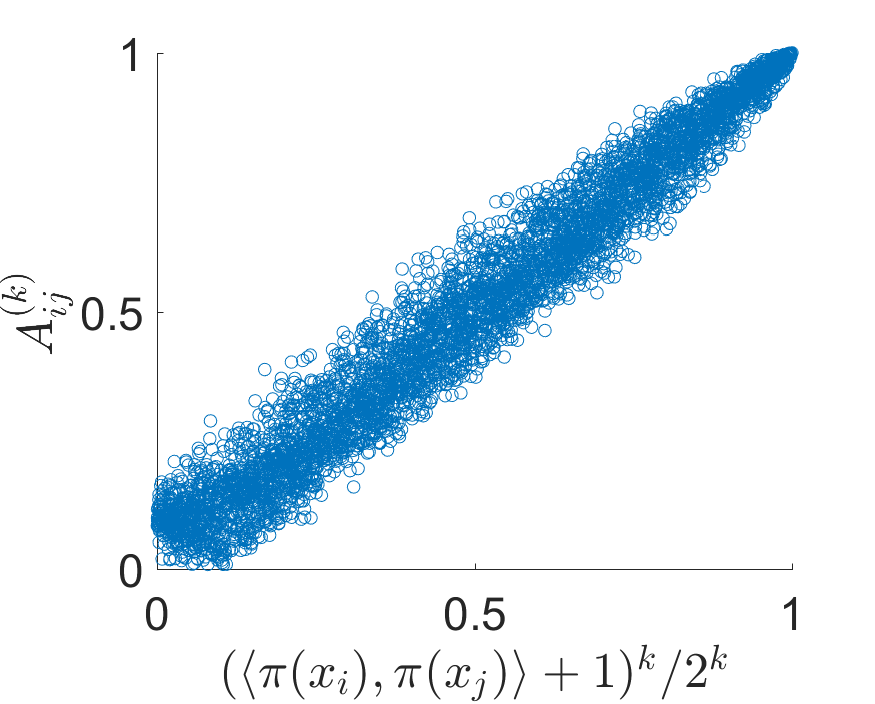} 
		&\includegraphics[width = 0.3\textwidth]{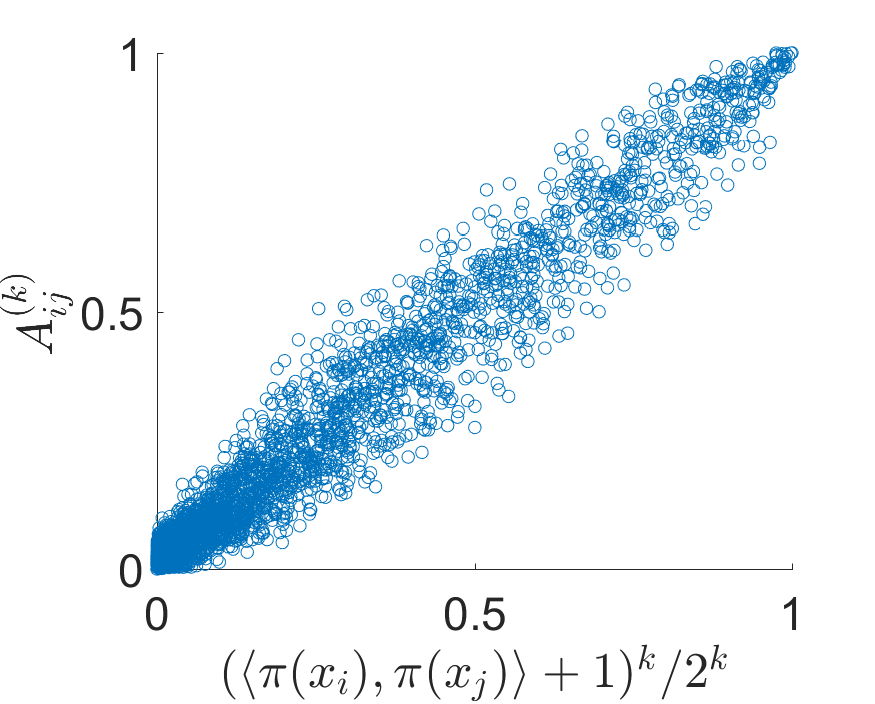} 
		&\includegraphics[width = 0.3\textwidth]{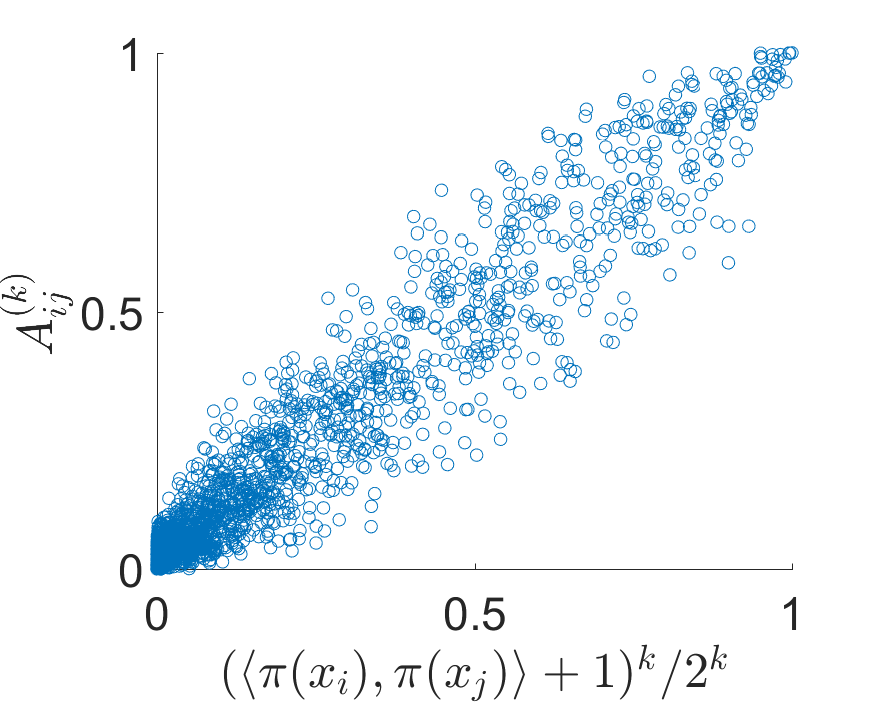}\\\hline
		\raisebox{1.0cm}{\rotatebox{90}{$\boldsymbol{\textbf{SNR} = 0.01}$}} 
		&\includegraphics[width = 0.3\textwidth]{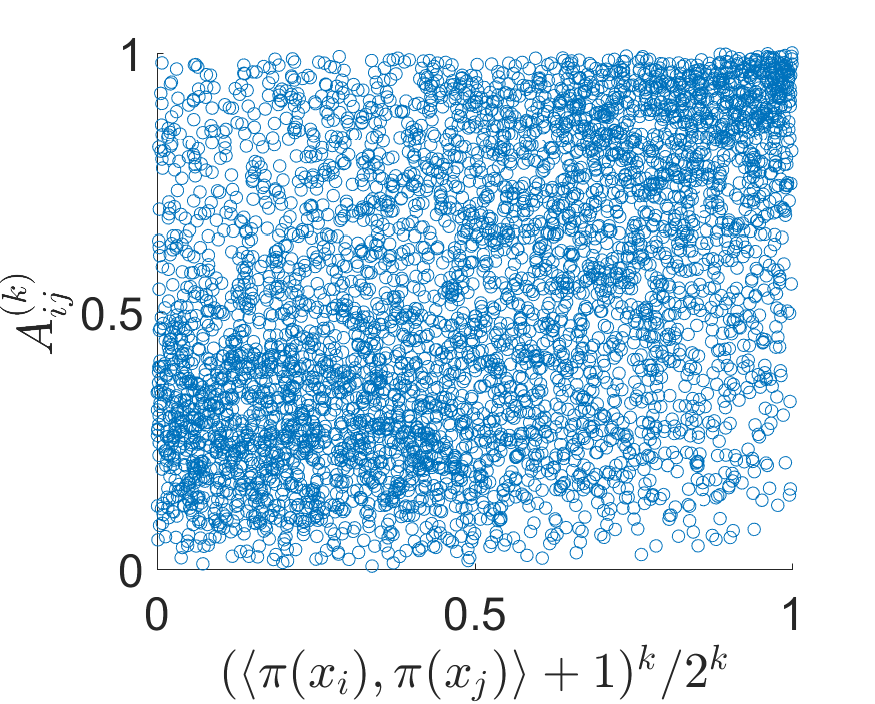} 
		&\includegraphics[width = 0.3\textwidth]{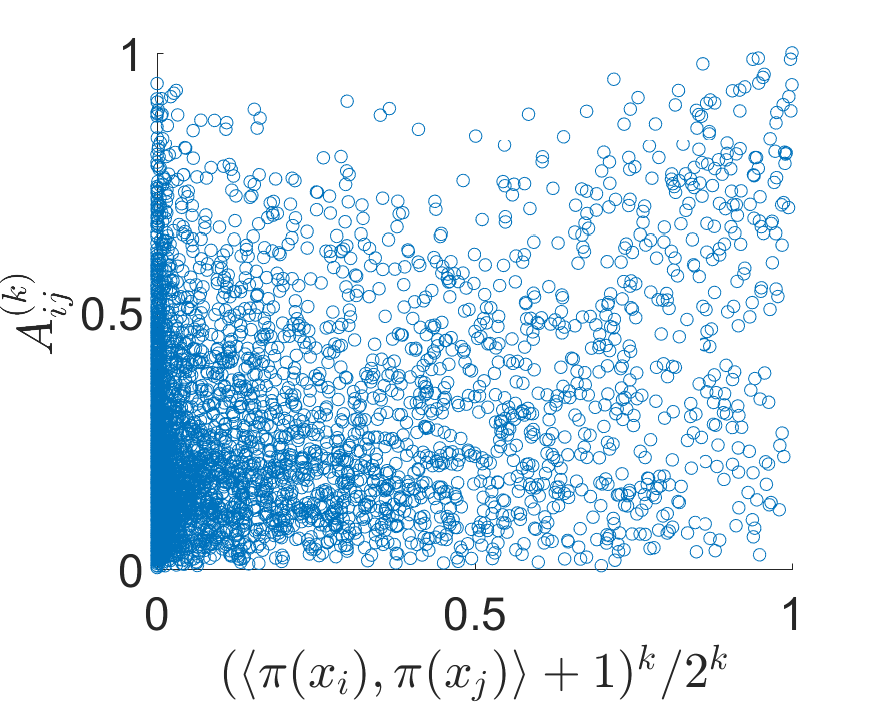} 
		&\includegraphics[width = 0.3\textwidth]{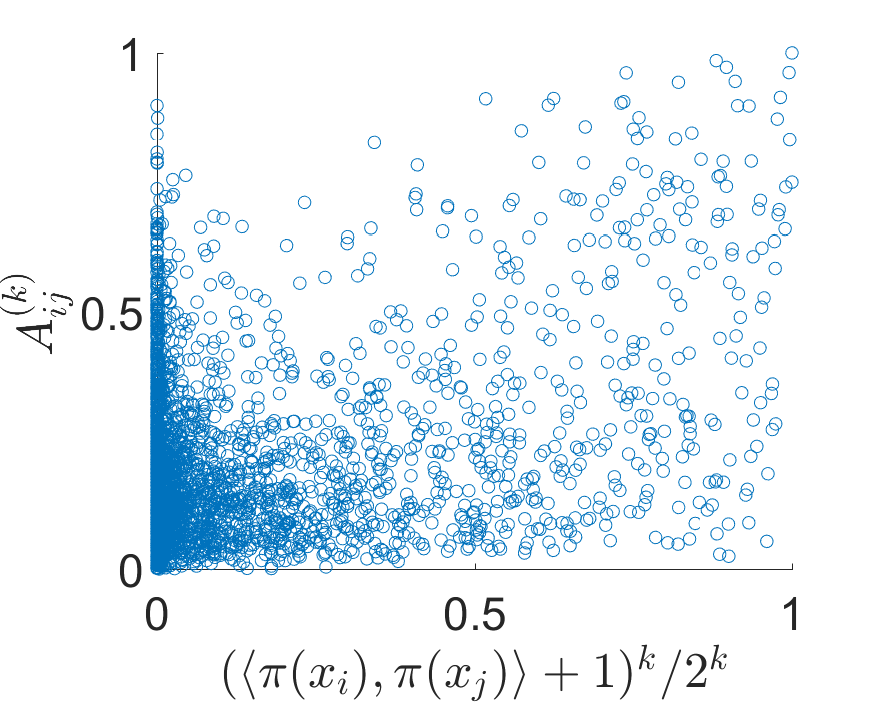}\\ \hline
	  \raisebox{0.9cm}{\rotatebox{90}{$\boldsymbol{\textbf{SNR} = 0.008}$}} 
		&\includegraphics[width = 0.3\textwidth]{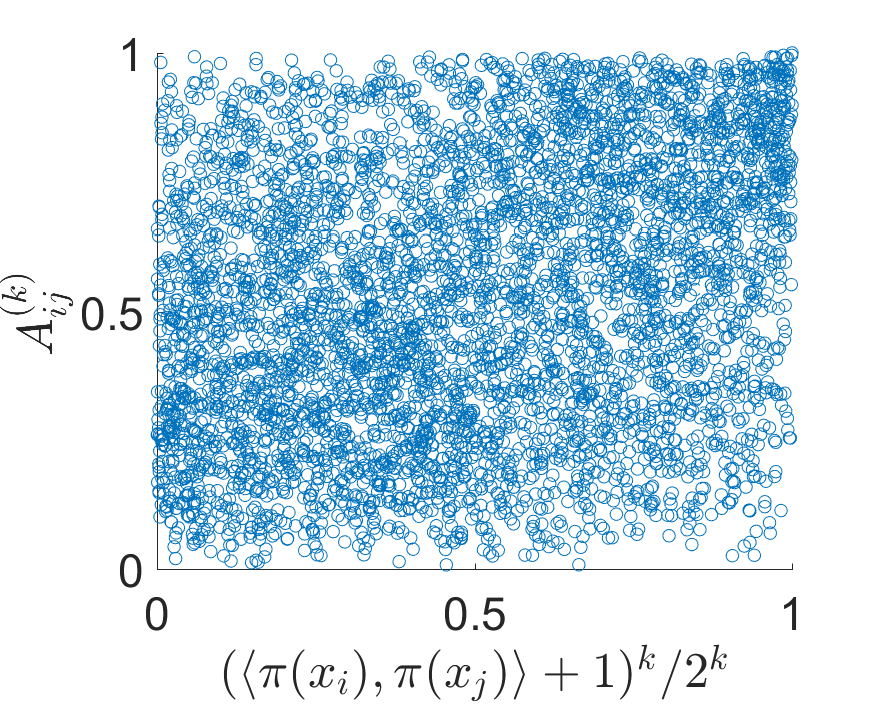} 
		&\includegraphics[width = 0.3\textwidth]{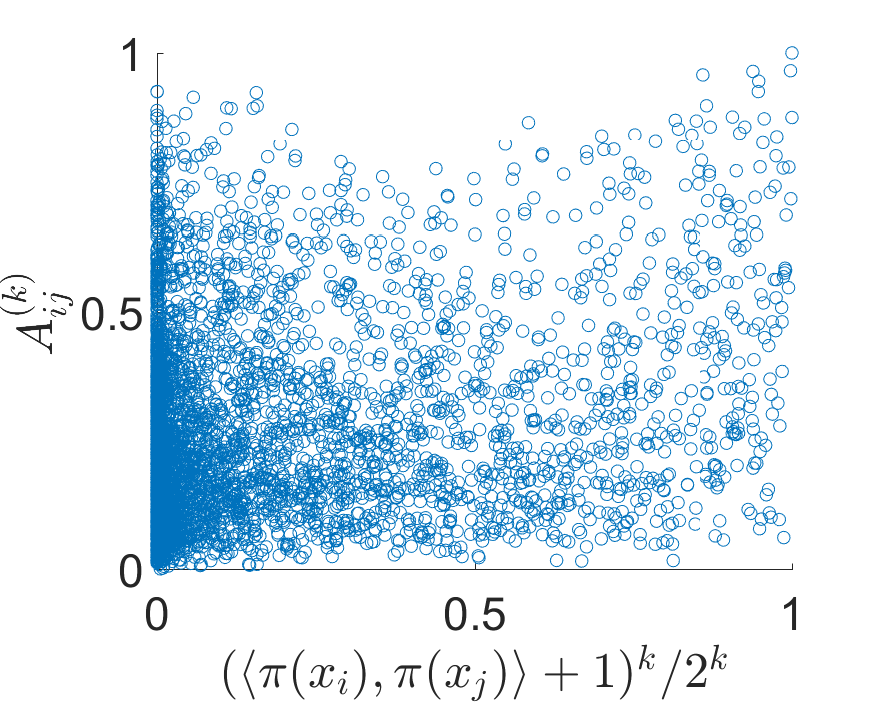} 
		&\includegraphics[width = 0.3\textwidth]{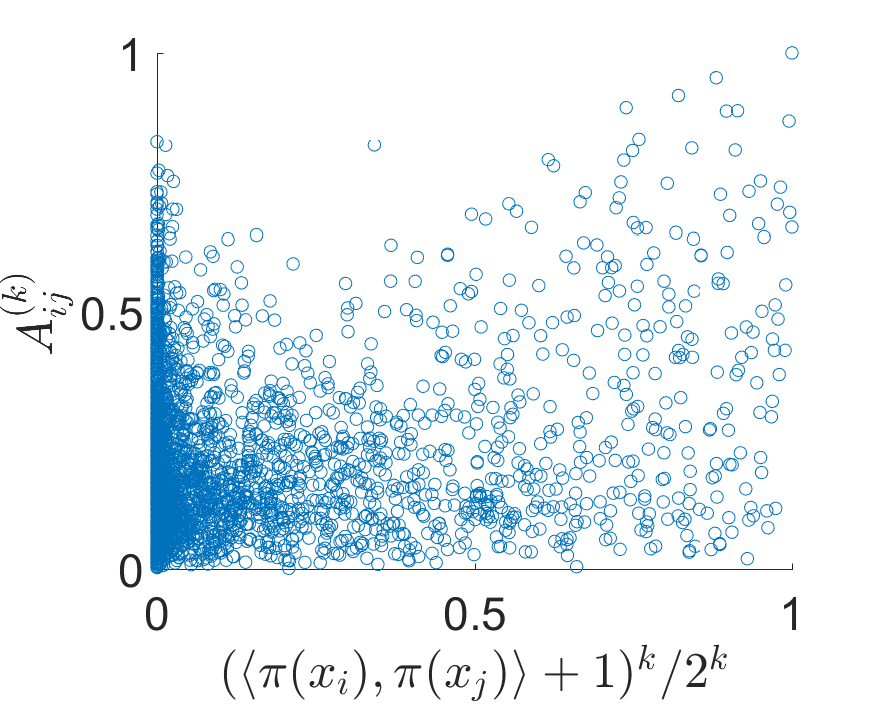}\\
	\end{tabular}
	\caption{\small Scatter plots of $A^{(k)}_{ij}$ against $\left(\langle \pi(x_i), \pi(x_j) \rangle + 1 \right)^k / 2^k$ for clean and noisy projection images with $\text{SNR} = 0.05, \, 0.01, \, 0.008$, at frequency $k = 1, 5, \, \text{and }10$. }
	\label{fig:cryo_scatter}
	\vspace{-0.4cm}
\end{figure}

\begin{figure}
	\centering
	\subfloat[Clean Projections]{\includegraphics[width= 0.23\textwidth]{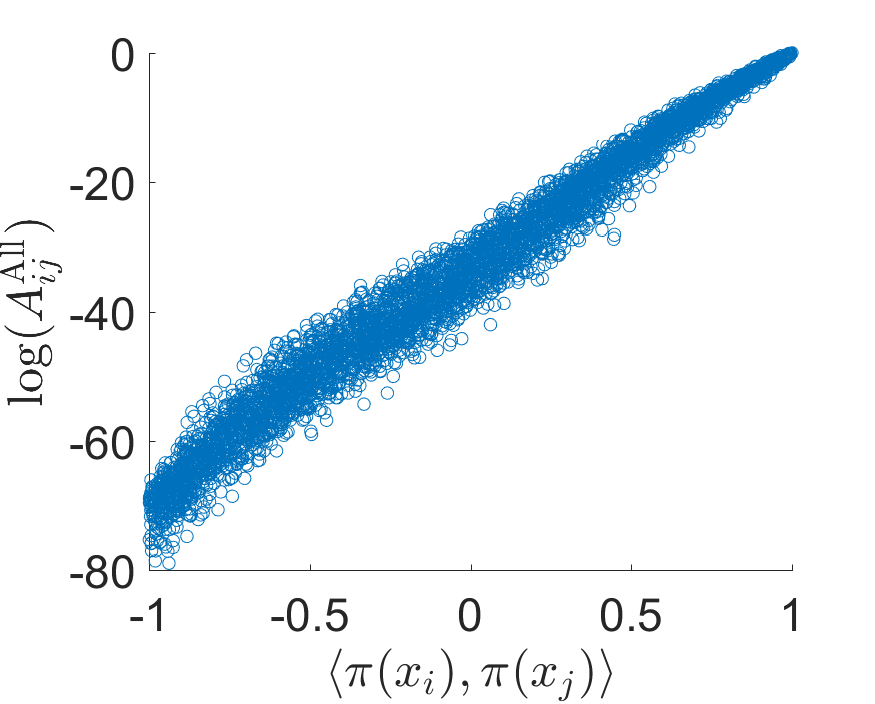}}
	\subfloat[$\text{SNR} = 0.05$]{
		\includegraphics[width = 0.23\textwidth]{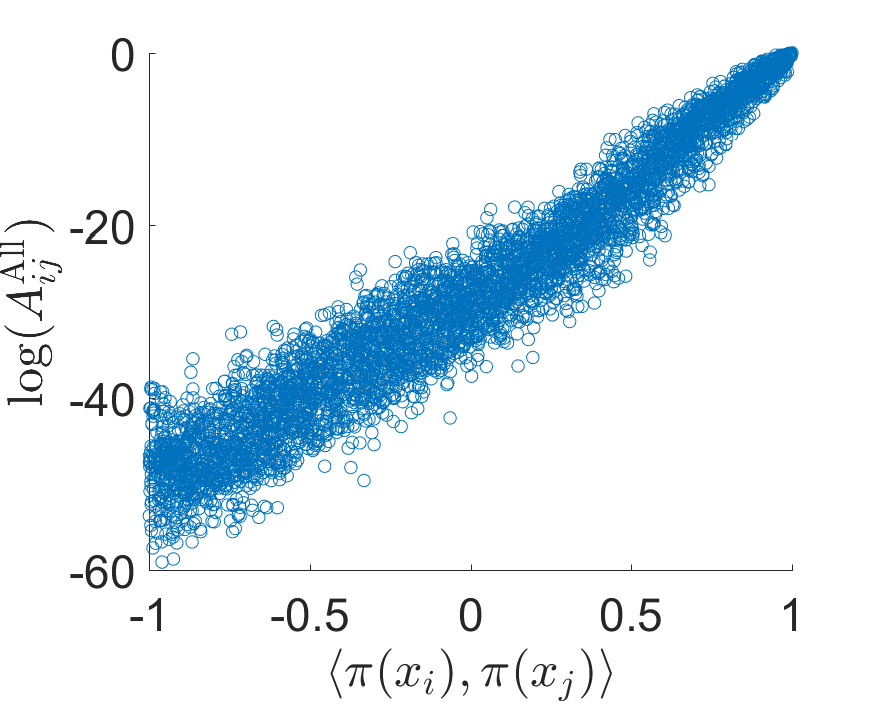}}
	\subfloat[$\text{SNR} = 0.01$]{
		\includegraphics[width = 0.23\textwidth]{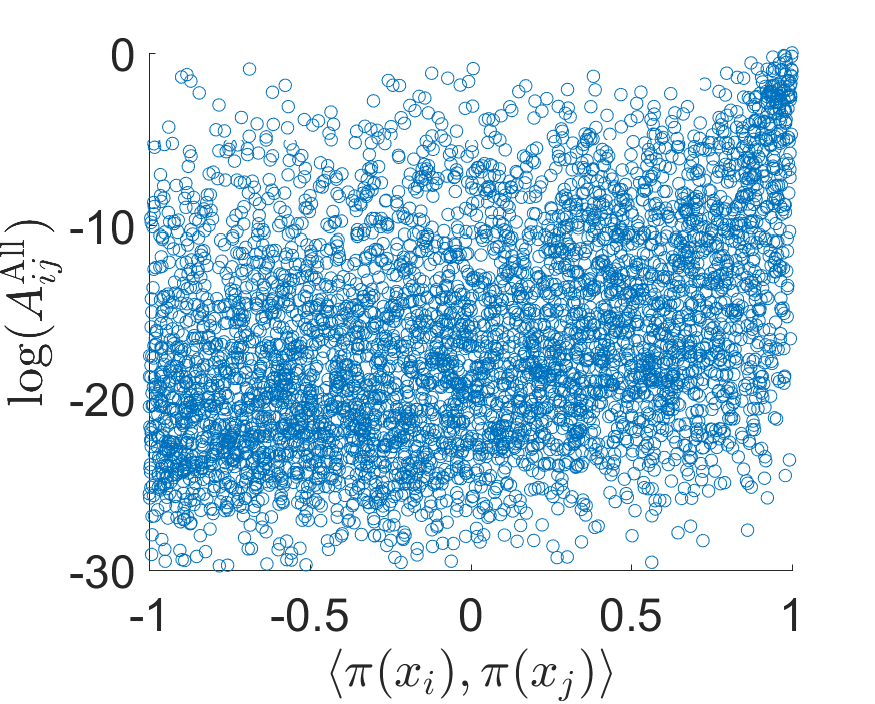}}
	\subfloat[$\text{SNR} = 0.008$]{
		\includegraphics[width = 0.23\textwidth]{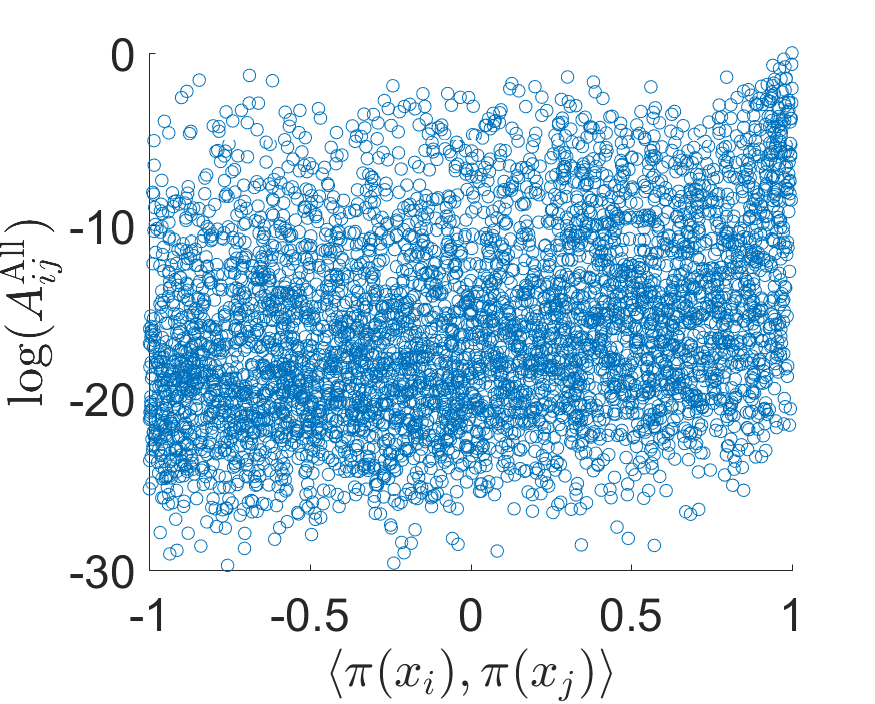}}
	\caption{\small Scatter plots for log multi-frequency class averaging affinity $\log A^{\text{All}}_{ij}$ against $\langle \pi(x_i), \pi(x_j) \rangle $ for clean and noisy projection images with $\text{SNR} = 0.05,\, 0.01,\text{ and } 0.008$.}
	\label{fig:cryo_scatter_log}
\end{figure}

In this section, we apply multi-frequency class averaging on simulated cryo-EM projection images. For each image, the goal is to identify projection images with similar viewing directions. We simulate $N = 10,000$ clean projection images of size $129 \times 129$ pixels from a 3-D electron density map of the 70S ribosome. The orientations for the projection images are uniformly distributed over $\SO(3)$. The clean images are contaminated by additive white Gaussian noise with different signal to noise ratios (SNRs). Sample images are presented in Figure~\ref{fig:cryo_sample}. Here, we do not consider the effects of contrast transfer functions (CTFs) on the images. In order to initially identify similar images and the corresponding rotational alignments, we first expand each image on steerable basis, and denoise the images by using \emph{steerable PCA} (sPCA)~\cite{zhao2016fast}. Then we generate the rotationally invariant features~\cite{zhao2014rotationally} from the filtered expansion coefficients to efficiently identify nearest neighbors without performing all pairwise alignments. The optimal alignment parameters are estimated between initial nearest neighbor pairs. The initial nearest neighbor list and alignment parameters are used to construct the initial graph. For clean images, the initial graph corresponds to the true neighborhood graph. For the extremely noisy images illustrated in Figure~\ref{fig:cryo_sample}, the initial similarity measure is corrupted by noise and images of totally different views can be misidentified as nearest neighbors.

In Figure~\ref{fig:cryo_spectrum}, we present the spectral patterns of the top eigenvalues of $\widetilde{H}^{(k)}$ built from our initial neighborhood identification and rotational alignment. At high SNR, such as $\text{SNR} \geq 0.05$, we can clearly observe the multiplicities $2k+1, 2k+3, 2k+5,\ldots$ and the spectral gaps. As the SNR decreases, such spectral patterns deteriorate. 
\begin{figure}
	\centering
	\subfloat[Clean Projections]{
		\includegraphics[width = 0.23\textwidth]{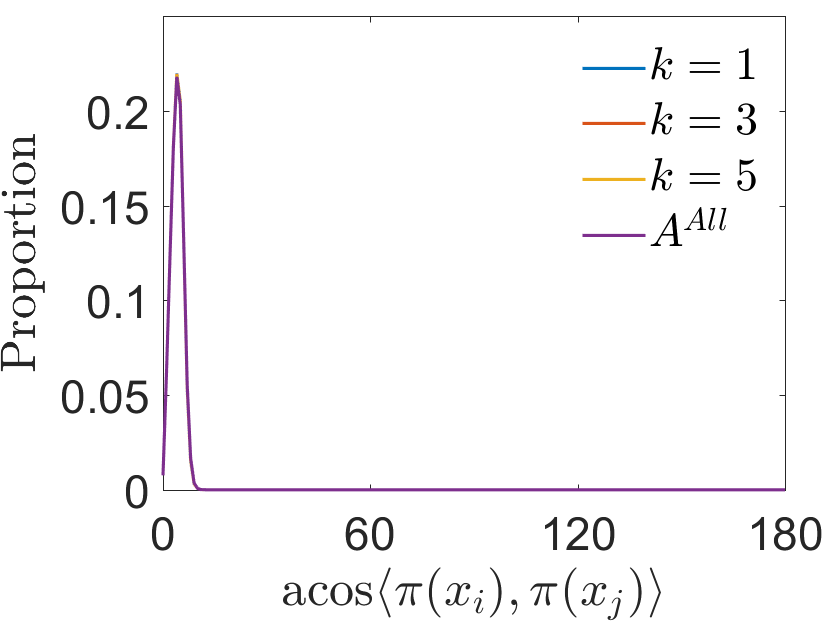}
		\label{fig:mfca_cryo_clean}
	}
	\subfloat[$\text{SNR} = 0.05$]{
		\includegraphics[width = 0.23\textwidth]{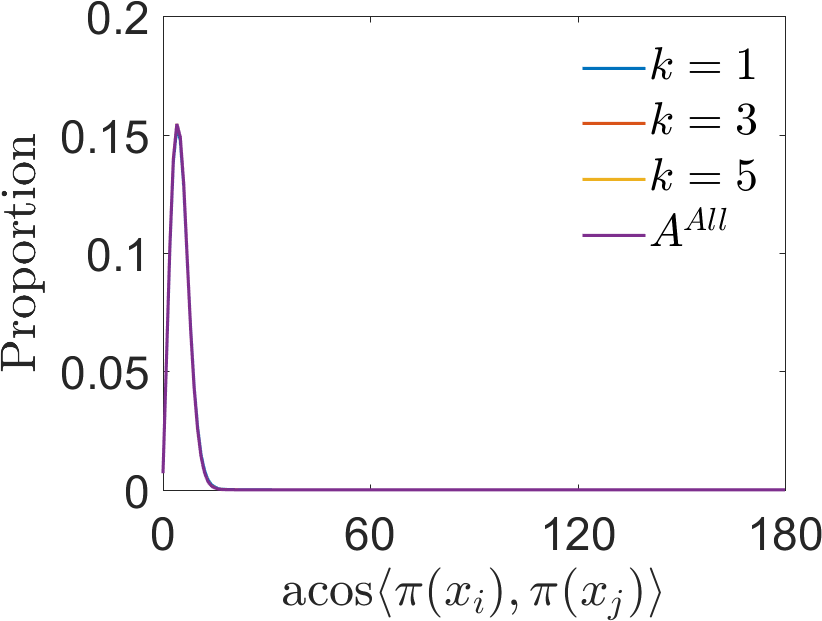}
		\label{fig:mfca_cryo_0_05}
	}
	\subfloat[$\text{SNR} = 0.01$]{
		\includegraphics[width = 0.23\textwidth]{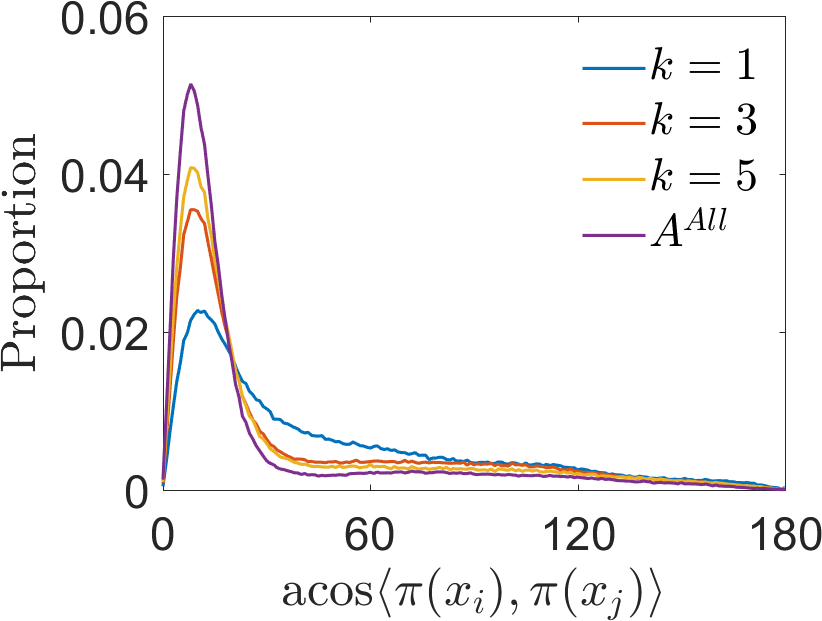}
		\label{fig:mfca_cryo_0_01}
		}
	\subfloat[$\text{SNR} = 0.008$]{
		\includegraphics[width = 0.23\textwidth]{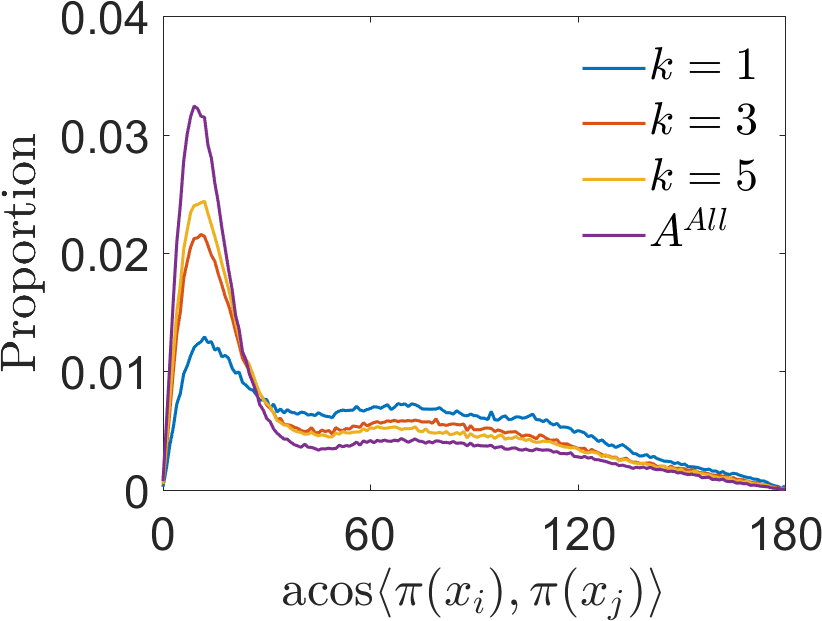}
		\label{fig:mfca_cryo_0_008}
	}
	\caption{\small Histogram of the angles ($x$-axis, in degrees) between the viewing directions of $10,000$ simulated cryo-EM projection images and its 50 neighboring projection images, with different SNRs, from \textit{left} to \textit{right}: clean projection images, $\text{SNR} = 0.05,\, 0.01,\, 0.008$. Here we set the maximum frequency $k_\text{max} = 20$.}
	\label{fig:mfca_cryo_hist}
		\vspace{-0.5cm}
\end{figure}
\begin{figure}
    \centering
    \subfloat[SNR$=0.05$]{
    \includegraphics[height = 0.23\textwidth]{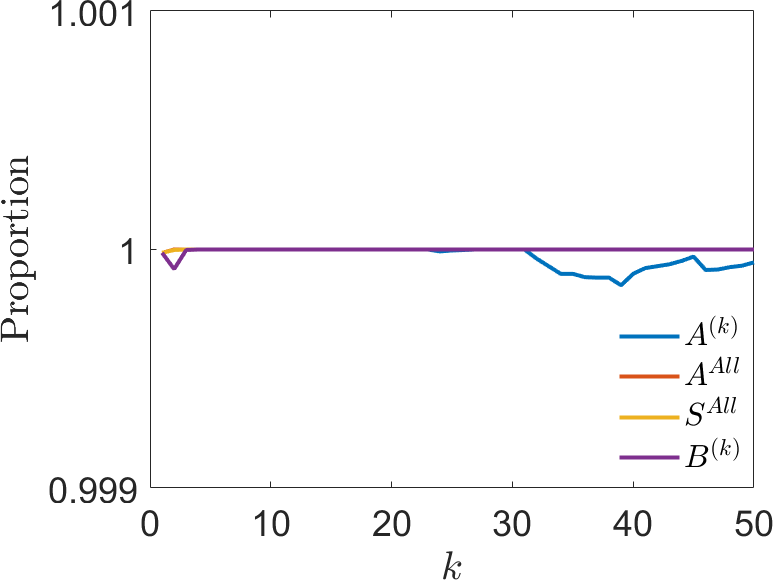}
    \label{fig:cryo_EM_vark_snr005}
    }
    \subfloat[SNR$=0.01$]{
    \includegraphics[height = 0.23\textwidth]{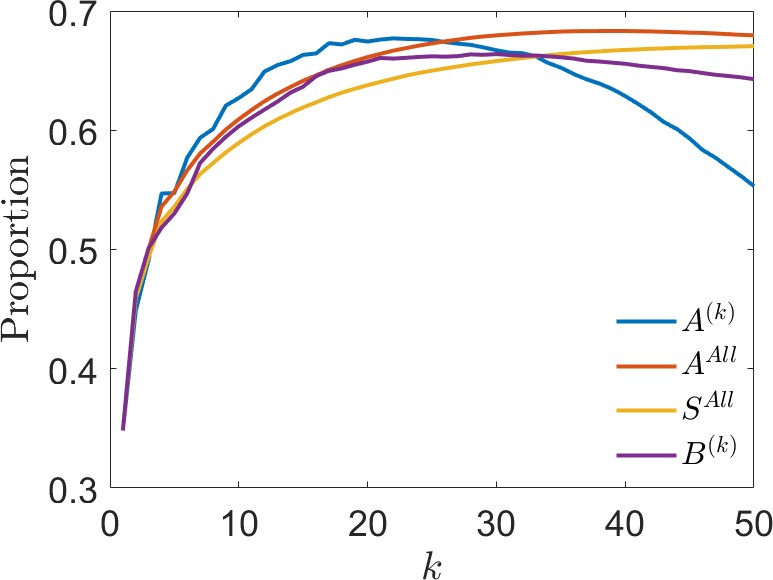}
    \label{fig:cryo_EM_vark_snr001}
    }
    \subfloat[SNR$=0.008$]{
    \includegraphics[height = 0.23\textwidth]{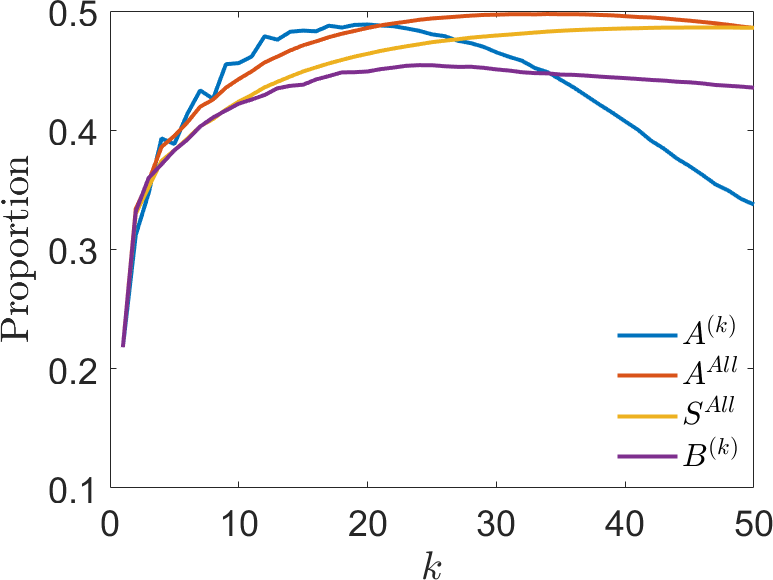}
    \label{fig:cryo_EM_vark_snr0008}
    }
    \caption{\small Comparing the performance of different affinities according to $A^{(k)}$, $A^{\text{All}}$, $S^{\text{All}}$, and $B^{(k)}$ for varying $k$ on noisy cryo-EM images. We evaluate the proportion of the estimated nearest neighbors that satisfy $\langle \pi(x_i), \pi(x_j)\rangle > 0.9$. }
    \label{fig:cryo_EM_vark}
    	\vspace{-0.5cm}
\end{figure}

In Figure~\ref{fig:cryo_scatter}, we present the scatter plots of $A_{ij}^{(k)}$ against $\left(\langle \pi(x_i), \pi(x_j) \rangle + 1 \right )^k/2^k $, with different SNRs. Similar to the synthetic dataset, the $A_{ij}^{(k)}$'s at frequency $k = 1$ fail at low SNRs, such as $\text{SNR} = 0.01\, 0.008$, while the $A_{ij}^{(k)}$'s at frequency $k = 5, 10$ are still able to distinguish the images with similar viewing directions (i.e., $\left(\langle \pi(x_i), \pi(x_j) \rangle + 1 \right )^k/2^k \approx 1$). This result indicates that better neighborhood image identification can be attained using higher frequency $k$. Moreover, Figure~\ref{fig:cryo_scatter_log} shows the scatter plots of the combined affinity against the dot products $\langle \pi(x_i), \pi(x_j) \rangle $ between the true viewing angles at varying SNRs. Even at $\text{SNR} = 0.01$, the combined affinity $A_{ij}^\text{All}$ is still able to distinguish projection images that have similar views $\pi(x)$, in contrast to the approximation results in Figure~\ref{fig:cryo_scatter}. 
\begin{figure}
	\captionsetup[subfigure]{oneside,margin={0.03cm,0cm}}
	\centering
	\subfloat[$\langle \pi(x_i), \pi(x_j) \rangle = -0.55$]{\includegraphics[width= 0.28\textwidth]{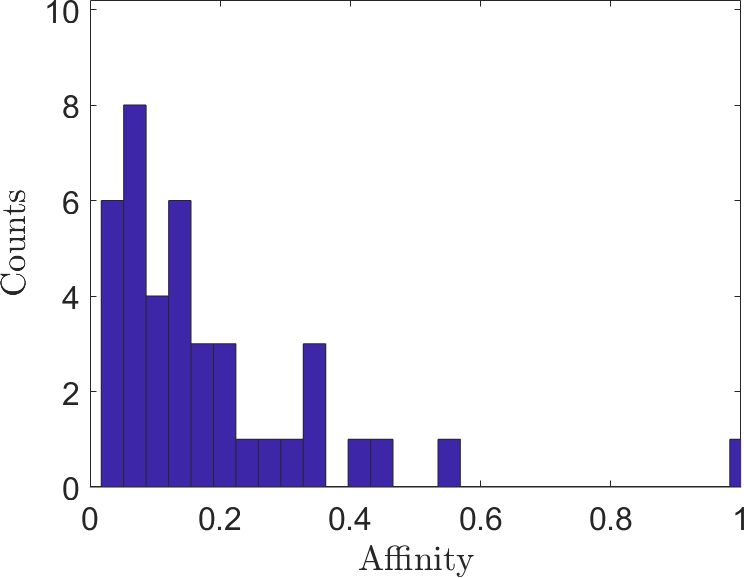}
	\label{fig:hist_score_cryoEM_a}}\,
	\subfloat[$\langle \pi(x_i), \pi(x_j) \rangle = 0.99$]{\includegraphics[width= 0.28\textwidth]{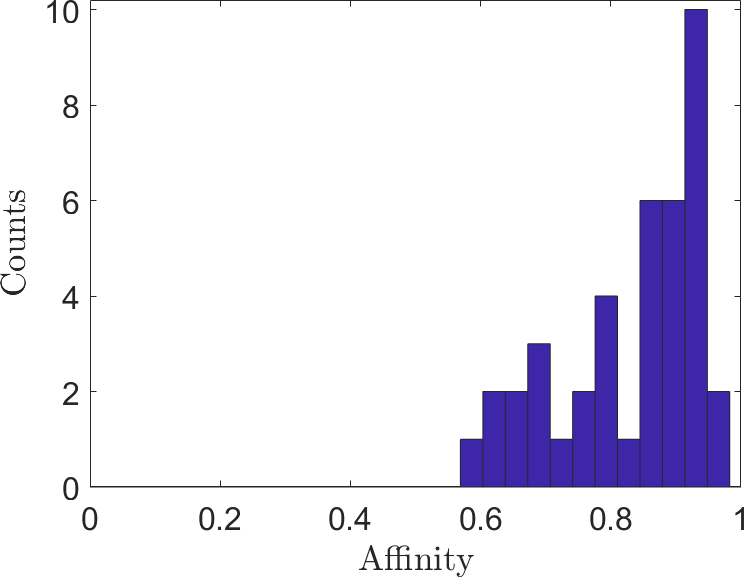}
	\label{fig:hist_score_cryoEM_b}} \,
	\subfloat[$\langle \pi(x_i), \pi(x_j) \rangle = -0.22$]{\includegraphics[width= 0.28\textwidth]{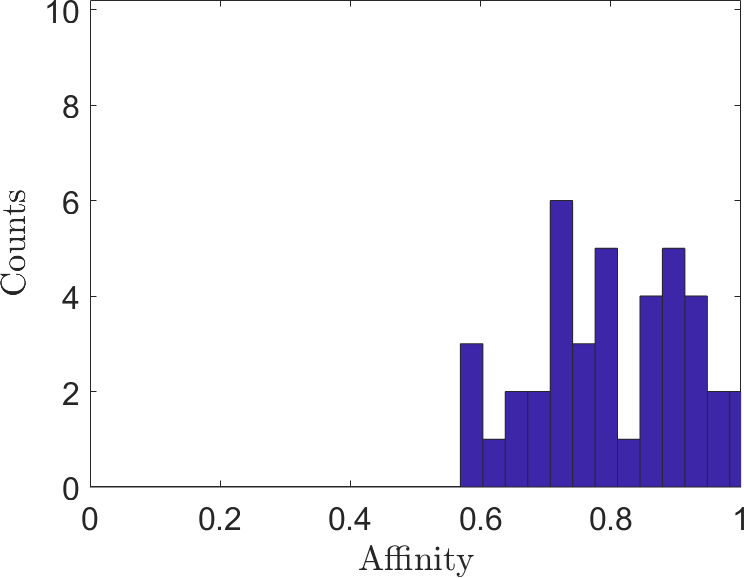}
	\label{fig:hist_score_cryoEM_c}}
	\caption{\small Histograms of the affinities $A^{(k)}_{ij}$ with $k = 1, \dots, 40$ for \protect \subref{fig:hist_score_cryoEM_a} a pair of wrongly identified nearest neighbors by $A^{(1)}$, \protect \subref{fig:hist_score_cryoEM_b} a good nearest neighbor pair identified by $A^\text{All}$, and \protect \subref{fig:hist_score_cryoEM_c} a wrongly identified nearest neighbor by $A^\text{All}$. The simulated cryo-EM images are of SNR$=0.01$.}
	\label{fig:hist_score_cryoEM}
	\vspace{-0.5cm}
\end{figure}
In Figure~\ref{fig:mfca_cryo_hist}, we evaluate the results by plotting the histogram of angels between viewing directions $\arccos (\pi(x_i), \pi(x_j))$ between all identified neighboring images $I_i$ and $I_j$. At high SNR, such as $\text{SNR} = 0.05$, using single frequency information as $k = 1,3,5$ can achieve similar results as combining all the frequencies. At low SNRs, such as $\text{SNR} = 0.01\,\,\text{and}\,\, 0.008$, $A^\text{All}$ which uses all frequencies information up to $k = 20$, outperforms the results obtained from using only a single frequency at $k = 1,\, 3,\, 5$.
 
In Figure~\ref{fig:cryo_EM_vark}, we compare the nearest neighbor classification results using affinities 
$A^{(k)}$, $B^{(k)}$, $A^{\text{All}}$, 
and $S^{\text{All}}$ at various frequency index $k$ for noisy images with SNR$= 0.05$, $0.01$, and $0.008$. The latter two affinities combine $A^{(k')}$ for $k' = 1, \dots, k$. Each image is identified with 50 nearest neighbors and we evaluate the proportion of the estimated nearest neighbors that satisfy $\langle \pi(x_i), \pi(x_j) \rangle > 0.9$. At SNR$=0.05$, all approaches achieve high accuracy (see Figure~\ref{fig:cryo_EM_vark_snr005}). At SNR$= 0.01$, $A^{(k)}$ is able to achieve better classification results than $B^{(k)}$ for $k$ between $4$ and $32$ and the proportion reaches $67.7\%$ for $A^{(k)}$ at $k = 22$. Using $A^{\text{All}}$ can improve the results further at $k = 40$, where the proportion reaches $68.3\%$. The improvement of $A^{(k)}$ and $A^\text{All}$ compared with $B^{(k)}$ gets more prominent at lower SNR (see Figure~\ref{fig:cryo_EM_vark_snr0008} with SNR$=0.008$). 

We note that the construction of the initial graph structure relies on the evaluation of the rotational invariant distance based on the steerable PCA expansion coefficients of the projection images~\cite{zhao2013fourier,zhao2016fast,zhao2014rotationally}. Thus the noise model is different from the probablistic models in Section~\ref{sec:noise_model} and the perturbation at each edge is induced by the noise on the corresponding two nodes. Despite the difference in the noise model, we still observe the benefit of using $A^{(k)}$ with $k>1$. However, the improvement of the combined affinity $A^{\text{All}}$ is not as impressive as the examples shown in Figure~\ref{fig:rrm_vark} and Figure~\ref{fig:result_ang_perturb}. Although we observe that certain miss-classified nearest neighbors by $A^{(1)}$ can be corrected by $A^\text{All}$ as shown in Figure~\ref{fig:hist_score_cryoEM_a}, there are still some wrong nearest neighbors that enjoy consistently high affinities across different $k$'s as shown in Figure~\ref{fig:hist_score_cryoEM_c}.

\section{Conclusion and Future Work}
\label{sec:concl-future-work}

We propose in this paper a novel algorithm, referred to as
multi-frequency class averaging (MFCA), for classifying noisy
projection images in three-dimensional cryo-electron
microscopy by the similarity among viewing directions. The
new algorithm is a generalization of the eigenvector-based
approach of intrinsic classification first appeared in
\cite{singer2011viewing,hadani2011representation2}. We also
extended the representation theoretical framework of
\cite{hadani2011representation,hadani2011representation2} by
means of explicit constructions involving the Wigner
$D$-matrices, which completely characterizes the spectral
information of a generalized localized parallel transport
operator acting on sections of certain complex line bundle
over the two-dimensional unit sphere in $\mathbb{R}^3$;
these theoretical results conceptually establish the
admissibility and (improved) stability of the new MFCA
algorithm.

One intriguing future direction is to investigate into refined and more systematic aggregations of the results obtained from each individual frequency channel. Potential candidates include (1) the harmonic-retrieval-type transformations as in
multi-frequency phase synchronization~\cite{gao2019multi}, (2) cross-frequency invariant features such as bispectrum \cite{Kakarala2012,BBMZS2018}, and (3) tensor-based optimizations for multi-dimensional arrays \cite{KB2009,SFGD+2014,ALGS2017}. The main idea is to further exploit the redundancy in the reconstructed information across different irreducible representations. A direct extension of the MFCA theoretical framework could be a refined geometric interpretation of the multi-frequency vector diffusion maps \cite{fan2019cryo} in terms of aggregating invariant embeddings of the same underlying base manifold from multiple associated vector bundles of a fixed common principal bundle. 

Another future direction of interest is to integrate the mult-frequency methodology into existing algorithmic approaches for tackling the \emph{heterogeneity} problem in cryo-EM imaging analysis and comparative biology \cite{bajaj2018smac,LS2016,GBM2019}. In the context of cryo-EM, this problem occurs when molecules in distinct conformations coexist in solution, and thus images collected in cryo-EM imaging from random orientations should typically be first clustered into subgroups (using e.g. the maximum likelihood classification approaches \cite{SDCS2010,Scheres2012}) before single-particle reconstruction techniques can be applied to each individual subgroup. Recent studies \cite{DSL+2014,FO2016,Frank2018} even provided evidence for a continuous distribution of conformation states to present in a solution, which is far beyond the capability of maximum likelihood classification methods. We expect significant performance boost and sharper theoretical results from extensions of the multi-frequency methodology in these problems.

\begin{acknowledgements}
The authors thank Vera Mikyoung Hur, Jared Bronski, Shmuel Weinberger, and Shamgar Gurevich for useful discussions.
\end{acknowledgements}

\appendix
\renewcommand*{\thesection}{Appendix~\Alph{section}}
\normalsize

\section{Basics on Group and Representation Theory}
\label{sec:app_rep}
A group $\mathcal{G}$ is a set with a multiplication
operation: $\mathcal{G} \times \mathcal{G} \mapsto \mathcal{G}$ obeying the following axioms:
\begin{enumerate}
\item For any $x, y \in \mathcal{G}$, $xy \in \mathcal{G}$ (closure);
\item For any $x, y, z \in \mathcal{G}$, $(xy)z = x(yz)$ (associativity);
\item There is a unique element of $\mathcal{G}$ denoted $e$ and called the identity for which $ex = xe = x$ for any $x \in \mathcal{G}$;
\item For any $x \in \mathcal{G}$ there is a corresponding element $x^{-1} \in \mathcal{G}$ called the inverse of $x$, which satisfies
$xx^{-1}=x^{-1}x=e$ for any $x \in \mathcal{G}$.
\end{enumerate}
The group operations may not be commutative, i.e.,
$xy$ is not necessarily equal to $yx$. This is crucial for our present purposes since 3D rotations do not commute. 

We have a group $\mathcal{G}$ acting on a set $X$. This means that each $g \in \mathcal{G}$ has the corresponding transformations based on a left (group) action $L_g: X \rightarrow X$ and a right (group) action $R_g: X \rightarrow X$. A left (group) action of $\mathcal{G}$ on $X$ is a rule for combining elements $g \in \mathcal{G}$ and elements $x \in X$, denoted by $g \vartriangleright x$. We additionally require the following three axioms.
\begin{enumerate}
\item $g \vartriangleright x \in X$ for all $x \in X$ and $g \in \mathcal{G}$. 
\item  $e \vartriangleright x = x$ for all $x \in X$.
\item $g_2 \vartriangleright (g_1 \vartriangleright x) = (g_2g_1)\vartriangleright x$ for all $x \in X$ and $g_1, g_2 \in \mathcal{G}$.
\end{enumerate}

A right (group) action of $\mathcal{G}$ on $X$ is a rule for combining elements $g \in \mathcal{G}$ and elements $x \in X$, denoted by $x \vartriangleleft g$. We additionally require the following three axioms.
\begin{enumerate}
\item $x \vartriangleleft g \in X$ for all $x \in X$ and $g \in \mathcal{G}$. 
\item  $x \vartriangleleft e = x$ for all $x \in X$.
\item $(x\vartriangleleft g_1) \vartriangleleft g_2 = x \vartriangleleft (g_1g_2)$ for all $x \in X$ and $g_1, g_2 \in \mathcal{G}$.
\end{enumerate}
The action of $\mathcal{G}$ on $X$ extends to functions on $X$ as shown in~\eqref{eq:Gactf}. 

In the paper we focus on two groups, namely $\SO(2)$ and $\SO(3)$. Both are compact Lie groups and admit irreducible representations. The group $\SO(2)$ is commutative and thus its irreducible representations are one dimensional complex numbers, $\rho_k(w(\theta)) = e^{\iota k \theta}$, for $w \in\SO(2)$ with a rotational angle $\theta \in [0, 2\pi)$. The irreducible representations of $\SO(3)$ are given by the Wigner $D$-matrices, which will be described in the subsection below. 

\subsection{Wigner's $D$- and $d$-Matrices}
\label{sec:prel-wign-d}

In this section we recall the definition and relevant properties of
the Wigner's $D$- and $d$-matrices, which are used
extensively in the paper for explicit computations related to the irreducible representations of $\SO(3)$. Recall that elements of $\SO(3)$ are realized as rotation matrices parameterized
by \emph{Euler angles} $\left( \varphi, \vartheta,\psi
\right)\in \left[ 0,2\pi \right)\times \left[ 0,\pi
\right]\times \left[ 0,2\psi \right)$: each $x\in \SO \left(
  3 \right)$ can be explicitly written as
\begin{equation}
\label{eq:euler-angle}
  \begin{aligned}
    x = x \left( \varphi, \vartheta,\psi \right)=
  \begin{pmatrix}
    \cos \varphi\cos\psi-\sin\varphi\sin\psi\cos\vartheta &\quad
    -\cos\varphi\sin\psi-\sin\varphi\cos\psi\cos\vartheta &\quad
    \sin\varphi\sin\vartheta\\
    \sin \varphi\cos\psi+\cos\varphi\sin\psi\cos\vartheta &\quad
    -\sin\varphi\sin\psi+\cos\varphi\cos\psi\cos\vartheta &\quad
    -\cos\varphi\sin\vartheta\\
    \sin\psi\sin\vartheta & \cos\psi\cos\vartheta & \cos\vartheta
  \end{pmatrix}.
  \end{aligned}
\end{equation}
Note that this is equivalent to writing $x=R_1 \left(
  \varphi \right)R_2 \left( \vartheta \right)R_3 \left( \psi
\right)$, where
\begin{equation*}
  R_1 \left( \varphi \right)=
  \begin{pmatrix}
    1 & 0 & 0\\
    0 & \cos\varphi & -\sin\varphi\\
    0 & \sin\varphi & \cos\varphi
  \end{pmatrix},\quad 
  R_2 \left( \vartheta \right)=
  \begin{pmatrix}
    \cos\vartheta & 0 & \sin\vartheta\\
    0 & 1 & 0\\
    -\sin\vartheta & 0 & \cos\vartheta
  \end{pmatrix}, \quad 
  R_3 \left( \psi \right)=
  \begin{pmatrix}
    \cos\psi & -\sin\psi & 0\\
    \sin\psi & \cos\psi & 0\\
    0 & 0 & 1
  \end{pmatrix}.
\end{equation*}
The last column in the matrix representation \eqref{eq:euler-angle} is
exactly the \emph{view direction} corresponding to $x\in \SO \left(  3
\right)$. For the simplicity of statements, we denote the
viewing direction of $x\in\SO ( 3  )$ as
\begin{equation*}
  \pi \left( x \right)=\pi \left( x \left( \varphi,\vartheta,\psi \right) \right)=\left( \sin\varphi\sin\vartheta,-\cos\varphi\sin\vartheta,\cos\vartheta \right)^{\top}\in\mathbb{R}^3.
\end{equation*}

For each integer $\ell=0,1,2,\dots$, the \emph{Wigner's
$D$-matrix} $\SO
\left( 3 \right)\ni x\mapsto D^{\ell}\left( x
\right)\in\mathbb{C}^{\left( 2\ell+1 \right)\times \left( 2\ell+1 \right)}$ is
the unique (up to isomorphism)
irreducible matrix representation of $\SO ( 3  )$
of index $\ell$. For each $x\in\SO ( 3  )$,
$D^{\ell}\left( x \right)$ is a $\left( 2\ell+1
\right)$-by-$\left( 2\ell+1 \right)$ complex 
Hermitian matrix, of which the entries we denote by
$D^{\ell}_{mn} \left( x \right)$ ($-\ell\leq m,n\leq
\ell$). As group representations, we have for any
$\ell=0,1,\dots$ and any $x,x'\in\SO ( 3  )$
the multiplicative formula
\begin{equation}
  \label{eq:wigner-D-mult}
  D^{\ell}\left( x' \right)D^{\ell}\left( x
  \right)=D^{\ell}\left( x'\vartriangleright x \right).
\end{equation}
The $ 2\ell+1 $ entries in the central
column of $D^{\ell}$, i.e., $D^{\ell}_{m0}$ ($-\ell\leq
m\leq \ell$), gives rise to the $2\ell+1$ independent
spherical harmonics of degree $\ell$. More generally, the
$2\ell+1$ entries in the $s$th column ($-\ell\leq s\leq
\ell$) of $D^{\ell}$ give rise to the $2\ell+1$ independent
spin-weighted spherical harmonics of degree $\ell$ and
weight $s$
\cite{eastwood1982edth,gelfand2018representations}. Using
the Euler angles, Wigner's $D$-matrices can be written
explicitly as
\begin{equation}
\label{eq:wigner-D-d}
  D^{\ell}_{mn}\left( \varphi,\vartheta,\psi \right)
  :=D^{\ell}_{mn}\left( x \left( \varphi,\vartheta,\psi
    \right) \right)=e^{-\iota m\varphi}d^{\ell}_{mn}\left(
    \vartheta \right)e^{-\iota n \psi},\quad m,n=-\ell,\dots,\ell
\end{equation}
where matrices $d^{\ell} \left( \varphi \right)$ are
known as \emph{Wigner's $d$-matrices}. They are real
$\left( 2\ell+1 \right)$-by-$\left( 2\ell+1 \right)$
matrices with an explicit formula for its $\left( m,n
\right)$th entry as
\begin{equation*}
  \begin{aligned}
    d_{mn}^{\ell}\left( \vartheta \right) \!= \!\left( -1
    \right)^{\ell-n} \left[ \left( \ell+m \right)! \left(
        \ell-m \right)! \left( \ell+n \right)! \left( \ell-n
      \right)! \right]^{1/2}\sum_s \left( -1 \right)^s
    \frac{\displaystyle\left( \cos \frac{\vartheta}{2}
      \right)^{m+n+2s}\left( \sin \frac{\vartheta}{2}
      \right)^{2\ell-m-n-2s}}{\displaystyle s! \left(
        \ell-m-s \right)! \left( \ell-n-s \right)! \left( m+n+s \right)!}
  \end{aligned}
\end{equation*}
with the sum running over all $s\in\mathbb{Z}$ that make sense of the
factorials
\cite[\S3.3.2]{marinucci2011random}. We will only
need the explicit form of $d^{\ell}_{mn}$ for the special
case $m=n=-\ell$: In this case it is straightforward to
verify that the summation consists of only one term
$s=2\ell$, and hence
\begin{equation}
  \label{eq:wigner-d-top-left}
  d^{\ell}_{-\ell,-\ell}\left( \vartheta \right)=\left( \cos
  \frac{\vartheta}{2}\right)^{2\ell}=\left( \cos^2 \frac{\vartheta}{2} \right)^{\ell}=\left( \frac{1+\cos\vartheta}{2} \right)^{\ell}.
\end{equation}
Alternatively,
$d_{mn}^{\ell}$ can also be written explicitly in terms of
Jacobi polynomials as (see
e.g. \cite[\S13.1.1]{marinucci2011random})
\begin{equation}
  \label{eq:wigner-small-d-jacobi}
  d_{mn}^{\ell}\left( \vartheta \right)=2^{-m}\left[
    \frac{\left( \ell-m \right)!\left( \ell+m
      \right)!}{\left( \ell-n \right)!\left( \ell+n
      \right)!} \right]^{\frac{1}{2}}\left( 1-\cos\vartheta
  \right)^{\frac{m-n}{2}}\left( 1+\cos\vartheta
  \right)^{\frac{m+n}{2}}P_{\ell-m}^{\left( m-n,m+n \right)} \left( \cos\vartheta \right)
\end{equation}
where $\left\{ P_n^{\left( a,b \right)}:n=0,1,2,\dots
\right\}$ denote the sequence of Jacobi polynomials with
parameters $a,b$ \cite[\S13.1.1]{marinucci2011random}. This
gives rise to the explicit formula for the
diagonal entries of the Wigner $d$-matrices:
\begin{equation}
  \label{eq:wigner-small-d-diagonal}
  d_{mm}^{\ell}\left( \vartheta \right)=2^{-m}\left( 1+\cos\vartheta \right)^mP_{\ell-m}^{\left( 0,2m \right)}\left( \cos\vartheta \right).
\end{equation}
In particular, we see directly from \eqref{eq:wigner-small-d-jacobi} that
\begin{equation}
  \label{eq:wigner-small-d-at-zero}
  d_{mn}^{\ell}\left( 0
  \right)=\delta_{mn}P_{\ell-m}^{\left( 0,2m \right)}\left(
    1 \right)=\delta_{mn}\cdot{\ell-m\choose \ell-m}=\delta_{mn}
\end{equation}
where $\delta_{mn}$ is the Kronecker delta notation
\begin{equation*}
  \delta_{mn}=
  \begin{cases}
    1 & \textrm{if $m=n$}\\
    0 & \textrm{otherwise.}
  \end{cases}
\end{equation*}
If the Euler angles of $x'$ take the form
$\left( 0,0,\psi \right)$, then by \eqref{eq:wigner-D-mult}
we have
\begin{equation}
  \label{eq:highest-weight}
  D_{mn}^{\ell}\left( \left( 0,0,\psi \right)\vartriangleright x
  \right)=\sum_{s=-\ell}^{\ell}D_{ms}^{\ell}\left( 0,0,\psi \right)D_{sn}^{\ell}\left( x
  \right)\stackrel{\eqref{eq:wigner-D-d}}{=\!=}\sum_{s=-\ell}^{\ell}
  d_{ms}^{\ell}\left( 0 \right) e^{-\iota
    s\psi}D_{sn}^{\ell}\left( x
  \right)\stackrel{\eqref{eq:wigner-small-d-at-zero}}{=\!=}e^{-\iota
  m\psi}D_{mn}^{\ell}\left( x \right).
\end{equation}
We will need this relation in the proof of Theorem~\ref{thm:eval}.

Recall from \cite[pp.21--22]{varshalovich1988quantum} that
Euler angles admit physical interpretations for the rotation
matrix: If we denote the canonical
right-handed orthonormal basis in $\mathbb{R}^3$ by
$\left\{\mathbf{e}_1, \mathbf{e}_2, \mathbf{e}_3\right\}$,
and write $R_{\mathbf{e}_i}\left( \alpha \right)\in\SO
\left( 3 \right)$ for the rotation around axis
$\mathbf{e}_i$ ($i=1,2,3$) by angle $\alpha$, then rotation
by $x \left( \varphi, \vartheta,\psi \right)\in\SO \left( 3
\right)$ is equivalent to i) rotation by angle $\varphi$
around $\mathbf{e}_3$, ii) rotation by angle $\vartheta$
around the new axis $\mathbf{e}_2'=R_{\mathbf{e}_3}\left(
  \varphi \right)\mathbf{e}_2$, and iii) rotation by angle
$\psi$ around the new axis
$\mathbf{e}_3'=R_{\mathbf{e}_3}\left( \varphi
\right)R_{\mathbf{e}_2}\left( \vartheta
\right)\mathbf{e}_3$. From this geometric interpretation, it
is clear that the action of $\SO ( 2  )$ on $\SO ( 3 )$ considered throughout this paper only
affects the Euler angle $\psi$. In other words, under the
canonical identification of $\SO ( 2  )$ with $\SO( 3)$ elements of the form
\begin{equation}
\label{eq:SO2-form}
  g = g \left( \alpha \right)=\begin{pmatrix}
    \cos \alpha & -\sin \alpha & 0\\
    \sin \alpha  &   \cos \alpha & 0\\
    0 & 0 & 1
  \end{pmatrix},\qquad \alpha\in \left[ 0,2\pi \right)
\end{equation}
then $x \left( \varphi, \vartheta, \psi
\right)\vartriangleleft {g_r} \left( \alpha \right)=x \left(
  \varphi, \vartheta, \psi+\alpha \right)$. Together with
\eqref{eq:wigner-D-d}, this implies
\begin{equation}
  \label{eq:wigner-D-SO2}
  \begin{aligned}
    D^{\ell}_{mn} \left( x \left( \varphi, \vartheta, \psi
\right)\vartriangleleft g \left( \alpha
\right)\right)&=D^{\ell}_{mn} \left( x \left( \varphi,
  \vartheta, \psi+\alpha\right)\right)\\
   &=e^{-\iota n\alpha}D^{\ell}_{mn} \left( x \left( \varphi,
  \vartheta, \psi\right)\right)=\rho_n \left(  g^{-1} \right) D^{\ell}_{mn} \left( x \left( \varphi,
  \vartheta, \psi\right)\right)
  \end{aligned}
\end{equation}
where again $\rho_n$ stands for the complex unitary irreducible
representation of $\SO ( 2  )$ of character $n$. 

\section{Spectral Analysis of the Local Generalized
  Parallel Transport Operators}
\label{sec:spectral-analysis}

We prove Theorem~\ref{thm:eval} to Theorem~\ref{thm:inner-product} in this appendix.

\subsection{Proof of Theorem~\ref{thm:eval}}
\label{sec:proof-theor-refthm-asym}

We begin with the isotypic decomposition
\eqref{eq:isotypic-decomp-1},
\eqref{eq:isotypic-decomp-2}. Following \eqref{eq:schur},
our strategy is to find a ``good point'' $x_0\in\SO( 3 )$ and a ``good function'' $u\in\mathcal{H}_{n,-k}$
($n\geq \left| k \right|$) such that $u \left( x_0
\right)\neq 0$, and evaluate
\begin{equation}
  \label{eq:rayleigh-quotient}
  \lambda_n^{\left( k \right)}\left( h \right)=\frac{\displaystyle
    \left(T_h^{\left( k \right)}u\right)\left( x_0
    \right)}{u \left( x_0 \right)}.
\end{equation}
To this end, pick the following basis for the Lie algebra
$\mathfrak{so}\left( 3 \right)$:
\begin{equation*}
  A_1=
  \begin{pmatrix}
    0 & 0 & 0\\
    0 & 0 & -1\\
    0 & 1 & 0
  \end{pmatrix},\quad
  A_2=
  \begin{pmatrix}
    0 & 0 & 1\\
    0 & 0 & 0\\
    -1 & 0 & 0
  \end{pmatrix},\quad
  A_3=
  \begin{pmatrix}
    0 & -1 & 0\\
    1 & 0 & 0\\
    0 & 0 & 0
  \end{pmatrix}.
\end{equation*}
It is straightforward to check that these elements satisfy
the commutator relations
\begin{equation*}
  \left[ A_3, A_1 \right]=A_2,\quad \left[ A_2, A_3
  \right]=A_1,\quad \left[ A_1, A_2 \right]=A_3.
\end{equation*}
We fix $x_0=\mathrm{I}_3$, the canonical standard
orthonormal frame in $\mathbb{R}^3$. We further equip $\SO( 3)$ with standard spherical coordinates ---
the Euler angles --- of the form
\begin{equation*}
  x = x \left( \varphi, \vartheta, \psi
  \right)=x_0\vartriangleleft e^{\varphi A_3}e^{\vartheta
    A_2}e^{\psi A_3}
\end{equation*}
where $\left( \varphi,\vartheta,\psi \right)\in \left(
  0,2\pi \right)\times \left( 0,\pi \right)\times \left(
  0,2\pi \right)$, as in
\cite[\S3.2.1]{hadani2011representation2}. The normalized
Haar measure on $\SO ( 3  )$ is given by the
density
\begin{equation*}
  \frac{\sin\theta}{8\pi^2}\,\mathrm{d}\varphi\,\mathrm{d}\vartheta\,\mathrm{d}\psi.
\end{equation*}

Consider the subgroup $T_{A_3}$ of $\SO ( 3  )$ generated by
the infinitesimal element $A_3$. For every $k\in\mathbb{Z}$
and $n\in\mathbb{N}$ with $n\geq \left| k \right|$, the
Hilbert space $\mathcal{H}_{n,k}$ admits yet another
isotypic decomposition with respect to the left action of
$T_{A_3}$:
\begin{equation}
  \label{eq:isotypic-decomp-3}
  \mathcal{H}_{n,-k}=\bigoplus_{m=-n}^n\mathcal{H}_{n,-k}^m
\end{equation}
where $s\in \mathcal{H}_{n,-k}^m$ if and only if
\begin{equation}
\label{eq:third-isotypic}
  s \left( e^{-tA_3}\vartriangleright x \right)=e^{\iota
    mt}s \left( x \right)\quad\textrm{for every $x\in\SO( 3)$ and $t\in\mathbb{R}$.}
\end{equation}
As pointed out in \cite[\S3.3.1]{hadani2011representation2},
elements of $\mathcal{H}_{n,-k}^m$ are often referred to as
(generalized) \emph{spherical functions}. In the physics
literature, they are also known as \emph{spin-weighted
  spherical functions}, which are closely related with
Wigner $D$-matrices
\cite{boyle2016should,eastwood1982edth,campbell1971tensor,goldberg1967spin,newman1966note}. We
extend the computation in
\cite[\S3]{hadani2011representation2} to $k>1$, by fully leveraging
properties of the Wigner $D$-matrices. In fact, we are going
to fix $m=-k$ and choose the ``good function'' $u$ as
$D^n_{-k,-k}$, the $\left( -k, -k \right)$th 
entry of the Wigner $D$-matrix of weight $n$, for any $n\geq
\left| k \right|$ --- it is clear from
\eqref{eq:wigner-D-SO2} that 
$D_{-k,-k}^n\in\mathcal{H}_{-k}$ for any $n\geq \left| k
\right|$, and from \eqref{eq:highest-weight} we know that
$D_{-k,-k}^n$ satisfies \eqref{eq:third-isotypic} with
$m=-k$. Our goal is to evaluate
\begin{equation}
  \label{eq:specialized-value}
  \lambda_n^{\left( k \right)} \left( h \right)=\frac{\displaystyle \left(
      T_h^{\left( k \right)}D_{-k,-k}^n \right)\left( x_0
    \right)}{D_{-k,-k}^n \left( x_0 \right)}.
\end{equation}
Now, on the one hand we have
\begin{equation}
  \label{eq:denomenator}
  D_{-k,-k}^n \left( x_0 \right)=D_{-k,-k}^n \left( 0,0,0
  \right)\stackrel{\eqref{eq:wigner-D-d}}{=\!=}d_{-k,-k}^n\left(
    0
  \right)\stackrel{\eqref{eq:wigner-small-d-at-zero}}{=\!=}1.
\end{equation}
On the other hand, note that by the invariance and
equivariance of the transport data \eqref{eq:Tk} we have for any $x=x
\left( \varphi,\vartheta,\psi \right)\in \SO ( 3  )$
\begin{align*}
  T^{\left( k \right)} \left( x_0,x \right) &= T^{\left( k
    \right)} \left( x_0,x_0\vartriangleleft e^{\varphi A_3}e^{\vartheta
    A_2}e^{\psi A_3} \right)=T^{\left( k
    \right)} \left( x_0, e^{\varphi A_3}\vartriangleright
  x_0\vartriangleleft e^{\vartheta A_2}e^{\psi A_3}
                                              \right)\\
    &=T^{\left( k \right)}\left( e^{-\varphi
      A_3}\vartriangleright x_0,  x_0\vartriangleleft
      e^{\vartheta A_2}e^{\psi A_3} \right)=T^{\left( k
      \right)}\left( x_0 \vartriangleleft e^{-\varphi
      A_3},  x_0\vartriangleleft
      e^{\vartheta A_2}e^{\psi A_3} \right)\\
      &=  e^{\iota k
      \varphi}T^{\left( k \right)} \left(
      x_0,x_0\vartriangleleft e^{\vartheta A_2}
        \right)e^{\iota k \psi}, 
\end{align*}
and
\begin{align*}
  D_{-k,-k}^n \left( x \right)=D_{-k,-k}^n \left(
  \varphi,\vartheta,\psi \right)=  e^{-\iota
  k \varphi}d_{-k,-k}^n \left( \vartheta \right)e^{-\iota k \psi}=e^{-\iota k
  \varphi}d_{-k,-k}^n \left( x_0\vartriangleleft
  e^{\vartheta A_2} \right)e^{-\iota k \psi}.
\end{align*}
Therefore,
\begin{align*}
  \left(T_h^{\left( k \right)}D_{-k,-k}^n\right)\left( x_0 \right)&=\int_{B \left( x, \alpha \right)}
  T^{\left( k \right)} \left( x_0,x \right)D_{-k,-k}^n \left( x
  \right)\,\mathrm{d}x\\
  &=\int_{B \left( x, \alpha \right)}
  T^{\left( k \right)} \left( x_0,x_0\vartriangleleft e^{\vartheta A_2}
  \right) D_{-k,-k}^n \left( x_0\vartriangleleft
  e^{\vartheta A_2} \right) \,\mathrm{d}x \left(
    \varphi,\vartheta,\psi \right)\\
  &=\int_{B \left( x, \alpha \right)}
  \rho_k\left(T\left( x_0,x_0\vartriangleleft e^{\vartheta A_2}
  \right)\right) D_{-k,-k}^n \left( x_0\vartriangleleft
  e^{\vartheta A_2} \right) \,\mathrm{d}x \left(
    \varphi,\vartheta,\psi \right)\\
   &\stackrel{(*)}{=\!=}\int_{B \left( x, \alpha \right)}
  D_{-k,-k}^n \left( \left(x_0\vartriangleleft
  e^{\vartheta A_2}\right)\vartriangleleft T\left( x_0\vartriangleleft e^{\vartheta
      A_2}, x_0 \right)\right) \,\mathrm{d}x \left(
    \varphi,\vartheta,\psi \right)\\
  &\stackrel{(**)}{=\!=}\int_{B \left( x, \alpha \right)}
  D_{-k,-k}^n \left( x_0\vartriangleleft
  e^{\vartheta A_2}\right) \,\mathrm{d}x \left(
    \varphi,\vartheta,\psi \right),
\end{align*}
where $(*)$ used the fact that
$D_{-k,-k}^n\in\mathcal{H}_{-k}$, and $(**)$ follows from
the definition \eqref{eq:parallel-transport-operator} and
the geometric fact that $x_0\vartriangleleft
  e^{\vartheta A_2}$ is exactly the parallel transport of
  $x_0$ along the unique geodesic connecting $\pi \left( x_0
  \right)$ to $\pi \left( x_0 \vartriangleleft
  e^{\vartheta A_2}\right)$:
\begin{align*}
  \left(x_0\vartriangleleft
  e^{\vartheta A_2}\right)\vartriangleleft T\left( x_0\vartriangleleft e^{\vartheta
      A_2}, x_0 \right)=t_{\pi \left( x_0\vartriangleleft
  e^{\vartheta A_2} \right),\pi \left( x_0 \right)}x_0=x_0\vartriangleleft
  e^{\vartheta A_2}.
\end{align*}
It follows that
\begin{align*}
  \left(T_h^{\left( k \right)}D_{-k,-k}^n\right)\left( x_0 \right)&=\int_{B \left( x, \alpha \right)}
  D_{-k,-k}^n \left( x_0\vartriangleleft
  e^{\vartheta A_2}\right) \,\mathrm{d}x \left(
    \varphi,\vartheta,\psi \right)\\
   &=\frac{1}{\left( 2\pi
     \right)^2}\int_0^{2\pi}\,\mathrm{d}\varphi\int_0^\alpha
     \frac{\sin\vartheta}{2}D_{-k,-k}^n \left( 0,\vartheta,0
     \right)
     \,\mathrm{d}\vartheta=\int_0^\alpha  \frac{\sin\vartheta}{2}d_{-k,-k}^n
     \left( \vartheta \right) \,\mathrm{d}\vartheta.
\end{align*}
Since $d_{-k,-k}^n=d_{k,k}^n$ (see e.g. \cite[formula
(3.16)]{marinucci2011random}), this further implies
\begin{align}
  \left(T_h^{\left( k \right)}D_{-k,-k}^n\right)\left( x_0
  \right)&=\int_0^\alpha\frac{\sin\vartheta}{2}d_{kk}^n\left(
  \vartheta \right)
  \,\mathrm{d}\vartheta\stackrel{\eqref{eq:wigner-small-d-diagonal}}{=\!=}2^{-\left(
  k+1 \right)}\int_0^\alpha\sin\vartheta \left( 1+\cos\vartheta
           \right)^kP_{n-k}^{\left( 0,2k \right)}\left(
           \cos\vartheta \right)\,\mathrm{d}\vartheta\nonumber\\
   &=-2^{-\left( k+1 \right)}\int_0^\alpha \left( 1+\cos\vartheta
           \right)^kP_{n-k}^{\left( 0,2k \right)}\left(
           \cos\vartheta \right) \,\mathrm{d}\cos\vartheta\nonumber\\
    &\stackrel{z:=\cos\vartheta}{=\!=\!=\!=\!=}2^{-\left(
      k+1 \right)}\int_{1-h}^1 \left( 1+z
     \right)^k P_{n-k}^{\left( 0,2k \right)}\left(
           z \right)\,\mathrm{d}z\label{eq:lambda-jacobi-poly-form}
\end{align}
where in the last equality we used $h=1-\cos \alpha $. Using the
explicit form of Jacobi polynomials (see
e.g. \cite[Chap.~IV, formula (4.2.1)]{szego1939orthogonal})
\begin{align*}
  P_{n-k}^{\left( 0,2k \right)}\left( z
  \right)&=\sum_{\nu=0}^{n-k}{n-k\choose n-k-\nu}{n+k\choose
    \nu} \left( \frac{z-1}{2} \right)^{\nu}\left(
           \frac{z+1}{2} \right)^{n-k-\nu}\\
    &=\sum_{\nu=0}^{n-k}{n-k\choose \nu}{n+k\choose
    \nu} \left( \frac{z-1}{2} \right)^{\nu}\left( \frac{z+1}{2} \right)^{n-k-\nu}
\end{align*}
we have
\begin{align*}
  \left(T_h^{\left( k \right)}D_{-k,-k}^n\right)\left( x_0
  \right)&=2^{-\left( k+1 \right)}\cdot
           2^k\sum_{\nu=0}^{n-k}{n-k\choose
           \nu}{n+k\choose \nu}\int_{1-h}^1 \left( \frac{z-1}{2}
           \right)^{\nu}\left( \frac{z+1}{2} \right)^{n-\nu}
           \,\mathrm{d}z\\
           &\stackrel{z:=1-2w}{=\!=\!=\!=\!=\!=}\frac{1}{2}\sum_{\nu=0}^{n-k}
             \left( -1
             \right)^{\nu}{n-k\choose\nu}{n+k\choose
             \nu}\int_0^{\frac{h}{2}} w^{\nu} \left( 1-w
             \right)^{n-\nu} \cdot 2\,\mathrm{d}w\\
           &=\sum_{\nu=0}^{n-k}
             \left( -1 \right)^{\nu}{n-k\choose\nu}{n+k\choose
             \nu}\int_0^{\frac{h}{2}} w^{\nu} \left( 1-w
             \right)^{n-\nu}\,\mathrm{d}w\\
           &=\sum_{\nu=0}^{n-k}
             \left( -1 \right)^{\nu}{n-k\choose\nu}{n+k\choose
             \nu}\mathrm{B}\left( \frac{h}{2};\nu+1,n-\nu+1 \right)
\end{align*}
where $\mathrm{B}\left( x;a,b \right)=\int_0^xw^{a-1}\left( 1-w \right)^{b-1}\,\mathrm{d}w$ is the incomplete
Beta function. It follows that for all $n\geq \left| k \right|$
\begin{align*}
  \lambda_n^{\left( k \right)}\left( h \right)&=\frac{\displaystyle \left(
      T_h^{\left( k \right)}D_{-k,-k}^n \right)\left( x_0
    \right)}{D_{-k,-k}^n \left( x_0 \right)}=\left(
      T_h^{\left( k \right)}D_{-k,-k}^n \right)\left( x_0
    \right)\\
     & =\sum_{\nu=0}^{n-k}
             \left( -1 \right)^{\nu}{n-k\choose\nu}{n+k\choose
             \nu}\mathrm{B}\left( \frac{h}{2};\nu+1,n-\nu+1 \right).
\end{align*}
From the integral form of the incomplete Beta function it is
clear that $\mathrm{B}\left( h/2;\nu+1,n-\nu+1 \right)$ is a
polynomial of degree $\left( n+1 \right)$ in $h$. In particular, by repeatedly applying the recursive relation
\begin{align*}
  \mathrm{B}\left( x;a+1,b \right)=\frac{a}{b}\mathrm{B}\left( x;a,b+1 \right) -\frac{1}{b}x^a \left( 1-x \right)^b,
\end{align*}
we easily obtain
\begin{align}
  \lambda_k^{\left( k \right)}\left( h
  \right)&=\mathrm{B}\left( \frac{h}{2};1,k+1
           \right)=\frac{1-\left( 1-h/2
           \right)^{k+1}}{k+1},\label{eq:lambda-k-k}
\end{align}
\begin{align}
   \lambda_{k+1}^{\left( k \right)}\left( h
  \right)&=\mathrm{B}\left( \frac{h}{2};1,k+2 \right)-\left(
           2k+1 \right)\mathrm{B}\left( \frac{h}{2};2,k+1
           \right)\notag\\
            &=\mathrm{B}\left( \frac{h}{2};1,k+2
              \right)-\frac{2k+1}{k+1}\mathrm{B}\left( \frac{h}{2};1,k+2 \right) +\frac{2k+1}{k+1} \left( \frac{h}{2} \right)\left( 1-\frac{h}{2} \right)^{k+1} \notag\\
            &= \frac{2(k+1)( 1- (1 - h/2)^{k+2})}{k+2} - \frac{(2k+1)(1 - (1-h/2)^{k+1})}{k + 1}, \label{eq:lambda-k-k+1}
\end{align}
\begin{align}
    \lambda_{k+2}^{\left( k \right)}\left( h
  \right)&=\mathrm{B}\left( \frac{h}{2};1,k+3 \right)-4\left(
           k+1 \right)\mathrm{B}\left(
           \frac{h}{2};2,k+2\right)+\left( k+1 \right)\left(
           2k+1 \right)\mathrm{B}\left( \frac{h}{2};3,k+1
           \right) \notag\\
           &=\mathrm{B}\left( \frac{h}{2};1,k+3 \right)-2 \mathrm{B}\left(
           \frac{h}{2};2,k+2\right)-\left( 2k+1
             \right)\left( \frac{h}{2} \right)^2 \left(
             1-\frac{h}{2} \right)^{k+1} \notag\\
           &=\frac{k}{k+2}\mathrm{B}\left( \frac{h}{2};1,k+3 \right)+\frac{2}{k+2}\left( \frac{h}{2} \right)\left( 1-\frac{h}{2} \right)^{k+2}-\left( 2k+1
             \right)\left( \frac{h}{2} \right)^2 \left(
             1-\frac{h}{2} \right)^{k+1} \notag\\
           &=\frac{k}{k+2}\cdot \frac{1-\left( 1-h/2
           \right)^{k+3}}{k+3}+\frac{2}{k+2}\left( \frac{h}{2} \right)\left( 1-\frac{h}{2} \right)^{k+2}-\left( 2k+1
             \right)\left( \frac{h}{2} \right)^2 \left(
             1-\frac{h}{2} \right)^{k+1},\label{eq:lambda-k-k+2}
\end{align}
which give rise to \eqref{eq:eval_k1} and
\eqref{eq:eval_k2}.

It now remains to compute a quadratic approximation for
$\lambda_n^{\left( k \right)}\left( h \right)$ for
$h\rightarrow0$, for all $n\geq \left| k \right|$. This can
be done by direct computation using the integral form of the
incomplete beta function: for all $n\geq \left| k \right|$,
\begin{align}
  &\lambda_n^{\left( k \right)}\left( 0 \right)=0,\\
  &\partial_h\lambda_n^{\left( k \right)}\left( 0 \right)=\sum_{\nu=0}^{n-k}
             \left( -1 \right)^{\nu}{n-k\choose\nu}{n+k\choose
             \nu}\cdot \frac{1}{2} \left( \frac{h}{2}
    \right)^{\nu}\left( 1-\frac{h}{2}
    \right)^{n-\nu}\Bigg|_{h=0}=\frac{1}{2},\label{eq:eval-derivative-1}\\
   &\partial_h^2\lambda_n^{\left( k \right)}\left( 0 \right)=\sum_{\nu=1}^{n-k}
             \left( -1 \right)^{\nu}{n-k\choose\nu}{n+k\choose
             \nu}\cdot \frac{\nu}{4} \left( \frac{h}{2}
    \right)^{\nu-1}\left( 1-\frac{h}{2}
    \right)^{n-\nu}\Bigg|_{h=0}\nonumber\\
   &\qquad\qquad\qquad\qquad+\sum_{\nu=0}^{n-k}
             \left( -1 \right)^{\nu}{n-k\choose\nu}{n+k\choose
             \nu}\cdot \frac{-\left(n-\nu\right)}{4} \left( \frac{h}{2}
    \right)^{\nu}\left( 1-\frac{h}{2}
    \right)^{n-\nu-1}\Bigg|_{h=0}\nonumber\\
  &\phantom{\partial_h^2\lambda_n^{\left( h \right)}\left( 0 \right)}=-\frac{1}{4}\left( n-k \right)\left( n+k \right)-\frac{n}{4}=-\frac{1}{4}\left( n^2+n-k^2 \right)\label{eq:eval-derivative-2}
\end{align}
and \eqref{eq:eval} follows from the Taylor expansion
\begin{align*}
  \lambda_n^{\left( k \right)}\left( h
  \right)=\lambda_n^{\left( k \right)}\left( 0
  \right)+h\,\partial_h\lambda_n^{\left( h \right)}\left( 0
  \right)+\frac{h^2}{2}\,\partial_h^2\lambda_n^{\left( h
  \right)}\left( 0 \right)+O \left( h^3 \right).
\end{align*}
This completes the entire proof of Theorem~\ref{thm:eval}. $\hfill\qed$

\subsection{Proof of Theorem~\ref{thm:specgap}}
\label{sec:proof-theor-refthm-gap}

\begin{lemma}
\label{lem:left-bound}
\begin{enumerate}[(1)]
\item There exists $h_1^{(k)} \in (0, 2]$ such that
  $\lambda^{(k)}_n(h) \leq \lambda_k^{(k)}(h)$ for every $n
  \geq k+1$ and $h \in (0, h_1^{(k)}]$.
\item There exists $h_2^{(k)} \in (0, 2 ]$ such that
  $\lambda_n^{(k)}(h) \leq \lambda_{k+1}^{(k)}(h)$ for every
  $n \geq k+2$ and $h \in (0, h_2^{(k)}]$.
\end{enumerate}
\end{lemma}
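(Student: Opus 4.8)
The plan is to leverage the explicit formulas obtained in the proof of Theorem~\ref{thm:eval}. Substituting $1-2w=\cos\vartheta$ in \eqref{eq:lambda-jacobi-poly-form} gives, with $\vartheta_h:=\arccos(1-h)$ and $m:=n-k$,
\[
\lambda_n^{(k)}(h)=\int_0^{\vartheta_h}\cos^{2k+1}\tfrac{\vartheta}{2}\,\sin\tfrac{\vartheta}{2}\,P^{(0,2k)}_{m}(\cos\vartheta)\,\mathrm d\vartheta=:\int_0^{\vartheta_h}P_m^{(0,2k)}(\cos\vartheta)\,\mathrm d\mu_h(\vartheta),
\]
where $\mathrm d\mu_h$ is a positive measure on $(0,\vartheta_h)$. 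Since $P_0^{(0,2k)}\equiv 1$ and $P_1^{(0,2k)}(x)=(k+1)x-k$, this same formula shows $\int_0^{\vartheta_h}\mathrm d\mu_h=\lambda_k^{(k)}(h)$ and $\int_0^{\vartheta_h}P_1^{(0,2k)}(\cos\vartheta)\,\mathrm d\mu_h=\lambda_{k+1}^{(k)}(h)$ (consistently with \eqref{eq:lambda-k-k} and \eqref{eq:lambda-k-k+1}). Hence part~(1) is equivalent to $\int_0^{\vartheta_h}\bigl[1-P^{(0,2k)}_{m}(\cos\vartheta)\bigr]\,\mathrm d\mu_h\ge 0$ for all $m\ge1$, and part~(2) to the same statement with $1$ replaced by $P_1^{(0,2k)}(\cos\vartheta)$ and $m\ge 2$.

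For part~(1) the cleanest route is the stronger \emph{pointwise} bound: there is $h_1^{(k)}\in(0,1]$ with $P^{(0,2k)}_{m}(\cos\vartheta)\le 1$ for all $m\ge0$ and all $\vartheta\in[0,\vartheta_h]$ whenever $h\le h_1^{(k)}$; integrating against $\mathrm d\mu_h$ then yields the claim. On the boundary layer $\vartheta\lesssim 1/m$ this follows from the Mehler--Heine limit $P_m^{(0,2k)}(\cos\vartheta)\to J_0(\cdot)\le1$ together with a non-asymptotic refinement; on the oscillatory bulk it follows from the classical decay $|P_m^{(0,2k)}(\cos\vartheta)|=O\bigl((m\vartheta)^{-1/2}\bigr)$ (see e.g.\ \cite[Ch.~7--8]{szego1939orthogonal}, \cite{marinucci2011random}), which keeps $P_m^{(0,2k)}(\cos\vartheta)<1$ as soon as $m\vartheta\gtrsim1$; the only delicate point is the transition regime $m\vartheta\sim 1$, to be handled by a quantitative Darboux expansion. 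It is worth stressing that, unlike the single-frequency analysis behind \cite[Theorems~3--4]{hadani2011representation2} --- which relied on a sharp Bernstein-type inequality for Legendre polynomials --- only a \emph{non-sharp} bound is needed here, because the desired inequality is strict off $\vartheta=0$.

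For part~(2) a pointwise comparison with $P_1^{(0,2k)}(\cos\vartheta)=1-(k+1)(1-\cos\vartheta)$ fails in the oscillatory region, so I would reduce it to part~(1) through the identity
\[
\lambda_{k+1}^{(k)}(h)-\lambda_n^{(k)}(h)=\bigl[\lambda_k^{(k)}(h)-\lambda_n^{(k)}(h)\bigr]-(k+1)\!\int_0^{\vartheta_h}(1-\cos\vartheta)\,\mathrm d\mu_h(\vartheta),
\]
in which the first bracket is $\ge0$ by part~(1) and the subtracted term is elementary and of size $\asymp_k h^2$. It then suffices to see the first bracket dominates. For $k+2\le n\le C_k h^{-1/2}$ this follows from the quadratic expansion of Theorem~\ref{thm:eval}: the leading term of $\lambda_k^{(k)}-\lambda_n^{(k)}$ is $\tfrac18(n-k)(n+k+1)h^2\ge\tfrac18(2k+3)h^2$, which beats the correction once the uniform-in-$n$ remainder is absorbed (the degree-$p$ coefficient of $\lambda_n^{(k)}(h)$, read off from the incomplete-Beta expansion, is a polynomial in $n$ of degree $\le 2p-2$ with a $p!$-type denominator, so the tail $p\ge 3$ is $\lesssim_k n^4h^3$, a small fraction of $n^2h^2$ once $n^2h$ is small). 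For $n\ge C_k h^{-1/2}$ one instead uses that $\lambda_n^{(k)}(h)$ is uniformly small --- $|\lambda_n^{(k)}(h)|\le\|T_h^{(k)}\|_{\mathrm{op}}\le h/2$ by a Schur test on the kernel of modulus $\mathbf 1_{B(x,a)}$, sharpened to $O(h^{3/4}n^{-1/2})$ by the above Jacobi-polynomial decay --- whereas $\lambda_k^{(k)}(h)\ge\tfrac h2-\tfrac k8h^2$ stays of order $h$; taking $h_2^{(k)}$ to be the smaller of the two thresholds completes the argument.

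The step I expect to be the main obstacle is exactly the uniform-in-degree control of $P_m^{(0,2k)}$ near $\cos\vartheta=1$ needed in part~(1) (and, through the reduction, in part~(2)): the sharp Bernstein-type inequality available for Legendre polynomials has no known analogue for the Jacobi polynomials arising from the Wigner $d$-matrices (cf.\ the discussion in Section~\ref{sec:spectral-property}), so the Mehler--Heine and bulk regimes must be glued together by hand. Everything else is routine bookkeeping with the explicit polynomial / incomplete-Beta expressions for $\lambda_n^{(k)}(h)$.
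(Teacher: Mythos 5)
Your route is genuinely different from the paper's, in two structural ways. First, the direction of the reduction is reversed: the paper shows $\lambda_{k+1}^{(k)}(h)\le\lambda_k^{(k)}(h)$ on $(0,1/(k+1)]$ by a one-line comparison of the derivatives $\partial_h\lambda_k^{(k)}$ and $\partial_h\lambda_{k+1}^{(k)}$ (so that proving part~(2) gives part~(1) for free), and then proves (2) by establishing $\partial_h\lambda_n^{(k)}(h)<\tfrac14<\partial_h\lambda_{k+1}^{(k)}(h)$ in the relevant range of $h$; you instead prove (1) first by a pointwise Jacobi comparison, then deduce (2) from it through a two-regime dominance argument. Second, the paper never aims at the pointwise bound $P_m^{(0,2k)}(\cos\vartheta)\le 1$ that your part~(1) rests on; it only uses Szeg\H{o}'s uniform asymptotic (Theorem~8.21.13 in \cite{szego1939orthogonal}) evaluated at the \emph{local extrema} of $\alpha\mapsto P_{n-k}^{(0,2k)}(\cos\alpha)$, combined with the lower bound $a_{n-k}^{*}\gtrsim n^{-1/2}$ on the smallest such extremum obtained from extreme-zero estimates for Jacobi polynomials, to conclude that all extrema lie inside the region of validity $[c/n,\pi-c/n]$ of the asymptotic. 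Using the fixed intermediate constant $\tfrac14$ is cruder than a pointwise comparison, but it is exactly what lets the paper avoid the gluing issue you run into.

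The genuine gap is precisely where you flag it. The claim that $P_m^{(0,2k)}(\cos\vartheta)\le 1$ on a fixed interval $[0,\vartheta_{h_1}]$ uniformly in $m\ge0$ is stronger than anything the paper proves, is not a standard inequality (the global bound $|P_m^{(\alpha,\beta)}(x)|\le P_m^{(\alpha,\beta)}(1)$ holds for $\alpha\ge\beta\ge-\tfrac12$ but fails for $\alpha<\beta$, which is the case here), and your Mehler--Heine plus bulk-decay sketch leaves the matching at $m\vartheta\sim1$ to an unspecified quantitative Darboux expansion --- this is exactly the regime where the paper's extreme-zero estimates do the work, and it is not routine bookkeeping. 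Your reduction of (2) to (1) also carries several unproved quantitative claims: the degree-$p$ coefficient bound for the incomplete-Beta expansion of $\lambda_n^{(k)}(h)$, the uniform-in-$n$ absorption of the $O(h^3)$ remainder, and the sharpened $O(h^{3/4}n^{-1/2})$ bound on $\lambda_n^{(k)}(h)$ --- all plausible, but each needs an argument, and the plan accumulates more delicate estimates than the paper's single-mechanism proof of (2). If you do want to pursue your route, I would suggest replacing the pointwise target $P_m^{(0,2k)}(\cos\vartheta)\le 1$ by the weaker, integrated statement $\int_0^{\vartheta_h}[1-P_m^{(0,2k)}(\cos\vartheta)]\,\mathrm d\mu_h\ge0$, which is what you actually need, and proving it directly by the same Szeg\H{o}-at-the-extrema plus extreme-zero strategy the paper uses for the derivative.
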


\begin{proof}[Proof of Lemma~\ref{lem:left-bound}]
  Since $\lambda_n^{\left( k \right)}\left( 0 \right)=0$ for
  all $k\in\mathbb{Z}$ and $n\geq \left| k \right|$, we will
  just compare the first order derivatives
  $\partial_h\lambda_n^{\left( k \right)}\left( h \right)$ over an interval
  with $0$ as the left end point. By
  \eqref{eq:lambda-jacobi-poly-form},
  $\partial_h\lambda_n^{\left( k \right)} \left( h \right)$
  admits a closed form expression in terms of Jacobi
  polynomials:
  \begin{equation*}
    \partial_h\lambda_n^{\left( k \right)} \left( h
    \right)=\frac{1}{2}\left( 1-\frac{h}{2}
    \right)^kP_{n-k}^{\left( 0,2k \right)}\left( 1-h
    \right)\stackrel{h=\cos \alpha}{=\!=\!=\!=\!=}\frac{1}{2}\left( \frac{1+\cos \alpha}{2} \right)^k
    P_{n-k}^{\left( 0,2k \right)}\left( \cos \alpha \right).
  \end{equation*}
In particular, under change-of-coordinates $h=1-\cos \alpha$ we have
\begin{align*}
  \partial_h\lambda_k^{\left( k \right)}\left( h
  \right)&=\frac{1}{2}\left( \frac{1+\cos \alpha}{2}
  \right)^kP_0^{\left( 0,2k \right)}\left( \cos \alpha
  \right)=\frac{1}{2}\left( \frac{1+\cos \alpha}{2} \right)^k,\\
  \partial_h\lambda_{k+1}^{\left( k \right)}\left( h
  \right)&=\frac{1}{2}\left( \frac{1+\cos \alpha}{2}
  \right)^kP_1^{\left( 0,2k \right)}\left( \cos \alpha
  \right)=\frac{1}{2}\left( \frac{1+\cos \alpha}{2} \right)^k
           \left[ 1-\left( k+1 \right)\left( 1-\cos \alpha \right) \right].
\end{align*}
It is clear that $0<\partial_h\lambda_{k+1}^{\left( k
  \right)}\left( h \right)<\partial_h\lambda_k^{\left( k
  \right)} \left( h \right)$ for all $h=1-\cos \alpha\in (0,1/\left( k+1
\right)]$, which together with $\lambda_k^{\left( k
  \right)}\left( 0 \right)=\lambda_{k+1}^{\left( k
  \right)}\left( 0 \right)$ gives rise to
\begin{align}
\label{eq:specgap-reduce}
  \lambda_{k+1}^{\left( k \right)}\left( h \right)\leq
  \lambda_k^{\left( k \right)}\left( h
  \right)\quad\textrm{for all $0<h\leq \frac{1}{k+1}$}.
\end{align}
With \eqref{eq:specgap-reduce}, the proof of both (1) and
(2) of Lemma~\ref{lem:left-bound} is reduced to only the
part (2) of Lemma~\ref{lem:left-bound}. The remaining of
this proof is devoted to establishing (2) of
Lemma~\ref{lem:left-bound}.

By the classical result of Szeg\H{o}
\cite[Theorem~8.21.13]{szego1939orthogonal}, there exists a
fixed positive number $c>0$ such that
\begin{align}
\label{eq:szego-asymp}
  P_{n-k}^{\left( 0,2k \right)}\left( \cos \theta
  \right)=\frac{1}{\sqrt{n}}k \left( \theta \right) \left[
  \cos \left( N\theta+\gamma \right) + \left( n\sin\theta
  \right)^{-1}O \left( 1 \right) \right],\quad\textrm{for
  all}\quad\frac{c}{n}\leq \theta\leq \pi-\frac{c}{n}
\end{align}
where
\begin{align*}
  k \left( \theta \right)&=\frac{1}{\sqrt{\pi\sin \left(
  \theta/2 \right)\cos \left( \theta/2 \right)}\cdot
  \left[\cos \left( \theta/2 \right)\right]^{2k}}=\left(
  \frac{2}{1+\cos \theta}
                           \right)^k\sqrt{\frac{2}{\pi\sin\theta}},\\
  N&=n+\frac{2k+1}{2},\quad \lambda=-\frac{\pi}{4}.
\end{align*}
In particular, by making the $O \left( 1 \right)$ term in
\eqref{eq:szego-asymp} explicit, we have for some absolute
constant $C>0$
\begin{align}
  \label{eq:jacobi-asymp-szego}
  \left( \frac{1+\cos\theta}{2} \right)^k \left|
  P_{n-k}^{0,2k}\left( \cos \theta \right) \right|\leq
  \sqrt{\frac{2}{n\pi}}\cdot \frac{1}{\sqrt{\sin \theta}}
  \left( 1+\frac{C}{n\sin\theta} \right)\quad\textrm{for all
  $\frac{c}{n}\leq \theta\leq \pi-\frac{c}{n}$.}
\end{align}
Note that the left hand side is precisely the absolute value
of $2\partial_h\lambda_n^{\left( k \right)}\left( h
\right)=2\partial_h\lambda_n^{\left( k \right)}\left(
  1-\cos\theta \right)$. We seek an upper bound for the
right hand side of \eqref{eq:jacobi-asymp-szego} that holds
uniformly for all sufficiently large $n$. To this end,
consider the largest zero of 
$P_{n-k}^{\left(0,2k\right)} \left( x \right)$ for $x\in
[-1,1]$, denoted as $x_{n-k}^{*}=\cos
\alpha_{n-k}^{*}$ (thus $\alpha_{n-k}^{*}$ is the smallest zero of the
function $\alpha\mapsto P_{n-k}^{\left( 0,2k \right)}\left(
  \cos \alpha \right)$ on $\alpha\in \left[ 0,\pi
\right]$). Well-known estimates for the extreme zero of
Jacobi polynomials (see e.g. \cite[\S2.2]{DJ2012}) indicates
\begin{align*}
  x_{n-k}^{*}>1-O \left( \frac{1}{n^2} \right)\,\,\textrm{as
  $n\rightarrow\infty$}\quad\Rightarrow\quad \alpha_{n-k}^{*}\rightarrow0 \,\,\textrm{as
  $n\rightarrow\infty$}
\end{align*}
thus for any $\epsilon_1>0$ there exists $N_1>0$ such that for
all sufficiently large $n\geq N_1$ we have
\begin{align}
\label{eq:close-to-origin}
\sin \alpha_{n-k}^{*}\geq \left( 1-\epsilon_1
  \right)\alpha_{n-k}^{*}
\end{align}
since $\lim_{x\rightarrow0}\left(\sin x\right) /
x=1$. In the meanwhile, \cite[Theorem~3.1]{ELR1994} bounds
$x_{n-k}^{*}=\cos \alpha_{n-k}^{*}$ from above by
\begin{align}
\label{eq:extreme-upper-bound}
  x_{n-k}^{*}<\frac{\displaystyle \left( 2k+\frac{1}{2}
  \right)^2+4 \left( n-k \right)\left( n+k+\frac{1}{2}
  \right)}{\displaystyle \left( 2n-2k+1+2k
  \right)^2}=\frac{\displaystyle 4n^2+2n+1/4}{\left(
  2n+1 \right)^2}=1-\frac{\displaystyle 2n+3/4}{\left( 2n+1 \right)^2}.
\end{align}
Using the elementary inequality $1-x^2/2\leq \cos x$ for
$x\in \left[ 0,2 \right]$, \eqref{eq:extreme-upper-bound}
leads to
\begin{align*}
  1-\frac{\left(\alpha_{n-k}^{*}\right)^2}{2}\leq \cos
  \alpha_{n-k}^{*}=x_{n-k}^{*}<1-\frac{\displaystyle
  2n+3/4}{\left( 2n+1
  \right)^2}\quad\Rightarrow\quad
  \alpha_{n-k}^{*}>\sqrt{\frac{\displaystyle
  4n+3/2}{\left( 2n+1 \right)^2}} \rightarrow
  \frac{1}{\sqrt{n}}\quad\textrm{as $n\rightarrow\infty$}
\end{align*}
which further implies (1) for sufficiently large $n$,
$\alpha_{n-k}^{*}\in [c/n.\pi-c/n]$, and (2) by choosing $n$
sufficiently large we can ensure for the same arbitrary
$\epsilon_1>0$ chosen for \eqref{eq:close-to-origin}, that,
in addition to \eqref{eq:close-to-origin}, there holds
\begin{align}
\label{eq:away-from-origin}
  \alpha_{n-k}^{*}>\frac{1-\epsilon_1}{\sqrt{n}}.
\end{align}
Now consider the smallest local extremum $\mu_{n-k}^{*}$ of the
function $P_{n-k}^{\left( 0,2k \right)}\left(
  \cos \alpha \right)$ for $\alpha\in [0,\pi]$ that is larger than $\alpha_{n-k}^{*}$,
i.e., 
\begin{align*}
  \mu_{n-k}^{*}:=\min \left\{\alpha\in \left[ 0,\pi \right]\mid
  \partial_{\alpha}P_{n-k}^{\left( 0,2k \right)}\left( \cos
  \alpha \right)=0\textrm{ and $\alpha\geq \alpha_{n-k}^{*}$}\right\}
\end{align*}
which by \eqref{eq:away-from-origin} is guaranteed to fall
within $[c/n,\pi-c/n]$. For any $n\geq N_0$, by
\eqref{eq:jacobi-asymp-szego}, \eqref{eq:close-to-origin},
and \eqref{eq:away-from-origin}, we have 
\begin{align*}
\label{eq:uniform-upper-jacobi-extreme}
  \left( \frac{1+\cos \mu_{n-k}^{*}}{2} \right)^k \left|
  P_{n-k}^{\left(0,2k\right)}\left( \cos \mu_{n-k}^{*}
  \right) \right| &\leq \sqrt{\frac{2}{\pi n}}\cdot
  \frac{1}{\sqrt{\sin \mu_{n-k}^{*}}}\left(
                                                    1+\frac{C}{n\sin\mu_{n-k}^{*}} \right)\\
   &\leq \sqrt{\frac{2}{\pi n}}\cdot
  \frac{1}{\sqrt{\sin \alpha_{n-k}^{*}}}\left(
                                                    1+\frac{C}{n\sin
     \alpha_{n-k}^{*}} \right)\\
   & \leq \sqrt{\frac{2}{\pi n}} \cdot \frac{1}{\sqrt{\left(
                                                    1-\epsilon_1
                                                    \right)\alpha_{n-k}^{*}}}\left(
     1+\frac{C}{n \left( 1-\epsilon_1 \right)\alpha_{n-k}^{*}} \right)\\
    &<\frac{1}{\left(1-\epsilon_1\right)n^{\frac{1}{4}}}
  \sqrt{\frac{2}{\pi}} \left( 1+\frac{C}{\left( 1-\epsilon_1 \right)^2\sqrt{n}} \right).
\end{align*}
The same inequality holds if we replace $\mu_{n-k}^{*}$ with
any other extremum of the
function $\alpha\mapsto P_{n-k}^{\left( 0,2k \right)}\left(
  \cos \alpha \right)$ in $\alpha\in \left[
  c/n,\pi-c/n \right]$. In particular, this implies that for
all sufficiently large $n$ we have (recalling that $h=1-\cos
\alpha$)
\begin{align*}
  \partial_h\lambda_n^{\left( k \right)}\left( h \right)=\partial_h\lambda_n^{\left( k \right)}\left(
  1-\cos \alpha \right)\leq \frac{1}{2}\left( \frac{1+\cos \alpha}{2} \right)^k \left| P_{n-k}^{\left(
  0,2k \right)}\left( \cos \alpha \right)
  \right|<\frac{1}{4}\quad\textrm{for all $a\in [c/n,\pi-c/n]$.}
\end{align*}
The rest of the proof follows easily from the proof of
\cite[Theorem~4]{hadani2011representation2}: Let $a_0\in
\left( 0,\pi \right)$ be such that
\begin{align*}
  \partial_h\lambda_{k+1}^{\left( k \right)}\left( h
  \right)=\partial_h\lambda_{k+1}^{\left( k \right)}\left(
  1-\cos \alpha
  \right)&=\frac{1}{2}\left( \frac{1+\cos \alpha}{2} \right)^k
           \left[ 1-\left( k+1 \right)\left( 1-\cos \alpha
           \right) \right]>\frac{1}{4}
\end{align*}
for all $\alpha < \alpha_0$ and sufficiently large $n$; the remaining
finitely cases can be verified directly as claimed in
\cite[\S A.2.1, pp.~612]{hadani2011representation2}.
Note that such a value $\alpha_0$ exists because
when $\alpha=0$ (i.e., $h=1$)
\begin{align*}
  \partial_h\lambda_{k+1}^{\left( k
  \right)}\left(1\right)=\frac{1}{2}>\frac{1}{4}.
\end{align*}
As argued in \cite[\S A.2, pp.~611]{hadani2011representation2}, we
set $z_0=\cos \alpha_0$ and $h_1^{\left( k \right)}=h_2^{\left( k
  \right)}=1+z_0$, which ensures
$\partial_h\lambda_n^{\left( k \right)}\left( z \right)\leq
\partial_h\lambda_{k+1}^{\left( k \right)}\left( z \right)$ for all
$z\in \left[ -1, z_0\right]$, and furthermore
$\lambda_n^{\left( k \right)}\left( h \right)\leq
\lambda_{k+1}^{\left( k \right)}\left( h \right)$, for all
$n\geq k+1$. This proves (2) of
Lemma~\ref{lem:left-bound} and completes the entire proof of
Lemma~\ref{lem:left-bound}.
\end{proof}

\begin{lemma}
\label{lem:right-bound}
\begin{enumerate}[(1)]
\item There exists $N_1^{(k)}>0$ such that
  $\lambda^{(k)}_n(h) \leq \lambda_k^{(k)}(h)$ for every $n
  \geq N_1^{(k)}$ and $h \in [h_1^{(k)},2]$.
\item There exists $N_2^{(k)}>0$ such that
  $\lambda_n^{(k)}(h) \leq \lambda_{k}^{(k)}(h)$ for every
  $n \geq N_2^{(k)}$ and $h \in [h_2^{(k)},1/\left( k+1 \right)]$.
\end{enumerate}
\end{lemma}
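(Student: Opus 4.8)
The plan is to show that $\lambda_n^{(k)}(h)\to 0$ \emph{uniformly} in $h\in(0,2]$ as $n\to\infty$, and then to compare this decay against a strictly positive lower bound for the right-hand side on the closed subinterval in question (which is bounded away from $h=0$). Thus the heart of the argument is the uniform estimate: there is a constant $C_k>0$ with
\[
  \sup_{h\in(0,2]}\bigl|\lambda_n^{(k)}(h)\bigr|\le\frac{C_k}{\sqrt n}\qquad\text{for every }n\ge|k|.
\]
To obtain it I would start from the closed form \eqref{eq:lambda-jacobi-poly-form} and substitute $z=\cos\tau$, giving
\[
  \lambda_n^{(k)}(h)=\frac12\int_0^{a}\Bigl(\frac{1+\cos\tau}{2}\Bigr)^{k}P_{n-k}^{(0,2k)}(\cos\tau)\,\sin\tau\,\mathrm{d}\tau,\qquad a=\arccos(1-h)\in(0,\pi],
\]
and split the $\tau$-integral into $[0,c/n]$, $[c/n,\min\{a,\pi-c/n\}]$, and $[\pi-c/n,a]$ (the last empty unless $h$ is close to $2$), with $c$ the constant appearing in \eqref{eq:szego-asymp}.

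On the middle range the Szeg\H{o}-type bound \eqref{eq:jacobi-asymp-szego}, after multiplication by $\sin\tau$, dominates the integrand by $\sqrt{2/(n\pi)}\bigl(\sqrt{\sin\tau}+C/(n\sqrt{\sin\tau})\bigr)$; since $\int_0^{\pi}\sqrt{\sin\tau}\,\mathrm{d}\tau$ and $\int_{c/n}^{\pi-c/n}\mathrm{d}\tau/\sqrt{\sin\tau}$ are both bounded by absolute constants --- crucially, \emph{independently of the upper endpoint $a$} --- this piece contributes $O(n^{-1/2})$. On $[0,c/n]$ I would use $\sin\tau\le\tau$ together with a uniform bound $|P_{n-k}^{(0,2k)}(\cos\tau)|\le C_0$ on this window (a standard Mehler--Heine estimate, valid because the first Jacobi parameter here is $0\ge-1/2$), yielding an $O(n^{-2})$ contribution; on $[\pi-c/n,a]$ the vanishing weight $\bigl((1+\cos\tau)/2\bigr)^{k}=O(n^{-2k})$ absorbs the crude bound $|P_{n-k}^{(0,2k)}(\cos\tau)|\le\binom{n+k}{n-k}=O(n^{2k})$, again leaving $O(n^{-2})$. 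Summing the three pieces gives the claimed uniform decay; in the residual case $a\le c/n$ (i.e. $h=O(n^{-2})$) the whole integral is trivially $O(n^{-2})$.

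The second ingredient is the positivity of the right-hand sides away from $0$. By \eqref{eq:lambda-k-k}, $\lambda_k^{(k)}(h)=\bigl(1-(1-h/2)^{k+1}\bigr)/(k+1)$ is strictly increasing on $(0,2]$, hence $\lambda_k^{(k)}(h)\ge\lambda_k^{(k)}(h_1^{(k)})=:\delta_1^{(k)}>0$ on $[h_1^{(k)},2]$ and $\lambda_k^{(k)}(h)\ge\lambda_k^{(k)}(h_2^{(k)})=:\delta_2^{(k)}>0$ on $[h_2^{(k)},1/(k+1)]$; moreover, by the computation of $\partial_h\lambda_{k+1}^{(k)}$ in the proof of Lemma~\ref{lem:left-bound}, $\lambda_{k+1}^{(k)}$ is strictly increasing on $(0,1/(k+1)]$ and so admits a positive lower bound there as well, which means the argument proves part~(2) whether its right-hand side is read as $\lambda_k^{(k)}$ or $\lambda_{k+1}^{(k)}$. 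It then suffices to take $N_1^{(k)}$ to be any integer exceeding $\max\{|k|,(C_k/\delta_1^{(k)})^2\}$ and $N_2^{(k)}$ one exceeding $\max\{|k|,(C_k/\delta_2^{(k)})^2\}$: for $n$ past these thresholds and $h$ in the relevant interval, $\lambda_n^{(k)}(h)\le|\lambda_n^{(k)}(h)|\le C_k/\sqrt n<\delta_i^{(k)}$, which is at most the right-hand side in each case.

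The step I expect to be the main obstacle is making the two boundary estimates airtight while keeping every constant independent of $a$: near $\tau=0$ one must quote a precise uniform bound for $P_{n-k}^{(0,2k)}$ on the scale $\tau\sim 1/n$, and near $\tau=\pi$ the Jacobi polynomial grows polynomially in $n$, so the estimate hinges on the exact cancellation of the powers of $n$ against the weight $(1+\cos\tau)^k$. Everything else --- the Szeg\H{o} asymptotics \eqref{eq:szego-asymp}--\eqref{eq:jacobi-asymp-szego}, the monotonicity facts, and the final comparison --- is either already recorded in the excerpt or elementary. Together with Lemma~\ref{lem:left-bound}, this leaves only finitely many values of $n$ uncovered on each closed subinterval, and those are verified directly from the explicit polynomial expressions for $\lambda_n^{(k)}(h)$, which completes the proof of Theorem~\ref{thm:specgap}.
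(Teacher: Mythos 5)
Your proposal is correct, but it takes a genuinely different route from the paper. You establish the $O(n^{-1/2})$ uniform decay of $\lambda_n^{(k)}(h)$ by ``hard analysis'': you split the integral in \eqref{eq:lambda-jacobi-poly-form} into three pieces at the scales $\tau\sim c/n$ and $\tau\sim\pi-c/n$ and estimate each with Szeg\H{o}/Mehler--Heine-type bounds for the Jacobi polynomial $P_{n-k}^{(0,2k)}$, carefully tracking how the weight $(1+\cos\tau)^k$ absorbs the polynomial growth near $\tau=\pi$. The paper gets the same $n^{-1/2}$ decay in three lines by a ``soft'' operator-theoretic argument: since the kernel $T_h^{(k)}(x,y)$ has modulus one on $B(y,a)$, the Hilbert--Schmidt norm of $T_h^{(k)}$ is computed exactly as $\|T_h^{(k)}\|_2^2 = h/2$; on the other hand, by Schur's lemma and the multiplicity count $2n+1$ of each eigenspace, $\|T_h^{(k)}\|_2^2 = \sum_{n\geq k}(2n+1)\bigl(\lambda_n^{(k)}(h)\bigr)^2$, so a single term in the sum gives $\lambda_n^{(k)}(h)\leq\sqrt{h}/\sqrt{4n+2}$ with explicit constants and no asymptotics at all. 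The paper's approach is cleaner and makes no use of the Szeg\H{o} bounds at this stage (those are reserved for Lemma~\ref{lem:left-bound}), and it gives an $h$-dependent bound that tightens as $h\to 0$; your approach is more self-contained in the sense that it reuses the same Jacobi-polynomial toolbox that the previous lemma already set up, but it requires delicate uniformity in the endpoint $a$ which you correctly flag as the main obstacle. Both second halves of the argument (the monotonicity and positivity of $\lambda_k^{(k)}$ and $\lambda_{k+1}^{(k)}$ on the closed subintervals, and the choice of $N_1^{(k)},N_2^{(k)}$) coincide with the paper's. You also correctly noticed that part (2) of the statement should be read with $\lambda_{k+1}^{(k)}$ on the right, not $\lambda_k^{(k)}$ --- a typo confirmed by the paper's own proof, which bounds $\sqrt{h}/\sqrt{4n+2}$ against $\lambda_{k+1}^{(k)}(h)$ in that part --- and your argument correctly handles both readings.
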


\begin{proof}[Proof of Lemma~\ref{lem:right-bound}]
  First note, on the one hand, that the Schatten $2$-norm of $T_h^{\left( k
    \right)}$ can be easily computed: By \cite[Theorem~VI.23]{reed1980methods},
  \begin{align*}
    \left\| T_h^{\left( k \right)}\right\|_2^2&=\int_{\SO(3)}\!\int_{\SO(3)}\left|T_h^{\left(k\right)}\left(x,y\right)\right|^2\,\mathrm{d}x\,\mathrm{d}y=\int_{\SO ( 3  )}\!\int_{B \left( y,\alpha \right)}\left|T_h^{\left(k\right)}\left( x,y \right)\right|^2\,\mathrm{d}x\,\mathrm{d}y\\
    &=\int_{\SO (3)}\!\int_{B \left( y,\alpha
      \right)}\,\mathrm{d}x\,\mathrm{d}y=\int_0^\alpha \frac{\sin\vartheta}{2}\,\mathrm{d}\vartheta=\frac{1-\cos\vartheta}{2}=\frac{h}{2}
  \end{align*}
where the last equality follows from $h=1-\cos \alpha$. On the
other hand,
\begin{align*}
  \left\| T_h^{\left( k \right)} \right\|_2^2=\sum_{n=k}^{\infty}\left( 2n+1 \right)\left(\lambda_n^{\left( k \right)}\right)^2
\end{align*}
which gives the same bound as \cite[formula
(A.4)]{hadani2011representation2}:
\begin{align*}
  \lambda_n^{\left( k \right)}\left( h \right)\leq \frac{\sqrt{h}}{\sqrt{4n+2}}.
\end{align*}
Since by \eqref{eq:lambda-k-k} we have
\begin{align*}
  \lambda_k^{\left( k \right)}\left( h
  \right)=\frac{1-\left( 1-h/2\right)^{k+1}}{k+1}\geq
  \frac{1-\left( 1-h_1^{\left( k
  \right)}/2\right)^{k+1}}{k+1}\quad\textrm{for all $h\in
  [ h_1^{\left( k \right)}, 2 ]$},
\end{align*}
it is straightforward to verify by direct computation that
there exists $N_1^{\left( k \right)}>0$ such that
$\sqrt{h}/\sqrt{4n+2}\leq \lambda_k^{\left( k \right)}\left(
  h \right)$ for every $n\geq N_1^{\left( k \right)}$ and
$h\in [ h_1^{\left( k \right)},2]$. This proves (1) of
Lemma~\ref{lem:right-bound}. Furthermore, by
\eqref{eq:lambda-k-k+1}
\begin{align*}
  \lambda_{k+1}^{\left( k \right)}\left( h \right)=-\frac{k}{k+1}\cdot \frac{1-\left( 1-h/2
           \right)^{k+2}}{k+2}+\frac{2k+1}{k+1} \left(
              \frac{h}{2} \right)\left( 1-\frac{h}{2}
              \right)^{k+1}
\end{align*}
a direct computation for the derivative of the left hand
side with respect to $h$ gives
\begin{align*}
  \partial_h\lambda_{k+1}^{\left( k \right)}\left( h \right)=\frac{1}{2}\left[ 1-\left( k+1 \right)h \right]\left( 1-\frac{h}{2} \right)^k
\end{align*}
from which it is easy to directly verify that
$h\mapsto\lambda_{k+1}^{\left( k \right)}\left( h \right)$
achieves its maximum at $h=1/\left( k+1 \right)$ over $h\in
[0,2]$, and $\lambda_{k+1}^{\left( k \right)}\left( h
\right)>0$ for all $h\in [0,1/\left( k+1 \right)]$. It is
then easy to verify by direct computation that there exists
$N_2^{\left( k \right)}$ such that $\sqrt{h}/\sqrt{4n+2}\leq
\lambda_{k+1}^{\left( k \right)}\left( h \right)$ for every
$n\geq N_2^{\left( k \right)}$ and $h\in [h_2^{\left( k
  \right)},1/\left( k+1 \right)]$. This proves (2) of
Lemma~\ref{lem:right-bound}. 
\end{proof}

\begin{proof}[Proof of Theorem~\ref{thm:specgap}]
  Direct computation using \eqref{eq:lambda-k-k} and
  \eqref{eq:lambda-k-k+1} establishes \eqref{eq:specgap}:
  \begin{align*}
       G^{(k)}(h) & = \lambda_k^{(k)}(h) -
                 \lambda_{k+1}^{(k)}(h)= \frac{1-\left( 1-h/2
           \right)^{k+1}}{k+1} + \frac{(2k+1)(1 - (1-h/2)^{k+1})}{k + 1} \\ & \quad - \frac{2(k+1)( 1- (1 - h/2)^{k+2})}{k+2} =  \frac{2 - (1-h/2)^{k+1}\left((k+1) h + 2\right)} {(k + 2)}.
  \end{align*}
Unsurprisingly, the spectral gap depends on the ``frequency
channel'' parameter $k\in\mathbb{N}$. The rest of the proof
follows verbatim the proof of
\cite[Theorem~4]{hadani2011representation2}: By
Lemma~\ref{lem:left-bound} and Lemma~\ref{lem:right-bound}
we have $\lambda_n^{\left( k \right)}\leq \lambda_k^{\left(
    k \right)}\left( k \right)$ for every $n\geq N_1^{\left( k \right)}$ and
$h\in [0,2]$, as well as $\lambda_n^{\left( k \right)}\leq \lambda_{k+1}^{\left(
    k \right)}\left( k \right)$ for every $n\geq N_2^{\left( k \right)}$ and
$h\in [0,1/\left( k+1 \right)]$. We then verify directly
both $\lambda_n^{\left( k \right)}\leq \lambda_k^{\left( k
  \right)}\left( h \right)$ over $h\in[0,2]$ and
$\lambda_n^{\left( k \right)}\leq \lambda_{k+1}^{\left( k
  \right)}\left( h \right)$ over $h\in[0,1/\left( k+2
\right)]$ for the finitely many cases left ($k\leq n\leq
N_1^{\left( k \right)}$ and $k+1\leq n\leq N_2^{\left( k
  \right)}$, respectively). 
\end{proof}

\subsection{Proof of Theorem~\ref{thm:morph}}
\label{sec:proof-theor-refthm-morph}

 Our proof extends the arguments in the proof of
  \cite[Theorem~5]{hadani2011representation2}. A key
  observation is that the top eigenvector $\mathcal{H}\left(
    \lambda_k^{\left( k \right)}\left( h
    \right) \right)$ coincides with the isotypic subspace
  $\mathcal{H}_{k,-k}$ (see
  Section~\ref{sec:gen-para-transp-op}). Consider the
  morphism $\omega:=\sqrt{1/\left( 2k+1
    \right)}\cdot\tau:\mathbb{C}^{2k+1}\rightarrow\mathcal{H}$
  defined as
  \begin{equation*}
    \omega \left( v \right)\left( x \right)=\left( \delta_x^{\left( k \right)} \right)^{*}\left( v \right).
  \end{equation*}
\noindent\underline{\emph{Part 1: $\tau$ is an isomorphism between
  $\mathbb{C}^{2k+1}$ and $\mathcal{H}_{k,-k}$.}} We first show that $\mathrm{Im}\left( \omega \right)\subset
\mathcal{H}_{k,-k}$, namely, for any $x\in \SO(3)$, $v\in\mathbb{C}^{2k+1}$, and $g\in\SO( 2)$ there holds 
\begin{equation}
  \label{eq:equivariance-SO2}
  \left( \delta_{x\vartriangleleft
    g}^{\left( k \right)} \right)^{*}\left( v
\right)=\rho_k \left( g^{-1} \right)\left( \delta_x^{\left( k \right)}
\right)^{*}\left( v \right).
\end{equation}
To this end, note that for any
$z\in\mathbb{C}$ we have
\begin{align*}
    \left\langle \left( \delta_{x\vartriangleleft
    g}^{\left( k \right)} \right)^{*}\left( v
\right), z \right\rangle_{\mathbb{C}}&=\left\langle v, \delta_{x\vartriangleleft
    g}^{\left( k \right)} \left( z \right)
\right\rangle_{\mathbb{C}^3}=\left\langle v,zD_{\cdot,-k}^k
  \left( x\vartriangleleft g \right)
\right\rangle_{\mathbb{C}^3}\\
  &\stackrel{\eqref{eq:wigner-D-SO2}}{=\!=}\left\langle
    v,z\rho_{-k}\left( g^{-1} \right)D_{\cdot,-k}^k\left( x
    \right) \right\rangle_{\mathbb{C}^3}=\left\langle
    v,z\rho_k\left( g \right)D_{\cdot,-k}^k\left( x \right)
    \right\rangle_{\mathbb{C}^3} \\
    & =\left\langle \rho_k\left(
    g^{-1} \right)v,zD_{\cdot,-k}^k\left( x \right)
    \right\rangle_{\mathbb{C}^3} =\left\langle \rho_k\left(
    g^{-1} \right)v, \delta_x^{\left( k \right)}\left( z \right)
    \right\rangle_{\mathbb{C}^3} \\
    & =\left\langle \rho_k \left( g^{-1} \right)\left( \delta_x^{\left( k \right)}
\right)^{*}\left( v \right),z \right\rangle_{\mathbb{C}^3}
\end{align*}
which proves \eqref{eq:equivariance-SO2}. Next, we show that $\omega$ is a morphism of $\SO(3)$ representations, namely, for any $x\in \SO( 3)$, $v\in\mathbb{C}^{2k+1}$, and $g\in\SO(3)$ there holds
\begin{equation}
  \label{eq:equivariance-SO3}
  \left( \delta_x^{\left( k \right)}
\right)^{*}\left( D^k \left( g \right)v
\right)=\left( \delta_{g^{-1}\vartriangleright x}^{\left( k \right)}
\right)^{*}\left( v \right).
\end{equation}
To this end, again for any arbitrary
$z\in\mathbb{C}$
\begin{align*}
  \left\langle \left( \delta_x^{\left( k \right)}
  \right)^{*}\left( D^k \left( g \right)v
\right), z \right\rangle_{\mathbb{C}}&=\left\langle D^k \left( g \right)v, \delta_x^{\left( k \right)} \left( z \right)
\right\rangle_{\mathbb{C}^3}=\left\langle D^k \left( g \right)v,zD_{\cdot,-k}^k
  \left( x \right)
\right\rangle_{\mathbb{C}^3} \\
 & =\left\langle v,zD^k \left( g^{-1} \right)D_{\cdot,-k}^k
  \left( x \right)\right\rangle_{\mathbb{C}^3} \stackrel{\eqref{eq:wigner-D-mult}}{=\!=}\left\langle v,zD_{\cdot,-k}^k
  \left( g^{-1}\vartriangleright  x
    \right)\right\rangle_{\mathbb{C}^3} \\
    & =\left\langle
    v,\delta^{\left( k \right)}_{g^{-1}\vartriangleright x}
    \left( z \right)
    \right\rangle_{\mathbb{C}^3}=\left\langle \left(
    \delta_{g^{-1}\vartriangleright x}^{\left( k \right)}
  \right)^{*}\left( v
\right), z \right\rangle_{\mathbb{C}}
\end{align*}
which proves \eqref{eq:equivariance-SO3}. It now follows
immediately that the morphism $\omega$ maps 
$\mathbb{C}^{2k+1}$ isomorphically, as a unitary
representation of $\SO ( 3  )$, onto
$\mathcal{H}_{k,-k}$, the unique isotypical component in
$\mathcal{H}_{-k}$ (by \eqref{eq:equivariance-SO2}) of
unitary irreducible $\SO ( 3  )$-representation of
dimension $2k+1$. This in turn implies that $\omega$ (and
thus $\tau$) is an isomorphism between Hermitian vector
spaces. It remains to determine the suitable normalization
constant; we show that $\mathrm{Tr}\left( \tau^{*}\circ \tau
\right)=2k+1$. Indeed,
\begin{align*}
  \mathrm{Tr}\left( \tau^{*}\circ \tau
  \right)& \!=\!\left( 2k+1 \right)\mathrm{Tr}\left(
  \omega^{*}\circ\omega
  \right)\!=\!\left( 2k+1 \right)\int_{\mathbb{C}S^{2k}}\left\langle
  \omega^{*}\circ \omega \left( v \right), v \right\rangle_{\mathcal{H}}\,\mathrm{d}v\!=\!\left( 2k+1 \right)\int_{\mathbb{C}S^{2k}}\left\langle
  \omega \left( v \right), \omega\left(v\right)
           \right\rangle_{\mathcal{H}}\,\mathrm{d}v\\
   &=\left( 2k+1 \right)\int_{\mathbb{C}S^{2k}}\int_{\SO( 3)}\left\langle \left( \delta_x^{\left( k
     \right)} \right)^{*} \left( v \right), \left( \delta_x^{\left( k
     \right)} \right)^{*} \left( v
     \right)\right\rangle_{\mathbb{C}}\,\mathrm{d}x\,\mathrm{d}v\\
   & =\left( 2k+1 \right)\int_{\mathbb{C}S^{2k}}\int_{\SO(
     3)}\left\langle \left( D_{\cdot,-k}^k \left( x
     \right) \right)^{*}v, \left( D_{\cdot,-k}^k \left( x
     \right)
     \right)^{*}v\right\rangle_{\mathbb{C}}\,\mathrm{d}x\,\mathrm{d}v\\
    &=\left( 2k+1 \right) \int_{\mathbb{C}S^{2k}}\int_{( 3 )}1\,\mathrm{d}x\,\mathrm{d}v=2k+1
\end{align*}
where $\mathbb{C}S^{2k}$ is the $\left(4k+1\right)$-dimensional unit sphere in
$\mathbb{C}^{2k+1}$, and $\mathrm{d}v$ is the unique
normalized Haar measure on $\mathbb{C}S^{2k}$.

\noindent\underline{\emph{Part 2: Proof of
    \eqref{eq:composition-identity}.}} By
\eqref{eq:equivariance-SO2} we have $\left(\mathrm{ev}_x\mid
  \mathbb{W}^{\left( k \right)}\right)\circ
\omega=\left(\delta_x^{\left( k \right)}\right)^{*}$, which
is equivalent to $\left(\varphi_x^{\left( k
    \right)}\right)^{*}\circ\tau=\left(\delta_x^{\left( k
    \right)}\right)^{*}$. The conclusion now follows from
the straightforward computation as in the proof of
\cite[Theorem~5]{hadani2011representation2}:
\begin{equation*}
  \left(\varphi_x^{\left( k
    \right)}\right)^{*}\circ\tau=\left(\delta_x^{\left( k
    \right)}\right)^{*}\Rightarrow \left(\varphi_x^{\left( k
    \right)}\right)^{*}\circ \left( \tau\circ \tau^{*} \right)=\left(\delta_x^{\left( k
    \right)}\right)^{*}\circ \tau^{*}\Rightarrow \left(\varphi_x^{\left( k
    \right)}\right)^{*}=\left(\delta_x^{\left( k
    \right)}\right)^{*}\circ \tau^{*}\Rightarrow \tau \circ \delta^{\left( k \right)}_x = \varphi^{\left( k \right)}_x.
\end{equation*}
This completes the entire proof. \qed

\subsection{Proof of Theorem~\ref{thm:inner-product}}
\label{sec:proof-theor-refthm-inner-product}

By Theorem~\ref{thm:morph}, $\tau$ is a morphism between
Hermitian inner product spaces $\mathbb{C}^{2k+1}$ and
$\mathbb{W}^{\left( k \right)}$ and
\eqref{eq:composition-identity} holds, thus by the same
argument in the last step of the proof of \cite[Theorem~6]{hadani2011representation2} it suffices to prove
that for any unit-norm complex numbers $v, u\in
\mathbb{C}$ there holds
\begin{equation}
  \label{eq:inner-product-isomophism}
  \left| \left\langle \delta_x^{\left( k \right)}\left( u
      \right), \delta_y^{\left( k \right)}\left( v \right)
    \right\rangle_{\mathbb{C}^{2k+1}} \right| = \left(
    \frac{\left\langle \pi(x), \pi(y)
      \right\rangle + 1}{2} \right)^k.
\end{equation}
This boils down to the following straightforward computation:
\begin{align}
  \left| \left\langle \delta_x^{\left( k \right)}\left( u
      \right), \delta_y^{\left( k \right)}\left( v \right)
    \right\rangle_{\mathbb{C}^{2k+1}} \right|&=\left|
  u\left(D_{\cdot,-k}^k \left( x \right)\right)^{\top}\left(
  D_{\cdot,-k}^k \left( y \right)
  \right)^{*}\bar{v}\right|=\left| D_{-k,-k}^k \left(
  x^{-1}y \right) \right|\notag\\
  &\stackrel{\eqref{eq:wigner-D-d}}{=\!=}\left|
    d^k_{-k,-k}\left( \vartheta\left(x^{-1}y\right) \right)
    \right|\stackrel{\eqref{eq:wigner-d-top-left}}{=\!=}\left(
    \frac{1+\cos\vartheta \left( x^{-1}y \right)}{2}\right)^k\label{eq:final-intermediate}
\end{align}
where $\vartheta \left( x^{-1}y \right)$ is the Euler angle
$\vartheta$ of $x^{-1}y\in\SO ( 3  )$. Recall
from \eqref{eq:euler-angle} that $\cos \vartheta \left(
  x^{-1}y \right)$ is exactly the $\left( 3,3 \right)$-entry
of the $3$-by-$3$ matrix form of $x^{-1}y\in\SO( 3)$, which is exactly identical to the inner product
of the third columns of the matrix forms of $x$ and $y$,
i.e.,
\begin{equation*}
  \cos\vartheta \left( x^{-1}y \right)=\left\langle \pi
    \left( x \right),\pi \left( y \right) \right\rangle.
\end{equation*}
Plugging this back into the rightmost term of \eqref{eq:final-intermediate} completes the entire proof. \qed

%
%

\bibliographystyle{spmpsci}      
\bibliography{ref}   

\end{document}